\newcommand{\subt}[1]{%
	\mathsmaller{\text{#1}}%
}
\newcommand{\loss}[1][]{%
	\ifthenelse{\isempty{#1}}%
	{\mathcal{O}}%
	{\mathcal{O}_\subt{#1}}%
}
\newcommand{\objf}[1][]{%
	\ifthenelse{\isempty{#1}}%
	{\widetilde{\mathcal{O}}}%
	{\widetilde{\mathcal{O}}_\subt{#1}}%
}
\newcommand{\Rnum}[1]{{\uppercase\expandafter{\romannumeral #1\relax}}}
\newcommand{\subRnum}[1]{\subt{\uppercase\expandafter{\romannumeral #1\relax}}}
\newcommand{\mat}[1]{\bm{\mathrm{#1}}}
\newcommand{\tmat}[1]{%
	\ifthenelse{\equal{#1}{W}}{\bm{{\widetilde{\mathrm{#1}}}}}%
	{\bm{{\tilde{\mathrm{#1}}}}}%
}
\newcommand{\hmat}[1]{\bm{{\hat{\mathrm{#1}}}}}
\newcommand{\tr}{^\intercal}
\newcommand{\Rm}[1]{\mathbb{R}^{#1}}
\newcommand{\argmin}{\operatorname*{arg\,min}}
\newcommand{\evalat}[1]{%
  \raisebox{-2pt}{${\mathlarger{\mathlarger{\mathlarger{\mathlarger{\mid}}}}}_{#1}$}%
}
\newcommand{\subgrad}[1]{%
  \raisebox{-2pt}{$\mathlarger{\mathlarger{\mathlarger{\boldsymbol\nabla}}}\!\!_{#1}$}%
}
\newcommand{\EV}{\@ifstar{\@EVinline}{\@EVdisplay}}
\newcommand{\@EVdisplay}[2][]{%
	\ifthenelse{\isempty{#1}}{\mathbb{E} \left[ #2 \right]}%
	{\mathbb{E}_{#1} \left[ #2 \right]}%
}
\newcommand{\@EVinline}[2][]{%
	\ifthenelse{\isempty{#1}}{\mathbb{E} [#2]}%
	{\mathbb{E}_{#1}[ #2]}%
}
\newcommand{\almostsure}{\xrightarrow{\text{a.s.}}}
\newcommand{\almostsuren}{\xrightarrow[n \to \infty]{\text{a.s.}}}
\newcommand{\randX}{\boldsymbol{\mathcal{X}}}
\newcommand{\randY}{\mathcal{Y}}
\newcommand{\randE}{\mathcal{U}}
\newcommand{\randU}{\mathcal{U}}
\newcommand{\lambdap}{{\tilde\lambda}}
\newcommand{\lambdaadap}{{\lambda}}
\newcommand{\alphap}{{\tilde\alpha}}
\newcommand{\alphaadap}{{\alpha}}
\newcommand{\trueparam}[1][]{%
	\ifthenelse{\isempty{#1}}%
	{\mat\theta^{\subt{0}}}%
	{\mat\theta^{\subt{0}}_{#1}}%
}
\newcommand{\btrue}[1][]{%
	\ifthenelse{\isempty{#1}}%
	{\mat\beta^{\subt{0}}}%
	{\mat\beta^{\subt{0}}_{#1}}%
}
\newcommand{\mutrue}[1][]{%
	\ifthenelse{\isempty{#1}}%
	{\mu^{\subt{0}}}%
	{\mu^{\subt{0}}_{#1}}%
}
\newcommand{\pense}[1][]{%
	\ifthenelse{\isempty{#1}}%
	{\tmat\theta}%
	{\tmat\theta_{#1}}%
}
\newcommand{\adapense}[1][]{%
	\ifthenelse{\isempty{#1}}%
	{\hmat\theta}%
	{\hmat\theta_{#1}}%
}
\newcommand{\bpense}[1][]{%
	\ifthenelse{\isempty{#1}}%
	{\tmat\beta}%
	{\tmat\beta_{#1}}%
}
\newcommand{\bpenseel}{\tilde\beta}
\newcommand{\mupense}[1][]{\tilde\mu}
\newcommand{\badapense}[1][]{%
	\ifthenelse{\isempty{#1}}%
	{\hmat\beta}%
	{\hmat\beta_{#1}}%
}
\newcommand{\muadapense}[1][]{\hat\mu}
\newcommand{\badapenseel}{\hat\beta}
\newcommand{\mhscale}[1]{{\hat{\sigma}_\subt{M} ( #1 )}}
\newcommand{\mhscalesq}[1]{{\hat{\sigma}^2_\subt{M} ( #1 )}}
\newcommand{\mscale}[1]{{\sigma_\subt{M} ( #1 )}}
\newcommand{\mscalesq}[1]{{\sigma^2_\subt{M} ( #1 )}}
\newcommand{\omhscalesqb}{\mhscalesq{\btrue}}
\newcommand{\omscaleb}{\mscale{\btrue}}
\newcommand{\omscalebsq}{\mscalesq{\btrue}}
\newcommand{\mhscaleast}{\mhscale{\mat \beta_n^*}}
\newcommand{\mhscaleada}{\mhscale{\badapense}}
\newcommand{\mhscaleadasq}{\mhscalesq{\badapense}}
\newtheorem{theorem}{Theorem}
\newtheorem{proposition}{Proposition}
\newtheorem{lemma}{Lemma}
\newcommand*{\storecounter}[2]{%
  \edef\@currentlabel{\the\value{#1}}
  \label{#2}
}
\begin{document}

\title{Robust Variable Selection and Estimation Via Adaptive Elastic Net S-Estimators for Linear Regression}

\author{David Kepplinger}
\affil{Department of Statistics, School of Computing, George Mason University}
\date{July 2021}

\renewcommand\Affilfont{\itshape\small}

\maketitle

\abstract{
Heavy-tailed error distributions and predictors with anomalous values are ubiquitous in high-dimensional regression problems and can seriously jeopardize the validity of statistical analyses if not properly addressed.
For more reliable estimation under these adverse conditions, we propose a new robust regularized estimator for simultaneous variable selection and coefficient estimation.
This estimator, called adaptive PENSE, possesses the oracle property without prior knowledge of the scale of the residuals and without any moment conditions on the error distribution.
The proposed estimator gives reliable results even under very heavy-tailed error distributions and aberrant contamination in the predictors or residuals.
Importantly, even in these challenging settings variable selection by adaptive PENSE remains stable.
Numerical studies on simulated and real data sets highlight superior finite-sample performance in a vast range of settings compared to other robust regularized estimators in the case of contaminated samples and competitiveness compared to classical regularized estimators in clean samples.
}

\section{Introduction}
Simplicity of the linear regression model ensures its continued importance in many scientific and industrial applications, especially if a small sample size prohibits use of more complex models.
This paper considers prediction and variable selection in the linear regression model
\begin{equation}\label{def:linear-regression-model}
\randY = \mutrue + \randX \tr \btrue + \randE
\end{equation}
with $p$-dimensional random predictors $\randX$ independent of the random error term $\randE$ and fixed parameters $\mutrue \in \mathbb{R}$, $\btrue \in \Rm{p}$.
Based on a sample of $n$ independent realizations of $\randY$ and $\randX$, collected as pairs $(y_i, \mat x_i)$, $i = 1, \dotsc, n$, the statistical goal is to estimate the intercept $\mutrue \in \mathbb{R}$ and slope coefficients $\btrue \in \Rm{p}$.
Emphasis is on high prediction accuracy and identification of relevant predictors, i.e., those which have non-zero entries in $\btrue$.

With the ever growing abundance of data, often combined from various sources, it is increasingly challenging to make assumptions on the distribution of the error term $\randE$ or assume that the data, particularly the predictors, are free of anomalous values or gross outliers.
In proteomics or genomics studies, for instance, undetected equipment failure, problems with sample preparation, or patients with rare phenotypic profiles, are just a few sources of contamination that can severely damage the effectiveness of most statistical methods commonly used to estimate the parameters in the linear regression model.
If potential contamination or heavy tailed errors are not properly addressed, they can jeopardize the validity of statistical analyses and render the results unreliable.
The main goal of this work is therefore to develop a method for reliable identification of the relevant predictors and estimation of the corresponding non-zero regression coefficients in the linear model~\eqref{def:linear-regression-model} for heavy-tailed errors and under the potential presence of contaminated observations in the sample.

Many methods for identifying the set of relevant predictors in the linear regression model have been proposed under the assumption of a (sub-)Gaussian error distribution and well-behaved predictors.
Most of these methods simultaneously identify relevant predictors and estimate their coefficients.
The dual formulation leads to the common regularized regression objective
\begin{equation}\label{eqn:regularized-loss}
\argmin_{\mu \in \mathbb R,\mat\beta \in \Rm{p}} \loss(\mat y, \mu + \mat X \mat\beta) + \lambda \Phi(\mat\beta),
\end{equation}
where $\loss(\mat y, \mu + \mat X \mat\beta)$ is a regression loss (e.g., the sum of squared residuals) and $\Phi(\mat\beta)$ is a penalty function (e.g., the $L_1$ norm).
The LASSO \parencite{Tibshirani1996} and the elastic net estimator \parencite{Zou2005} are prominent examples of regularized regression estimators based on the least-squares (LS) loss.
While the regularized LS-loss is extensively studied and well understood under numerous settings and penalty functions, less has been done to enable variable selection and efficient coefficient estimation for heavy-tailed error distributions and under the potential presence of contamination in the predictors and the response.

Several regularized estimators which promise more resilience towards adverse contamination have been proposed over the years.
Most proposals replace the convex LS-loss function with a robust alternative, i.e., a loss function which is less affected by contamination and outliers.
A recently very active stream of research \parencite{Wang2007b,Fan2014b,LambertLacroix2011,LambertLacroix2016,Zheng2017,Sun2019,Fan2018,Loh2018f,Pan2020f} promotes convex loss functions which increase slower than the LS-loss for larger residuals.
These so-called ``unbounded M-loss'' functions (e.g., the sum of absolute deviations or the Huber loss), are designed to shield against heavy-tailed error distributions in high-dimensional settings.
While the convexity of unbounded M-loss functions enables derivations of strong theoretical guarantees, these estimators are still exposed to the potentially devastating effects of contamination in the numerous predictors.
Commonly suggested remedies, e.g., down-weighting observations with ``unusual'' predictor values or univariate winsorizing \parencite{Loh2017,Sun2019}, are ill-suited for high dimensional problems.
In the sparse estimation regime, for example, down-weighting observations due to outlying values in irrelevant predictors (which are unknown in advance) may sacrifice precious information.

Regularized M-estimators using a bounded, and hence non-convex, M-loss function yield the desired protection against contaminated predictors.
Asymptotic properties of these regularized bounded M-estimators have been recently studied.
\textcite{Smucler2017}, for instance, propose the MM-LASSO estimator; a regularized M-estimator relying on an auxiliary M-estimate of the residual scale (hence ``MM'').
The authors derive the oracle property for their MM-LASSO estimator under fixed dimensionality but otherwise very general conditions.
\textcite{Loh2017}, on the other hand, proves oracle bounds for the estimation error for M-estimators regularized by folded-concave penalties (e.g., SCAD), by restricting the problem~\eqref{eqn:regularized-loss} to a neighborhood around the origin which must contain the true parameters. 
For these oracle bounds to hold, however, the loss function, given the sample, must satisfy strict conditions, including restricted strong convexity in a neighborhood of the true parameters.
The main challenge for using regularized M-estimators in practice, however, is the requirement for a robust estimate of the scale of the residuals.
Obtaining such a robust estimate with sufficiently small finite-sample bias is a difficult task in its own right; almost insurmountable in high dimensional problems with contamination and heavy-tailed errors.

Recently renewed attention has been given to the mean-shift outlier model (MSOM), introducing an additive nuisance parameter for each observation to quantify its outlyingness.
To identify these outliers, \textcite{She2021} add a constraint on the number of non-zero nuisance parameters, i.e., on the number of outlying observations.
With this formulation, they develop a theoretical foundation for a broad class of robust estimators defined algorithmically, e.g., the popular SparseLTS estimator \parencite{Alfons2013}.
Under a very general framework allowing for different loss and penalty functions, \textcite{She2021} establish minimax bounds for the estimation error under the MSOM and propose an efficient algorithm for estimation.
In a similar sprit, \textcite{Insolia2020} propose a mixed integer program (MIP) to constrain both the number of outliers and the number of relevant predictors using the $L_0$-pseudo-norm.
The authors develop guarantees for the algorithmic complexity and the statistical estimation error for this MIP under the MSOM, even for ultra-high dimensional problems.
For example, with normally distributed errors and when the true number of relevant predictors as well as the true number of outliers in the response are known, the MIP possess the ``robustly strong oracle property'', meaning their method possess the oracle property under the MSOM.
The MSOM framework and the proposed methods building upon the $L_0$-pseudo-norm provide a promising avenue, particularly in regimes with high signal strength.
For lower signal strengths, however, the $L_0$-pseudo-norm for either variable selection or outlier detection tends to suffer from high variability.
For variable selection and prediction, for example, other penalties may deliver better performance \parencite{Hastie2017,Insolia2020}.
In addition, for computational tractability and stability, it is necessary to restrict the optimization to a tight neighborhood around the true regression parameters, as well as to have a good understanding of the number of relevant predictors and the number of outliers.
\textcite{Insolia2020}, for example, suggest to use an ensemble of other robust regression estimates to get preliminary estimates of the necessary bounds.
In this paper we are proposing a method which does not require any prior knowledge of the number of truly relevant predictors and only a rough upper bound on the number of outliers.
Therefore, our method is a good candidate to initialize the methods proposed for the MSOM.

This paper introduces a method which achieves good asymptotic properties as well as strong empirical performance without requiring any prior knowledge about the true parameters, the scale of the residuals, or the number of relevant predictors.
We tackle this problem by building upon the ``S-loss'' function, a loss function based on a robust measure of the scale (hence S-loss) of the fitted residuals, circumventing the need for an auxiliary estimate of the residual scale.
In the unpenalized case, the S-estimator is highly robust towards heavy-tailed errors and arbitrary contamination in the predictors \parencite{Rousseeuw1984}.
So far, only a handful of regularized S-estimators have been proposed \parencite{Maronna2011, Gijbels2015, CohenFreue2019} and the theoretical guarantees are not yet well established.
While the S-loss combined with the $L_2$ penalty (S-Ridge) \parencite{Maronna2011} does not lead to a sparse estimator $\hmat\beta$, \textcite{Smucler2017} show that it is root-n consistent for random predictors and only weak conditions on the error term if the dimension is fixed.
The authors leverage this root-n consistency of the S-Ridge and use the robust M-scale of the fitted residuals in the formulation of their (adaptive) MM-LASSO estimator.
These results are encouraging as they address the issue of estimating the residual scale for redescending M-estimators in high dimensional settings, but the finite-sample bias of the M-scale of the residuals often undercuts the good theoretical properties in practice.
Under a fixed design, \textcite{CohenFreue2019} show that the S-loss combined with a sparsity-inducing elastic net penalty (called PENSE) leads to a consistent estimator for the true regression parameters even for heavy-tailed error distributions.
All of these results for S-estimators are obtained without reliance on a residual scale estimate.

The first main contribution of this paper is the introduction of adaptive PENSE, a regularized S-estimator combining the S-loss with an adaptive elastic net penalty.
The results presented here show that the adaptive PENSE estimator possesses the oracle property under similar conditions as in \textcite{Smucler2017}.
For deriving these result we also extend the theory pertaining to PENSE \parencite{CohenFreue2019}.
Leveraging a PENSE estimate, the adaptive penalty used in this work reduces the bias of coefficients of relevant predictors and screens out many irrelevant predictors.
While asymptotic results require fixed dimensionality of the predictor matrix, the only other condition on the predictors are finite second moments.
Importantly, no moment conditions on the error distribution are required and the estimator is completely free from tuning to an unknown error distribution.
Therefore, the results apply equally to light- and heavy-tailed error distributions, including the Cauchy distribution and other symmetric stable distributions.

The second main contribution of this paper is to describe scalable and reliable algorithms to compute adaptive PENSE estimates, even for high dimensional data sets.
Computation of adaptive PENSE estimates is challenging due to the highly non-convex objective function and several hyper-parameters.
Non-convexity necessitates strategies for selecting suitable starting-points for numerical algorithms to locate minima of the objective function.
Building upon the work in \parencite{CohenFreue2019}, novel computational strategies are proposed to increase exploration of the parameter space while retaining computational feasibility.
To ensure a large range of problems are amenable to adaptive PENSE the optimized algorithms are made available in an easy-to-use R package.

The paper is organized as follows.
The adaptive PENSE method is described in detail in Section~\ref{sec:method}.
Relevant for practical applications, Section~\ref{sec:computing} outlines the algorithms for computing adaptive PENSE estimates and provides a resilient strategy for choosing hyper-parameters.
Section~\ref{sec:theory} presents the main theoretical results pertaining to the robustness and oracle properties of the adaptive PENSE estimator, along with a discussion of the imposed assumptions.
Section~\ref{sec:numerical-studies} outlines the strong empirical performance of adaptive PENSE in simulation studies and real-world applications.
Supporting lemmas and proofs of the theorems, as well as additional simulation results, are provided in the Supplementary Materials.

\subsection{Notation}
To simplify the following exposition some notation is fixed throughout.
The concatenated parameter vector of intercept and slope coefficients in the linear regression model~\eqref{def:linear-regression-model} is denoted by $\mat\theta = (\mu, {\mat\beta}\tr)\tr$.
The non-zero elements of a slope parameter $\mat\beta$ are referenced as $\mat\beta_\subRnum{1}$, while the zero elements are written as $\mat\beta_\subRnum{2}$.
Adaptive PENSE estimates are always denoted by a circumflex, $\hmat\theta$, and PENSE estimates are marked by a tilde, $\tmat\theta$.
The subscript $i \in \{1, \dotsc, n \}$ is exclusively used to denote the $i$-th observation from the sample, while $j \in \{1, \dotsc, p \}$ is reserved for indexing predictors.
Without loss of generality, it is assumed that the true slope parameter equals the concatenated vector $\btrue = {({\btrue_\subRnum{1}}\tr, {\btrue_\subRnum{2}}\tr)}\tr$ where the first $s$ elements, $\btrue_\subRnum{1}$, are non-zero and the trailing $p - s$ elements are zero, i.e., $\btrue_\subRnum{2} = \mat 0_\subt{p-s}$.

\section{Adaptive PENSE}\label{sec:method}

We propose estimating the sparse regression parameter $\trueparam$ in the linear regression model~\eqref{def:linear-regression-model} by penalizing the robust S-loss with an adaptive elastic net penalty.
In the presence of gross errors in the response variable (outliers) and unusual values in the predictors (leverage points), the highly robust S-loss is an appropriate surrogate for the LS-loss.
The S-loss is given by
\begin{equation}\label{eqn:s-loss}
\mathcal{O}_\subt{S} (\mat y, \hmat y) = \mhscalesq{\mat y - \hmat y} = \inf \left\{ s^2 \colon \frac{1}{n} \sum_{i = 1}^n \rho \left(\frac{y_i - \hat y_i} {|s|} \right) \leq \delta \right\},
\end{equation}
where $\rho$ is a bounded and hence non-convex function and $\delta \in (0, 0.5]$ is a fixed parameter governing robustness properties as will be shown later.

Instead of the classical variance of the fitted residuals, the S-loss minimizes the square of the robust M-scale of the fitted residuals, $\mhscale{\mat y - \hmat y}$.
If the number of exactly fitted observations $\# \{ i\colon y_i = \hat y_i\} < n (1 - \delta)$, the M-scale estimate is greater than 0 and is given implicitly by the equation
$$
\frac{1}{n} \sum_{i = 1}^n \rho \left(\frac{y_i - \hat y_i} { \mhscale{\mat y - \hmat y} } \right) = \delta.
$$
To ease notation, we define the M-scale of the residuals of an estimate $\hmat\theta$ by $\mhscale{\hmat\theta} = \mhscale{\mat y - \hat \mu - \mat X \hmat \beta}$.
The robustness of the M-scale depends on two components: (i)~the choice of the $\rho$ function and (ii)~the fixed quantity $\delta$.

The $\rho$ function in the definition of the M-scale in~\eqref{eqn:s-loss} measures the ``size'' of the standardized residuals $y_i - \hat y_i$.
The classical sample variance, up to a scaling by $\delta$, can be obtained by setting $\rho(t) = t^2$.
To get a robust estimate of scale, the $\rho$ function must be bounded \parencite{Yohai1987}, i.e., all standardized residuals larger than a certain threshold are all assigned the same ``size''.
Since the objective is to get a robust scale estimate, from here on we always assume that the $\rho$ function satisfies the condition
\begin{enumerate}[label={[A\arabic*]}, series=assumptions]
\item $\rho\colon \mathbb R \to [0, 1]$ is an even and twice continuously differentiable function with $\rho(0) = 0$, that is bounded, $\rho(t) = 1$ for all $|t| \geq c > 0$, and non-decreasing in $|t|$. \label{ass:rho-function}
\end{enumerate}
The boundedness implies that the derivative of the $\rho$ function is 0 for $|t| \geq c$.
Therefore, residuals greater than $c$ have no effect on the minimization of~\eqref{eqn:s-loss}.
As becomes evident in Theorem~\eqref{thm:asymptotic-properties} for adaptive PENSE and as shown in \textcite{Davies1990} for the unregularized S-estimator, the choice of the $\rho$ function directly affects the variance of the estimator.
\textcite{Hossjer1992} derives an ``optimal'' $\rho$ function for the unregularized S-estimator, in the sense that it minimizes the asymptotic variance for Normal errors.
However, the author also shows that the gain in efficiency is minor for Normal errors when compared to the simpler Tukey's bisquare $\rho$ function given by
\begin{equation}
\label{def:tukey-bisquare}
\rho(t; c) = \begin{cases}
	1 - \left(1 - \left( \frac{t}{c} \right)^2 \right)^3 & |t| \leq c \\
	1 & |t| > c
\end{cases}.
\end{equation}
It should be noted that the cutoff $c$ for Tukey's bisquare function does not affect the resulting S-estimator or the variance of the M-scale estimator; it is merely a multiplicative factor for the scale estimate and does not change the estimate of the regression parameters (this is true for any $\rho$ function with cutoff $c$ satisfying $\rho(t; c) = \rho(t/c; 1)$).
We are therefore fixing $c=1$ for the reminder of this paper when referring to the S-loss.
If an M-scale estimate of the scale of the residuals is desired, however, we use a cutoff $c$ which leads to a consistent estimate under Normal errors.
This cutoff will depend on $\delta$.

The second component that determines the robustness of an S-estimator is the constant $\delta$ which must be in $(0, 0.5]$ for $\rho$ functions of the form~\ref{ass:rho-function}.
The M-scale estimate can tolerate up to $\lfloor n \min(\delta, 1 - \delta) \rfloor$ gross outliers without exploding to infinity or imploding to 0 \parencite{Maronna2019}.
Theorem~\ref{thm:fbp} shows that adaptive PENSE can also tolerate up to $\lfloor n \min(\delta, 1 - \delta) \rfloor$ adversely contaminated observations without giving aberrant results.
For robustness considerations, an optimal choice is therefore $\delta=0.5$, which would allow the estimator to tolerate gross outliers in the residuals of almost 50\% of observations in the sample.
On the other hand, the variance of the estimator increases with $\delta$ and adaptive PENSE with $\delta=0.5$ achieves only \textasciitilde30\% efficiency under the Normal model while for $\delta=0.25$ the efficiency is close to 80\%.
This highlights that a good sense of the expected proportion of contaminated observations is important to get as much efficiency as possible.

The unregularized S-estimator cannot be computed if $p > n (1 - \delta) - 1$ and it cannot recover the  set of relevant predictors.
In \textcite{CohenFreue2019}, the S-loss is combined with the elastic net penalty, a generalization of the LASSO and Ridge penalties.
The elastic net penalty, $\Phi_\subt{EN}$, is a convex combination of the $L_1$ and the squared $L_2$ norm given by
\begin{equation}\label{def:en-penalty}
\Phi_\subt{EN}(\mat \beta; \lambdap, \alphap) = \lambdap \sum_{j = 1}^p \frac{1 - \alphap}{2} \beta_j^2 + \alphap \left| \beta_j \right|.
\end{equation}
The hyper-parameter $\alphap \in [0, 1]$ controls the balance between the $L_1$ and the $L_2$ penalty and $\lambdap$ controls the strength of the penalization.
The Ridge penalty is recovered when setting $\alphap= 0$, although it does not lead to variable selection.
For $\alphap = 1$, the EN penalty coincides with the LASSO, but if $\alphap < 1$, the elastic net results in a more stable variable selection than the LASSO penalty when predictors are correlated \parencite{Zou2005}.

The elastic net penalty, like the LASSO, introduces non-negligible bias and thus cannot lead to a variable selection consistent estimator.
We are therefore proposing to combine the robust S-loss with the following adaptive elastic net penalty, a slight variation of the penalty introduced by \textcite{Zou2009}:
\begin{equation}\label{def:adaptive-en-penalty}
\Phi_\subt{AE}(\mat \beta; \lambdaadap, \alphaadap, \zeta, \tmat \beta) =
	\lambdaadap \sum_{j = 1}^p \left| \tilde\beta_j \right|^{-\zeta} \left(
	   \frac{1 - \alphaadap}{2} \beta_j^2 + \alphaadap \left| \beta_j \right| \right),
\quad\quad \zeta \geq 1.
\end{equation}
The adaptive EN combines the advantages of the adaptive LASSO penalty \parencite{Zou2006} and the elastic net penalty \parencite{Zou2009}.
Contrary to the original definition in \textcite{Zou2009}, \eqref{def:adaptive-en-penalty} applies the penalty loadings $ | \tilde\beta_j |^{-\zeta}$ to both the $L_1$ and $L_2$ penalties.
The adaptive EN leverages information from a preliminary regression estimate, $\tmat \beta$, to penalize predictors with initially ``small'' coefficient values more heavily than predictors with initially ``large'' coefficients.
This has two major advantages over the non-adaptive EN penalty: (i)~the bias for large coefficients is reduced and (ii)~variable selection is improved by reducing the number of false positives.
Compared to the adaptive LASSO, the adaptive EN furthermore improves the stability of the estimator in the presence of multicollinearity \parencite{Zou2009}.

Adaptive PENSE is a two-step procedure leveraging a PENSE estimate with $\alphap = 0$, i.e., using a Ridge penalty.
In the first step, a PENSE-Ridge estimate is computed as
\begin{equation} \label{def:pense}
\tmat\theta  = \argmin_{\mu, \mat\beta}
\mathcal{O}_\subt{S}\left( \mat y, \mu + \mat X \mat\beta \right) + \Phi_\subt{EN}(\mat \beta; \lambdap, 0).
\end{equation}

In the second step, the PENSE-Ridge estimate is used as the preliminary estimate and adaptive PENSE is computed as
\begin{equation} \label{def:adaptive-pense}
\hmat\theta  = \argmin_{\mu, \mat\beta}
\mathcal{O}_\subt{S} \left( \mat y, \mu + \mat X \mat\beta \right) + \Phi_\subt{AE}(\mat \beta; \lambdaadap, \alphaadap, \zeta, \tmat \beta).
\end{equation}

Fixing the preliminary estimate to a PENSE-Ridge has two important advantages: (i)~computation is fast because the Ridge penalty is smooth (hence amenable to more efficient algorithms) and because we do not need to choose from several $\alphap$ values, and (ii)~no predictors are discarded prematurely.
While discarding some predictors in the preliminary stage may be computationally beneficial for very-high dimensional problems, empirical studies suggest variable selection performance of adaptive PENSE is better in most scenarios if the preliminary stage does not perform variable selection.

Even with the first-stage penalty fixed at $\alphap = 0$, computing adaptive PENSE estimates involves choosing several hyper-parameters: (i)~$\alphaadap$, the balance of $L_1$/$L_2$ regularization for adaptive PENSE, (ii)~$\lambdap$, the level of regularization for PENSE, (iii)~$\lambdaadap$, the level of penalization for adaptive PENSE, and (iv)~$\zeta$, the exponent in the predictor-specific regularization.
Interpreting the exponent $\zeta$ is less intuitive than the other regularization hyper-parameters.
In general, the larger $\zeta$ the stricter the differentiation between ``small'' and ``large'' coefficient values.
In other words, if $\zeta$ is large, all but a few predictors with initially very large coefficient values will be heavily penalized and thus likely not included in the set of relevant predictors.

In addition to the large number of hyper-parameters that need to be selected, both optimization problems~\eqref{def:pense} and~\eqref{def:adaptive-pense} are highly non-convex in $\mu$ and $\mat\beta$.
Finding global minima through numeric optimization is therefore contingent on a starting value that is close to a global minimum.
Section~\ref{sec:computing} describes a strategy for obtaining starting values used for adaptive PENSE.

\subsection{More robust variable selection}
The adaptive EN penalty brings the additional advantage of more robust variable selection properties compared to non-adaptive penalties.
So-called ``good'' leverage points in non-relevant predictors (i.e., observations with extreme values in one or more predictors with a coefficient value of 0 but without gross error in the response), as shown in Figure~\ref{fig:good-leverage-example}, can lead to an arbitrary number of false positives in robust estimates when using non-adaptive penalties.
Interestingly, these predictors are often the first to enter the model.
This is caused by a combination of how large values in predictors affect the sub-gradient of the objective function and robust scaling of the predictors.
The phenomenon is best seen from the sub-gradient of the PENSE objective function at $\mat\beta = \mat 0_p$, given by

$$
\subgrad{\mat\beta} \left\{ \mathcal{O}_\subt{S}\left( \mat y, \mu + \mat X \mat\beta \right) + \Phi_\subt{EN}(\mat \beta; \lambdap, \alphap) \right\} \evalat{\mat\beta = \mat 0_p}
 = -\frac{1}{n} \sum_{i = 1}^n w^2_i \left( y_i - \mu \right) \mat x_i + \lambdap {[-\alphap; \alphap]},
$$
where $w_i$ are determined by the S-loss evaluated at the intercept-only model.
These weights are $>0$ if and only if the residual from the intercept-only model is not too large (relative to all other residuals) and different from 0 (i.e., not fitted exactly).

Consider now that predictor $j$ is truly inactive and contains an extremely large value for observation $i$, but the residual for observation $i$ in the intercept-only model is small and non-zero, i.e., the $i$-th observation is a good leverage point.
An example of this scenario is shown in Figure~\ref{fig:good-leverage-example}.
Robust scaling of the predictor is likely not substantially shrinking this extremely large value and hence the $j$-th predictor dominates the sub-gradient at $\mat\beta = \mat 0_p$; therefore, it enters the model first.
In other words, this single aberrant value leads to the false impression that the $j$-th predictor is relevant.
However, because the leverage is caused by an extreme value in a non-relevant predictor, the estimated coefficient for this predictor is likely very small in magnitude, compared to coefficients of truly relevant predictors.
Importantly, the higher the leverage of this observation, the smaller the estimated coefficient.
This allows adaptive PENSE to screen out the wrongly included predictor, making it more robust against this form of contamination.

\begin{figure}[t]
{\centering \includegraphics[width=0.45\linewidth]{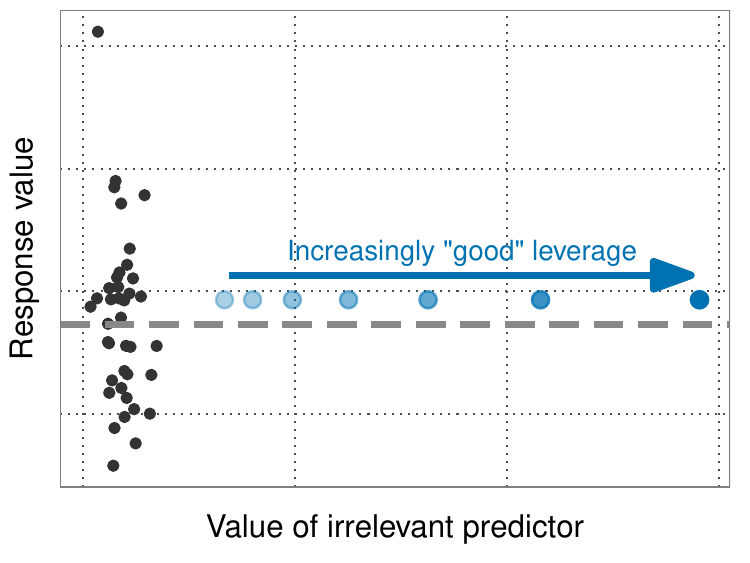}

}
\caption{%
Example of a good leverage point in a truly irrelevant predictor.
The leverage is higher as the value in the irrelevant predictor increases and the residual in the intercept-only model (depicted as dashed line) is neither 0 nor too large.}\label{fig:good-leverage-example}
\end{figure}

Interestingly, good leverage points in irrelevant predictors can -- but usually do not -- have this effect on non-robust estimators.
The prevalent scaling of the predictors using the non-robust sample standard deviation shrinks such extreme values, thereby reducing their contribution to the sub-gradient and hence their influence.
As depicted in Figure~\ref{fig:good-leverage}, however, the variable selection performance of non-robust estimators is severely damaged by bad leverage points and hence in general is unreliable under contamination.

This form of good leverage points may occur in many practical problems, particularly in very sparse settings.
In protein expression data, for example, a group of proteins could be highly expressed in a small fraction of subjects while only trace amounts of the protein are detected in the vast majority of subjects.
Even if the group of proteins is not relevant for the outcome of interest, robust methods with non-adaptive penalties are prone to selecting these proteins.
An example of this behavior using synthetic data is shown in Figure~\ref{fig:good-leverage}.
Here, 5 out of 28 irrelevant proteins have higher expression levels in 10 out of 100 observations.
Additionally, the response variable contains outliers alongside high-leverage points in 2 out of 5 relevant proteins in another 5 observations.
It is obvious that the effect of such high-leverage points is more pronounced the higher the leverage of the contaminated observations, but the estimation methods are affected differently.
Non-robust EN estimators tend to select only the 2 contaminated truly relevant predictors, but given the outlyingness of the observations, the actual parameter estimates are highly biased.
Similarly affected, I-LAMM doesn't select any proteins if the leverage caused by the affected proteins is too high.
The PENSE estimator, on the other hand, selects almost all of the affected irrelevant proteins, but as the leverage increases, tends towards not selecting any proteins at all.
Only adaptive PENSE is mostly unaffected by the contamination in relevant and irrelevant predictors, identifying on average 4 out of 5 truly relevant predictors, while screening out 24 out of 28 irrelevant predictors.

Before presenting more empirical evidence of the robustness of variable selection by adaptive PENSE in Section~\ref{sec:numerical-studies}, we discuss the intricate computational challenges and the theoretical properties of the estimator.

\begin{figure}[t]
{\centering \includegraphics[width=0.95\linewidth]{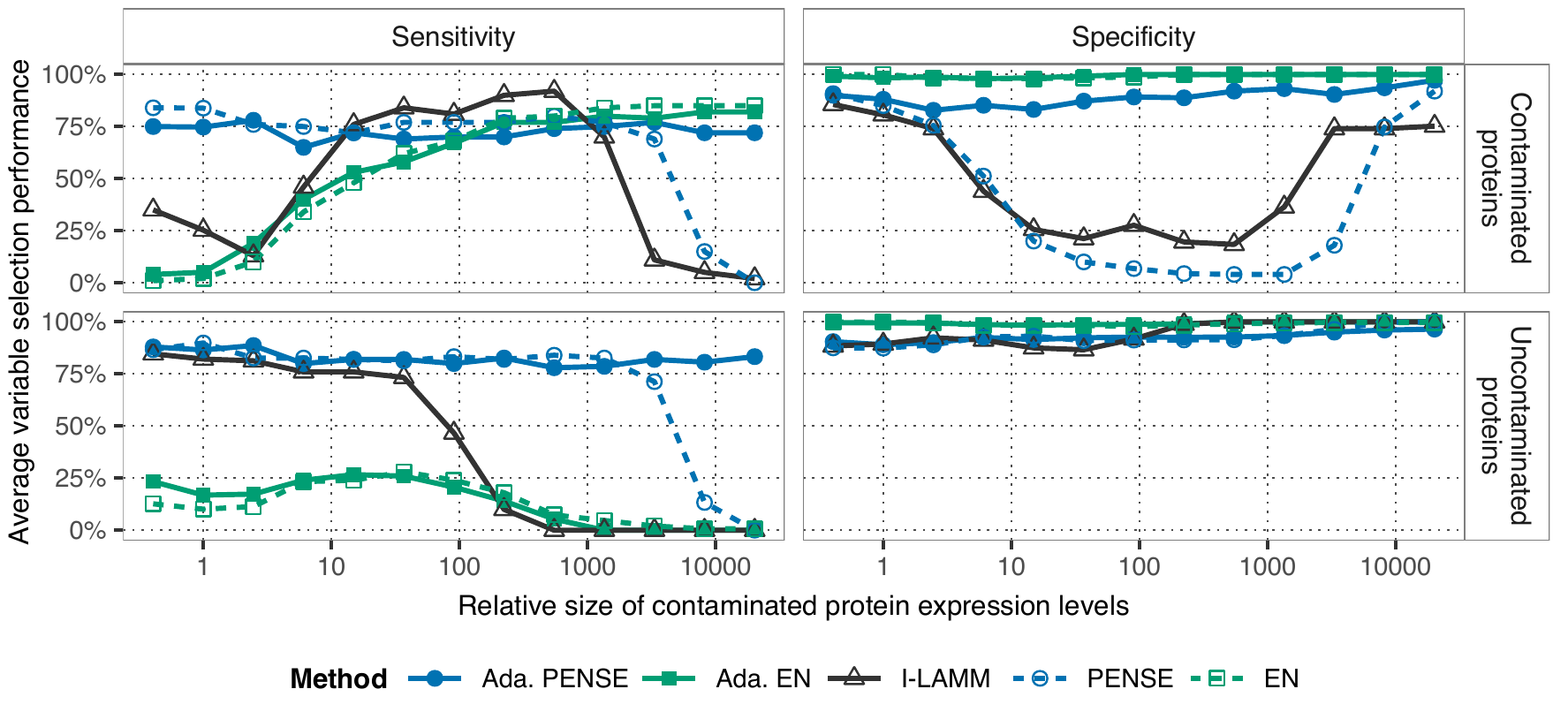}

}
\caption{%
Effect of high-leverage points on the sensitivity and specificity of various variable selection methods for synthetic data.
Average performance over 50 replications are reported separately for proteins with contaminated observations and proteins free from any contamination.
Generated data comprises $n=100$ observations of $p=32$ protein expression levels, 5 of which are relevant.
In 5\% of observations the response and 2 relevant proteins are contaminated, and in 10\% of observations 5 irrelevant proteins contain contaminated expression levels.
}\label{fig:good-leverage}
\end{figure}

\section{Computing adaptive PENSE estimates}\label{sec:computing}

The highly non-convex objective function paired with the need to select several hyper-parameters requires specialized algorithms and strategies for computing adaptive PENSE estimates.
The problem is separated into two stages: (1)~computing PENSE estimates and selecting the appropriate penalization level for the preliminary PENSE estimate and (2)~computing adaptive PENSE estimates based on the (fixed) preliminary estimate.
These two stages are done sequentially and the only information passed from stage 1 to 2 is the preliminary parameter estimate.

\subsection{Selecting hyper-parameters}\label{sec:computing-cv}

In both stages the penalization level is selected via independent repeated K-fold cross-validations (CVs) over a range of possible values.
In the first stage, only the penalization level is selected as the elastic net parameter, $\alphap$, is fixed at 0.
For the second stage, the two hyper-parameters $\alphaadap$ and $\zeta$ are selected from a small set of pairs.
For every desired pair, the penalization level is chosen via separate CVs (but using the same splits), and the combination resulting in the best prediction performance is selected.
The range of penalization levels is different in both stages, as well as for every pair of $\alphaadap$ and $\zeta$ considered in the second stage.

Estimating the prediction error of robust estimators via cross-validation suffers from high variability due to potential contamination in the data and non-convexity of the objective function.
The prediction error estimated from a single CV is highly dependent on the random split and hence unreliable for selecting the hyper-parameter.
We work around this high volatility by repeating CV several times to get a more reliable assessment of the estimator's prediction performance for given values of the hyper-parameters.
In addition to repeating CV, the measure of the prediction error needs to be stable in the presence of gross errors in the observed response values.
For PENSE and adaptive PENSE we use the highly robust $\tau$-scale of the uncentered prediction errors \parencite{Maronna2002} to estimate the prediction accuracy in each individual CV run:

$$
\hat\tau = \sqrt{
  \frac{1}{n}  \sum_{i=1}^n \min \left( c_\tau,
    \frac{ \left| y_{i} - \hat y_{i} \right| }
           { \operatorname*{Median}\limits_{i' = 1, \dotsc, n} \left| y_{i'} - \hat y_{i'} \right| }
  \right)^2
}.
$$
Here, $\hat y_{i}$ is the predicted response value from the CV split where the $i$-th observation is in the test set.
The parameter $c_\tau > 0$ specifies what constitutes outlying values in terms of multiples of the median absolute deviation and hence governs the tradeoff between efficiency and robustness of the scale estimate.

Repeated CV leads to several estimates of the prediction accuracy.
Since the presence of gross errors in the predictions is already handled by the robust $\tau$-scale, we average the prediction errors using the sample mean.
This gives an overall measure of prediction performance for a fixed set of hyper-parameters.
Repeating cross-validation furthermore gives insights into the variability of the prediction performance and affords more sensible selection of the penalization level, e.g., using the ``one-standard-error rule'' \parencite{Hastie2009}.

Another important remedy to reduce the variability incurred by CV is scaling the input data to make penalization levels more comparable across CV splits.
We first standardize the original data set by centering the response and each predictor using univariate M-estimates of location.
Then we scale each predictor to have unit M-scale and refer to this data set as the standardized input data.
In each CV split, the training data is re-standardized in the same way as the original data set.
Therefore, a fixed penalization level $\lambda$ induces a level of sparsity to the parameter estimate computed on the training data comparable to the sparsity when computed on the standardized input data.

\subsection{Algorithms for adaptive PENSE}

The algorithm to compute (adaptive) PENSE estimates is optimized for computing estimates over a fine grid of penalization levels.
In each individual CV run, hyper-parameters $\alphaadap$ and $\zeta$ are fixed.
The biggest challenge when computing adaptive PENSE estimates is the non-convexity of the objective function.
Many local minima of the objective function, however, are artifacts of contaminated observations and undesirable.
This insight is used in \textcite{CohenFreue2019} to find ``initial estimates'' for PENSE, i.e., approximate solutions which are closer to ``good'' local minima than to local minima caused by contamination.
We adapt their elastic net Peña-Yohai (EN-PY) procedure for the adaptive EN penalty to compute initial estimates for adaptive PENSE.
We call our procedure adaptive EN-PY.

Additionally we improve upon the ``warm-start'' heuristics described in \textcite{CohenFreue2019} to substantially increase exploration of the search space while maintaining computational feasibility.
For a small subset of the grid of penalization levels (e.g., every tenth value), we compute a set of initial estimates using the adaptive EN-PY procedure.
Even for a fixed penalty level, the effect of penalization on adaptive EN-PY may be vastly different than on adaptive PENSE.
Therefore, we collect all of these initial estimates (i.e., from all penalization levels) into one set of initial estimates.

Beginning at the largest value in the fine grid of penalization levels, we use every approximate solution in the set of initial estimates to start an iterative algorithm following the minimization by majorization (M-M) paradigm \parencite{Lange2016}.
The iterative M-M algorithm locates a local optimum by solving a sequence of weighted adaptive LS-EN problems, each with updated observation weights derived from the robust S-loss function evaluated at the current iterate.
Instead of fully iterating until convergence for every starting point, the M-M algorithm is stopped prematurely, leading to a set of candidates solutions.
Of those, only the most promising candidates (i.e., those with the lowest value of the objective function) are fully iterated.
The final estimate at the largest penalization level is then the fully iterated solution with smallest value of the objective function.
This two-step approach -- exploration and improvement -- is successfully applied for many other types of robust estimators \parencite[e.g.,][]{Salibian-Barrera2006,Rousseeuw2006,Alfons2013} and works very well for adaptive PENSE, too.

At the next smallest penalization level, the M-M algorithm is started from all initial estimates plus all fully iterated, most promising candidates from the previous penalization level.
Similar to the previous penalization level, only a few iterations of the M-M algorithm are performed for these starting points to reduce computation time and only the most promising solutions are iterated until convergence.
This cycle is repeated for every value in the grid of penalization levels, from largest to smallest.
Carrying forward the most promising solution from previous penalization levels combined with initial estimates from adaptive EN-PY leads to efficient and extensive exploration of the parameter space.

The computational solutions discussed here are readily available in the R package \texttt{pense}, available on CRAN (\url{https://cran.r-project.org/package=pense}).

\section{Asymptotic theory}\label{sec:theory}
To establish theoretical guarantees for adaptive PENSE, we formalize the model \eqref{def:linear-regression-model}.
We assume that the random predictors $\randX$ with distribution function $G_0$ are independent of the error term $\randU$ which follows the distribution $F_0$.
The joint distribution $H_0$ is assumed to satisfy
\begin{equation}\label{ass:joint-distribution}
H_0(\randY, \randX) = G_0(\randX) F_0(\randY - \mu^0 - \randX\tr \mat\beta^0).
\end{equation}
In the remainder of this section we omit the intercept term to make the statements more concise and easier to follow.
All of the following statements regarding the slope also apply to the intercept term.

The robustness properties of PENSE and adaptive PENSE are tightly connected to the bounded $\rho$ function and we therefore assume~\eqref{ass:rho-function}.
To establish statistical guarantees for adaptive PENSE we additionally require that the derivative of the $\rho$ function, denoted by $\psi$, satisfies
\begin{enumerate}[label={[A\arabic*]}, resume*=assumptions]
\item $t \psi(t)$, is unimodal in $|t|$.
Specifically, there exists a $c'$ with $0 < c' < c$ such that $t \psi(t)$ is strictly increasing for $0 < t < c'$ and strictly decreasing for $c' < t < c$. \label{ass:rho-function-psiunimodal}
\end{enumerate}

Assumption \ref{ass:rho-function} is standard in the robust literature for redescending M-estimators and is identical to the definition of a bounded $\rho$ function in \textcite{Maronna2019}.
Assumption \ref{ass:rho-function-psiunimodal} is a slight variation of more common assumptions on the mapping $t \mapsto t \psi(t)$, but it is nevertheless satisfied by most bounded $\rho$ functions used in robust estimation, including Tukey's bisquare function.

Finally, to establish the root-n consistency of the PENSE estimator and the oracle property of the adaptive PENSE estimator we need the same regularity conditions as in \textcite{Smucler2017}:
\begin{enumerate}[label={[A\arabic*]}, resume*=assumptions]
\item $\mathbb{P}(\mat x\tr \mat\beta = 0) < 1 - \delta$ for all non-zero $\mat\beta \in \Rm{p}$ and $\delta$ as defined in \eqref{eqn:s-loss}.\label{ass:reg-1}
\item The distribution $F_0$ of the residuals $\randU$ has an even density $f_0(u)$ which is monotone decreasing in $|u|$ and strictly decreasing in a neighborhood of 0.\label{ass:reg-2}
\item The second moment of $G_0$ is finite and $\EV[G_0]{\randX \randX\tr}$ is non-singular.\label{ass:reg-3}
\end{enumerate}
It is noteworthy that the assumption on the residuals \ref{ass:reg-2} does not impose any moment conditions on the distribution, which makes our results applicable to extremely heavy tailed errors.
Furthermore, unlike many results concerning regularized M-estimators, we only require a finite second moment of the predictors.

Under these assumptions we first establish the finite-sample robustness of adaptive PENSE.
We quantify the finite-sample robustness by the replacement finite-sample breakdown point (FBP) of an estimator $\hmat\theta$ given the sample $\mathcal Z$; the FBP $\epsilon^*(\hmat\theta; \mathcal Z)$ is defined as
\begin{equation}\label{def:fbp}
	\epsilon^*(\hmat\theta; \mathcal Z) = \max \left\{
		\frac{m}{n}: \sup_{\tilde{\mathcal{Z}} \in \mathfrak{Z}_m}  \left\| \hmat\theta(\tilde{\mathcal{Z}}) \right\| < \infty \right\},
\end{equation}
where the set $\mathfrak{Z}_m$ contains all possible samples $\tilde{\mathcal{Z}}$ with $0 \leq m < n$ of the original $n$ observations in $\mathcal{Z}$ replaced by arbitrary values \parencite{Donoho1982}.
As noted by several authors \parencite[e.g.][]{Davies2005,Smucler2017} the FBP might not be the most adequate measure of the robustness for penalized regression methods and does not address how robust the variable selection is.
However, the FBP still facilitates comparison between robust regularized regression estimators and is essential for understanding the upper limit of contamination an estimator can tolerate.

\begin{theorem}\label{thm:fbp}
For a sample $\mathcal Z = \{ (y_i, \mat x_i)\colon i = 0, \dotsc, n \}$ of size $n$, let $m(\delta) \in \mathbb{N}$ be the largest integer smaller than $n \min (\delta, 1 - \delta)$, where $\delta$ is as defined in \eqref{eqn:s-loss}.
Then, for any fixed $\lambdaadap > 0$ and $\alphaadap$, the adaptive PENSE estimator, $\hmat\theta$, retains the breakdown point of the preliminary PENSE estimator, $\tmat\theta$:
\begin{displaymath}
	\frac{m(\delta)} {n} \leq \epsilon^*\left( \hmat\theta; \mathcal Z \right) = \epsilon^*\left( \tmat\theta; \mathcal Z \right) \leq \delta \, .
\end{displaymath}
\end{theorem}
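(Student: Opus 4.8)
The plan is to establish the three links of the chain in turn: the two outer inequalities are the finite-sample breakdown analysis of the preliminary estimator $\tmat\theta$, and the equality is obtained by transferring that analysis to $\hmat\theta$ through the adaptive loadings. The recurring tools are the elementary facts about the M-scale recalled in Section~\ref{sec:method} \parencite{Maronna2019}: (i)~if strictly fewer than $n\delta$ of the residuals are arbitrary and the rest are bounded by $B$, then $\mhscale{\cdot}\le K(B,\delta,\rho)$; (ii)~if at least $\lceil n\delta\rceil$ residuals exceed $R$ in absolute value while the remaining ones are not all $0$, then $\mhscale{\cdot}\ge R/c$; and (iii)~if at least $n(1-\delta)$ residuals vanish, then $\mhscale{\cdot}=0$. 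Two structural remarks are used throughout: the penalized objective dominates its penalty term and the intercept is not penalized; and at most $m$ of the $n$ pairs $(y_i,\mat x_i)$ are corrupted, so the remaining $n-m$ stay fixed and bounded.

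\emph{Bounds for $\tmat\theta$.} For the lower bound, fix $m\le m(\delta)$, so $m<n\delta\le n(1-\delta)$, and an arbitrary contamination. Evaluating~\eqref{def:pense} at $(\mu_0,\mat 0)$ with $\mu_0$ a robust univariate location of $\mat y$, the minimal objective is at most $\mhscalesq{\mat y-\mu_0}\le B$, a constant determined by the clean data (by~(i)); hence $\tfrac{\lambdap}{2}\|\tmat\beta\|^2\le B$, so $\|\tmat\beta\|$ is uniformly bounded, and if $|\tilde\mu|$ were unbounded all $n-m>n\delta$ clean residuals would be large, forcing $\mhscalesq{\tmat\theta}>B$ by~(ii), a contradiction. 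Thus $\|\tmat\theta\|$ is bounded uniformly over $\mathfrak{Z}_m$ and $\epsilon^*(\tmat\theta;\mathcal Z)\ge m(\delta)/n$. For the upper bound, fix $m>m(\delta)$ (so $m\ge\lceil n\delta\rceil$) and corrupt $m$ observations with the common leverage point $(M\mat e_1,M^2)$, $M\to\infty$. At every bounded $\mat\theta$ the $m$ identical contaminated residuals are of order $M^2$, so~(ii) forces $\mhscalesq{\cdot}$, hence the objective, to be $\gtrsim M^4$. On the other hand the minimal objective is only $O(M^2)$: when $m<n(1-\delta)$ this is witnessed by the coefficient escape $(0,M\mat e_1)$, where the contaminated residuals vanish, the clean residuals are $O(M)$, and the objective is $O(M^2)+\tfrac{\lambdap}{2}M^2$; when $m\ge n(1-\delta)$ it is witnessed by the intercept escape $(M^2,\mat 0)$, at which at least $n(1-\delta)$ residuals vanish and so $\mhscale{\cdot}=0$ by~(iii). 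Hence the minimizer has $\|\tmat\theta\|\to\infty$, that is, $\tmat\theta$ breaks down, and $\epsilon^*(\tmat\theta;\mathcal Z)=m(\delta)/n$.

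\emph{Adaptive PENSE inherits the breakdown point.} For $\epsilon^*(\hmat\theta;\mathcal Z)\ge\epsilon^*(\tmat\theta;\mathcal Z)$, take $m\le m(\delta)$; by the previous step $C:=\sup_{\mathfrak{Z}_m}\|\tmat\beta\|<\infty$, so every loading satisfies $|\tilde\beta_j|^{-\zeta}\ge C^{-\zeta}$ and $\Phi_\subt{AE}(\mat\beta;\lambdaadap,\alphaadap,\zeta,\tmat\beta)\ge\lambdaadap C^{-\zeta}\,\Phi_\subt{EN}(\mat\beta;1,\alphaadap)$: the adaptive penalty dominates a fixed, non-adaptive elastic net penalty. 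Rerunning the lower-bound argument with this dominating penalty (again comparing to $(\mu_0,\mat 0)$, whose adaptive objective is $\le\mhscalesq{\mat y-\mu_0}\le B$) bounds $\|\hmat\beta\|$ and then $|\hat\mu|$ uniformly over $\mathfrak{Z}_m$. For $\epsilon^*(\hmat\theta;\mathcal Z)\le\epsilon^*(\tmat\theta;\mathcal Z)$, reuse the contamination $(M\mat e_1,M^2)$ with $m>m(\delta)$. If $m<n(1-\delta)$ the preliminary estimate escapes along $\mat e_1$, so $|\tilde\beta_1|$ is of order $M$, its loading is of order $M^{-\zeta}$, and the adaptive penalty at $(0,M\mat e_1)$ is only $\lambdaadap|\tilde\beta_1|^{-\zeta}\bigl(\tfrac{1-\alphaadap}{2}M^2+\alphaadap M\bigr)=O(M^{2-\zeta})=O(M)$ since $\zeta\ge1$; if $m\ge n(1-\delta)$ the preliminary estimate escapes through the unpenalized intercept with $\tmat\beta\to\mat 0$, and the adaptive penalty at $(M^2,\mat 0)$ is $0$. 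In either case the adaptive objective at the escaping configuration is $O(M^2)$, which again beats the $\gtrsim M^4$ bound over bounded $\mat\theta$, so $\|\hmat\theta\|\to\infty$. Combining the inequalities gives $\epsilon^*(\hmat\theta;\mathcal Z)=\epsilon^*(\tmat\theta;\mathcal Z)$, and together with the bounds on $\epsilon^*(\tmat\theta;\mathcal Z)$ the theorem follows.

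\emph{Main obstacle.} The heart of the matter is the two-way coupling between the M-scale and the data-dependent loadings $|\tilde\beta_j|^{-\zeta}$. On the one hand one must first certify that the preliminary $\tmat\beta$ is uniformly bounded — so that the loadings are bounded below and the adaptive penalty genuinely dominates an elastic net penalty — before $\hmat\theta$ can be controlled, which forces the two stages to be treated sequentially. On the other hand, in the breakdown regime the escape may occur either through the coefficient or, under heavy contamination, through the unpenalized intercept while $\tmat\beta$ collapses; in the first case the loading must shrink only polynomially, tamed by $\zeta\ge1$ against the $M^{2-\zeta}$ penalty growth, and in the second the collapse of $\tmat\beta$ must switch the adaptive penalty off entirely. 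Verifying in each regime that the escaping configuration is a near-global minimizer — that neither a bounded nor a competing unbounded configuration does better — is precisely what requires the quantitative two-sided control~(i)--(iii) of the M-scale rather than merely its qualitative breakdown behaviour.
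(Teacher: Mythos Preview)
Your proposal is far more detailed than the paper's own treatment, which consists of a single sentence: noting that $|\tilde\beta_j|^{-\zeta} > 0$ for all $j$, it defers the entire argument to the PENSE breakdown result in \textcite{CohenFreue2019}. Your explicit two-stage strategy --- first establishing the bounds for $\tmat\theta$ via the M-scale facts~(i)--(iii), then transferring them to $\hmat\theta$ by controlling the adaptive loadings from above and below --- is precisely how one would unpack that reference, and your lower-bound direction is clean.

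One step in your upper-bound argument for $\hmat\theta$ needs tightening. Under the contamination with $\mat x = M\mat e_1$, $y = M^2$ and $m(\delta) < m < n(1-\delta)$, you assert that ``the preliminary estimate escapes along $\mat e_1$, so $|\tilde\beta_1|$ is of order $M$''. What you have actually established is only that the PENSE minimizer is unbounded with $\|\tmat\beta\| = O(M)$ and $\mhscale{\tmat\theta} = O(M)$. The lower bound $|\tilde\beta_1| = \Theta(M)$ --- which is what makes the loading $|\tilde\beta_1|^{-\zeta} = O(M^{-\zeta})$ and hence the adaptive penalty at $(0, M\mat e_1)$ only $O(M^{2-\zeta})$ --- requires a further step. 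Since the $m \ge n\delta$ identical contaminated residuals $M^2 - \tilde\mu - M\tilde\beta_1$ cannot all exceed $c\,\mhscale{\tmat\theta} = O(M)$ (else by your fact~(ii) the scale would be too large), one gets $\tilde\mu + M\tilde\beta_1 = M^2 + O(M)$; and since the $n - m > n\delta$ clean residuals are each $-\tilde\mu + O(M)$ (the clean $\mat x_i$ are bounded and $\|\tmat\beta\| = O(M)$) and must likewise be $O(M)$, one obtains $|\tilde\mu| = O(M)$ and therefore $\tilde\beta_1 = M + O(1)$. With this filled in your argument is complete, and in fact pins down both breakdown points exactly at $m(\delta)/n$, which is sharper than the sandwich stated in the theorem.
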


Noting that $|\tilde\beta_j|^{-\zeta} > 0$ for all $j=1,\dotsc,p$, the proof of this theorem is essentially the same as for PENSE which can be found in \textcite{CohenFreue2019}.
The main message from Theorem~\ref{thm:fbp} is that the adaptive PENSE estimator is bounded away from the boundary of the parameter space, as long as contamination is restrained to fewer than $m(\delta)$ observations.
It does not, however, mean that the estimated parameter is close to the true parameter under contamination.
Without assumption on the type of contamination, the performance of the estimator can only be assessed with numerical experiments and we show some results in Section~\ref{sec:simulation}.
If the regularization parameters are chosen by a data-driven strategy, it is important to stress that the strategy itself must provision for contamination to not break the robustness of adaptive PENSE.
In the strategy proposed in Section~\ref{sec:computing-cv}, this is ensured by repeating CV several times and using the robust $\tau$-scale for measuring the prediction accuracy.

We now turn to the asymptotic properties of adaptive PENSE as $n \to \infty$ and the dimensionality $p$ remains fixed.
Propositions~\ref{prop:strong-consistency} and \ref{prop:root-n-consistency} in the supplementary materials establish strong consistency and root-n consistency of both PENSE and adaptive PENSE estimators.
In our definition of adaptive PENSE in~\eqref{def:adaptive-pense} we take a PENSE estimate to define the penalty loadings.
With the root-n consistency of the PENSE estimator, variable selection consistency and a limiting distribution of the adaptive PENSE estimator can be derived.
These properties hold if using the PENSE-Ridge as preliminary estimate, but also for any other $\alphap$ in the computation of the preliminary PENSE estimate.

\begin{theorem}
\label{thm:asymptotic-properties}
Let $(y_i, \mat x_i\tr)$, $i = 1, \dotsc, n$, be i.i.d.\ observations with distribution $H_0$ satisfying \eqref{ass:joint-distribution}.
Under assumptions \ref{ass:rho-function}--\ref{ass:reg-3}, and if $\lambdap_n = O(1/\sqrt{n})$, $\lambdaadap_n = O(1/\sqrt{n})$, $\lambdaadap_n n^{\zeta/2} \to \infty$, and $\alphaadap > 0$, then the adaptive PENSE estimator, $\badapense$, defined in \eqref{def:adaptive-pense}  has the following properties:
\begin{enumerate}[label=(\roman*)]
\item\label{thm:var-sel-consistency}
variable selection consistency, i.e., the estimator of the truly non-relevant coefficients $\badapense_\subRnum{2}$ is zero with high probability:
          \begin{equation}
          \mathbb{P}\left( \badapense_\subRnum{2} = \mat 0_{p - s} \right) \to 1
          \quad\text{for }
          n \to \infty;
          \end{equation}

\item\label{thm:asymptotic-normality}
if in addition $\sqrt{n} \lambdaadap_n \to 0$, the estimator of the truly relevant coefficients $\badapense_\subRnum{1}$ is asymptotically Normal:
    \begin{equation}
    \label{eqn:asymptotic-normality}
    \sqrt{n} \left(\badapense_\subRnum{1} - \mat\beta^0_\subRnum{1} \right) \xrightarrow{\text{\ d\ }} N_s\left(\mat 0_s, \omscalebsq \frac{a(\psi, F_0)}{b(\psi, F_0)^2} \mat \Sigma_{\mat x_\subRnum{1}}^{-1} \right)
    \quad\text{for }
    n \to \infty.
    \end{equation}
   The constants are given by $a(\psi, F_0) = \EV[F_0]{ \psi \left( u / \omscaleb \right)^2 }$, $b(\psi, F_0) = \EV[F_0]{ \psi' \left( u / \omscaleb \right) }$, and $\mat \Sigma_{\mat x_\subRnum{1}} = \EV{\mat x_\subRnum{1} \mat x_\subRnum{1}\tr}$, i.e., the covariance matrix of the truly relevant predictors, $\mat x_\subRnum{1}$.
\end{enumerate}
\end{theorem}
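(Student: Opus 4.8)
The plan is to follow the classical two-stage oracle route. First I would show that the adaptive penalty forces the estimates of the truly irrelevant coefficients to vanish with probability tending to one; then, restricting to that event, I would show that the estimator of the relevant coefficients solves --- up to a negligible perturbation --- the estimating equations of the unpenalised S-estimator fitted in the oracle submodel, and read off its limiting distribution. Throughout I would rely on the strong and $\sqrt n$-consistency of both the preliminary PENSE estimate and of $\badapense$ (Propositions~\ref{prop:strong-consistency} and~\ref{prop:root-n-consistency}), on the consistency of the residual M-scale $\mhscale{\adapense}\to\omscaleb$, and on the boundedness and continuity of $\psi=\rho'$ and $\psi'$ granted by~\ref{ass:rho-function}; the latter is exactly what lets every empirical-process step below go through with only the finite second moment~\ref{ass:reg-3} on $\randX$ and no moment assumption on $F_0$. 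I keep the intercept in the argument throughout, writing regressors as $(1,\mat x\tr)\tr$.

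\emph{Part~\ref{thm:var-sel-consistency}.} Differentiating the defining equation $\frac{1}{n}\sum_i\rho\big(r_i/\mhscale{\adapense}\big)=\delta$, $r_i=y_i-\hat\mu-\mat x_i\tr\hmat\beta$, gives the S-loss gradient in closed form,
\begin{equation*}
\grad{\mat\beta}\,\mathcal O_\subt{S}(\mat y,\hat\mu+\mat X\hmat\beta)
= -\,2\,\mhscalesq{\adapense}\;
\frac{\frac{1}{n}\sum_i \psi\!\big(r_i/\mhscale{\adapense}\big)\,\mat x_i}
     {\frac{1}{n}\sum_i \psi\!\big(r_i/\mhscale{\adapense}\big)\,r_i/\mhscale{\adapense}} .
\end{equation*}
By the symmetry of $F_0$ and $\randX\perp\randU$ the numerator has population mean zero, so a central limit theorem, a uniform law of large numbers for the bounded class $\{(y,\mat x)\mapsto\psi((y-\mu-\mat x\tr\mat\beta)/s)\,\mat x\}$, and a one-term Taylor expansion (whose Hessian converges to a finite matrix) yield that each coordinate of this gradient, evaluated at the $\sqrt n$-consistent $\adapense$, is $O_p(1/\sqrt n)$. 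Now suppose $\hat\beta_j\neq0$ for some $j>s$; there the objective is differentiable in $\beta_j$, so stationarity forces that $O_p(1/\sqrt n)$ quantity to equal $-\lambdaadap_n|\tilde\beta_j|^{-\zeta}\big((1-\alphaadap)\hat\beta_j+\alphaadap\sgn(\hat\beta_j)\big)$, whose modulus is at least $\alphaadap\lambdaadap_n|\tilde\beta_j|^{-\zeta}$. Since $\sqrt n\,\tilde\beta_j=O_p(1)$ by root-$n$ consistency of the preliminary estimate and $\tilde\beta_j\neq0$ (PENSE-Ridge produces no exact zeros), on a set of probability arbitrarily close to one one has $|\tilde\beta_j|^{-\zeta}\ge M^{-\zeta}n^{\zeta/2}$, hence $\sqrt n\,\alphaadap\lambdaadap_n|\tilde\beta_j|^{-\zeta}\ge\alphaadap M^{-\zeta}\sqrt n\,(\lambdaadap_n n^{\zeta/2})\to\infty$ because $\lambdaadap_n n^{\zeta/2}\to\infty$ and $\alphaadap>0$. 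This contradicts the $O_p(1/\sqrt n)$ bound, so $\mathbb P(\hat\beta_j=0)\to1$; a union bound over the finitely many indices $j>s$ gives $\mathbb P(\badapense_\subRnum{2}=\mat 0_{p-s})\to1$.

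\emph{Part~\ref{thm:asymptotic-normality}.} I would work on the event $\{\badapense_\subRnum{2}=\mat 0_{p-s}\}$, of probability tending to one. There $(\hat\mu,\badapense_\subRnum{1})$ minimises $\mathcal O_\subt{S}(\mat y,\mu+\mat X_\subRnum{1}\mat\beta_\subRnum{1})$ plus the $s$-dimensional adaptive EN penalty; since $\tilde\beta_j\to\beta^0_j\neq0$ the loadings converge to $|\beta^0_j|^{-\zeta}$, and by consistency $\sgn(\hat\beta_j)=\sgn(\beta^0_j)$ eventually, so the objective is differentiable at the solution and its penalty gradient is $O_p(\lambdaadap_n)=o_p(1/\sqrt n)$ under $\sqrt n\lambdaadap_n\to0$. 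After cancelling the nonzero scalar factor of the S-loss gradient, the first-order conditions together with the scale equation read
\begin{equation*}
\frac{1}{n}\sum_{i=1}^n \psi\!\left(\frac{r_i}{\mhscale{\adapense}}\right)\begin{pmatrix}1\\ \mat x_{\subRnum{1},i}\end{pmatrix}=o_p(1/\sqrt n),
\qquad
\frac{1}{n}\sum_{i=1}^n \rho\!\left(\frac{r_i}{\mhscale{\adapense}}\right)=\delta .
\end{equation*}
This is precisely the estimating system of the unpenalised S-estimator in the oracle submodel, perturbed by $o_p(1/\sqrt n)$. Expanding about $(\mu^0,\btrue_\subRnum{1},\omscaleb)$: the Jacobian block coupling the regression equations to $\sigma$ vanishes because $t\mapsto t\psi'(t)$ is odd and $F_0$ symmetric (so $\EV[F_0]{\psi'(u/\omscaleb)(u/\omscaleb)}=0$), and the block coupling the scale equation to $(\mu,\mat\beta_\subRnum{1})$ vanishes because $\EV[F_0]{\psi(u/\omscaleb)}=0$; thus scale estimation is asymptotically orthogonal to the regression part. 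The leading term $\sqrt n\,\frac{1}{n}\sum_i\psi(u_i/\omscaleb)(1,\mat x_{\subRnum{1},i}\tr)\tr$ is asymptotically $N\big(\mat 0, a(\psi,F_0)\,\EV{(1,\mat x_\subRnum{1}\tr)\tr(1,\mat x_\subRnum{1}\tr)}\big)$ (mean zero; covariance factorises by $\randX\perp\randU$), while the regression-block Jacobian converges to $-b(\psi,F_0)/\omscaleb$ times the same matrix by a uniform law of large numbers. The sandwich formula then gives $\sqrt n\big((\hat\mu,\badapense_\subRnum{1})-(\mu^0,\btrue_\subRnum{1})\big)\xrightarrow{d}N\big(\mat 0,\,\omscalebsq\,a(\psi,F_0)/b(\psi,F_0)^2\,(\EV{(1,\mat x_\subRnum{1}\tr)\tr(1,\mat x_\subRnum{1}\tr)})^{-1}\big)$, and extracting the $\mat\beta_\subRnum{1}$ block (whose inverse is $\mat\Sigma_{\mat x_\subRnum{1}}^{-1}$ by block inversion) yields~\eqref{eqn:asymptotic-normality}.

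The main obstacle I anticipate is that the S-loss is defined only implicitly through the M-scale, so the whole argument hinges on controlling its gradient; in particular one must show that the normalising denominator $\frac{1}{n}\sum_i\psi(r_i/\mhscale{\adapense})\,r_i/\mhscale{\adapense}$ stays bounded away from zero with probability tending to one, which rests on the unimodality of $t\psi(t)$~\ref{ass:rho-function-psiunimodal}, the non-degeneracy condition~\ref{ass:reg-1}, and the $\sqrt n$-consistency of $\adapense$ and of $\mhscale{\adapense}$. Once that is in hand, the remaining pieces --- the uniform LLN and CLT for the bounded $\psi$- and $\psi'$-classes over a shrinking neighbourhood, and the joint expansion that disentangles scale from slope --- are routine, and it is precisely their reliance on bounded $\psi,\psi'$ rather than on moments of the score that keeps the conclusion free of any moment condition on the errors.
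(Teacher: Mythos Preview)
Your proposal is correct and follows the classical oracle-property template, but it deviates from the paper's proof in a few structural ways.

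For part~\ref{thm:var-sel-consistency}, the paper does \emph{not} argue via the KKT conditions. Instead it defines $V_n(\mat v_1,\mat v_2)=\objf[AS](\btrue+(\mat v_1\tr,\mat v_2\tr)\tr/\sqrt n)$ and shows directly that $V_n(\mat v_1,\mat v_2)-V_n(\mat v_1,\mat 0_{p-s})>0$ uniformly over a compact set, so the global minimiser must have $\mat v_2=\mat 0$. The mean-value expansion of the M-scale produces a loss difference of order $\|\mat v_2\|/\sqrt n$ (via the uniform bound of Lemma~\ref{lem:uniform-convergence_v_compact} on exactly the ratio you flag as the ``main obstacle''), which is swamped by the adaptive-penalty contribution $\|\mat v_2\|\,\alphaadap M^{-\zeta}\lambdaadap_n n^{\zeta/2}/\sqrt n\to\infty$. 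Your KKT route---showing the $j$-th stationarity equation cannot balance an $O_p(1/\sqrt n)$ score against a penalty subgradient of order $\lambdaadap_n n^{\zeta/2}$---is the Zou/Fan--Li dual of this and equally valid; it buys a slightly shorter argument at the cost of having to control the S-loss gradient at the random point $\adapense$ rather than along a deterministic segment. One small slip: the class $\{(y,\mat x)\mapsto\psi(\cdot)\,\mat x\}$ is not bounded; it has envelope $\|\psi\|_\infty\|\mat x\|$, and it is \ref{ass:reg-3} (finite second moment) that delivers the ULLN and CLT you invoke.

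For part~\ref{thm:asymptotic-normality} the two proofs are essentially the same: the paper also works on $\{\badapense_\subRnum{2}=\mat 0_{p-s}\}$, writes the first-order conditions in the active block, expands by the mean-value theorem, and uses $\sqrt n\lambdaadap_n\to0$ to kill the penalty gradient. The paper does not set up the joint regression--scale estimating system and argue block-orthogonality as you do; it instead absorbs the scale by invoking Lemma~5.1 of Yohai~(1985), which directly gives the CLT for $n^{-1/2}\sum_i\psi(u_i/\mhscaleada)\mat x_{i,\subRnum{1}}$ with the random scale plugged in. Your treatment is more self-contained; the paper's is shorter by outsourcing this step. Finally, note that your block-inversion claim ``the $\mat\beta_\subRnum{1}$ block of the inverse is $\mat\Sigma_{\mat x_\subRnum{1}}^{-1}$'' only holds if the predictors are centred (otherwise it is the inverse of the \emph{centred} covariance); the paper sidesteps this by dropping the intercept in its derivations, so the discrepancy is cosmetic.
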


The proof of Theorem~\ref{thm:asymptotic-properties} is given in the supplementary materials.
The results of Theorem~\ref{thm:asymptotic-properties} show that adaptive PENSE has the same asymptotic properties as if the true model would be known in advance, under fairly mild conditions on the distribution of the predictors and the error term.
Similar results are obtained for the MM-LASSO in \textcite{Smucler2017}, but these results depend on a good estimate of the residual scale.
What distinguishes our results from previous work is that the oracle property for adaptive PENSE can be obtained without prior knowledge of the residual scale, even under very heavy tailed errors.

The theoretical results depend on an appropriate choice for the regularization parameters $\lambdap$ and $\lambdaadap$, for PENSE and adaptive PENSE, respectively.
In practice, the choice is not obvious and the regularization parameters are usually determined through a data-driven procedure.
Depending on the procedure, however, the required conditions are difficult if not impossible to verify.
To substantiate the theoretical properties and verify that they translate beneficially to practical problems, we conduct a numerical study and carefully investigate adaptive PENSE's properties in finite samples with data-driven hyper-parameter selection as detailed in Section~\ref{sec:computing}.

\section{Numerical studies}\label{sec:numerical-studies}

The following simulation study covers scenarios where adaptive PENSE could theoretically possess the oracle property (if not for data-driven hyper-parameter selection), but also situations where the required assumptions for the oracle property are not met (e.g., when $p > n$).
Similarly, the real-world application showcases the potential of the highly robust adaptive PENSE estimator and the proposed hyper-parameter search in challenging settings where other estimators may be highly affected by contamination.

\subsection{Simulation study}\label{sec:simulation}

The aim of the simulation study is to assess the prediction and model selection performance of adaptive PENSE and how well the proposed hyper-parameter selection procedure approximates to the optimum.
We assess the reliability of the estimators by considering a large number of different contamination structures and heavy-tailed error distributions.
Below we present the results from a scenario with $n=200$ observations and $p=32$ to $p=512$ predictors.
Of those, $s = \log_2(p)$ predictors have non-zero coefficient value.
We present the results for an additional scenario with $n=100$ and more severe contamination in the supplementary materials.

We consider multiple error distributions and either no contamination or contamination in 10\% of observations.
For each combination of error distribution and presence/absence of contamination, we randomly generate 50 data sets according to the following recipe.
The contamination settings described below are chosen to be most damaging to PENSE and adaptive PENSE, while being less detrimental to the other estimators.
Even under these adverse conditions adaptive PENSE performs better or similar to other robust methods, highlighting the superior reliability of adaptive PENSE.

The predictor values are randomly generated by $K = \lfloor 1 + \sqrt{p} / 2 \rfloor$ latent variables $Z_{i1}, \dotsc, Z_{iK}$ following a multivariate t-distribution with 4 degrees of freedom and pairwise correlation of $0.1$.
The predictor values are then generated as $x_{ij} = Z_{i, 1 + \lfloor (j - 1) / K) \rfloor + I[j > s]} + \xi_{ij}$, with $ \xi_{ij}$ i.i.d.~normal with standard deviation of $0.2$.
This yields $K$ groups of predictors which are highly correlated within groups (correlation about 0.96) and mildly correlated between groups.

The response value is generated according to the true linear model
$$
y_i = \sum_{j = 1}^s x_{ij} + \epsilon_i,
\quad\quad
i = 1, \dotsc, n.
$$
Therefore, the true regression coefficient is 1 for the first $s$ predictors and 0 for the remaining $p-s$ predictors, $\btrue = (1, \dotsc, 1, 0, \dotsc, 0)\tr$.
The random noise $\epsilon_i$ is i.i.d.\ following a symmetric stable distribution for which we consider several different stability parameter values $\nu$: (i)~a light-tailed Normal distribution ($\nu = 2)$, (ii)~a moderately heavy-tailed stable distribution ($\nu = 1.33$), and (iii)~a heavy-tailed Cauchy distribution ($\nu = 1$).
The random errors are scaled such that the true model explains 25\% of the variation in the response value, measured by the empirical variance (for Normal errors) or the squared $\tau$-scale (for heavy-tailed distributions).

In settings with 10\% contamination, bad leverage points are introduced to $\log_2(p)$ randomly chosen irrelevant predictors, denoted by $\mathcal{C}$.
In the first 10\% of observations ($i = 1, \dotsc, n/10$), the values in the chosen predictors are multiplied by a constant factor $k_\subt{lev} = 2$ and a variable factor which ensures the predictor values are large relative to the correlation structure and the values in the other observations.
Several different values for $k_\subt{lev}$ were investigated; adaptive PENSE seems most affected by a moderate amount of leverage ($k_\subt{lev} = 2$), while results from all other estimators are substantially worse when introducing higher leverage ($k_\subt{lev} > 4$).
This is expected as edge-cases are the most difficult to handle for robust methods, while extremely aberrant values are easier to detect and hence their influence can be reduced.

In the same 10\% of observations, the response variable is contaminated with gross outliers.
For these observations, the response is generated by the contaminated model
\begin{align*}
y_i = - \sum_{j \in \mathcal{C}} x_{ij} + \tilde\epsilon_i,
\quad
i = 1, \dotsc, n/10.
\end{align*}
where $\mathcal{C}$ are the predictors contaminated with leverage points.
The noise $\tilde\epsilon_i$ is i.i.d.~Normal with a variance such that the contaminated model explains about 91\% of the variation in the contaminated response.
Therefore, the contaminated model creates a very strong signal for a small proportion of observations.
Moreover, the $L_1$ norm of the coefficient vector in the contaminated model is similar to the $L_1$ norm of the true coefficient vector, and hence a penalty function does not guide estimators towards the true model.
As with our choices for the bad leverage points, this contamination model was chosen as it has the most severe effect on adaptive PENSE.
Other methods, including adaptive MM, are much more affected by more severe outliers, i.e., if the coefficients in the contamination model are of larger magnitude. 

In addition to bad leverage points and outliers, all settings have 20\% of observations as good leverage points by introducing large values in $(p - s) / 2$ irrelevant predictors.
In the $n/5$ non-contaminated observations with largest Mahalanobis distance, the values in the trailing $(p-s)/2$ predictors are multiplied by a constant factor, thereby increasing the leverage of the affected observations.
The effect of the multiplicative factor is most noticeable in adaptive PENSE when a moderately high amount of leverage is introduced.
As seen in the discussion about more robust variable selection by adaptive PENSE, other estimators suffer more from very high leverage points.

The hyper-parameters for all estimators are selected according to the schema depicted in Section~\ref{sec:computing-cv} using 10 replications of 5-fold CV.
For I-LAMM and non-robust EN estimators we use the mean absolute error (MAE) as a measure of prediction accuracy.
For all other estimators, we use the uncentered $\tau$-scale estimate with $c_\tau = 3$.
The hyper-parameter $\alpha$ in EN-type penalties is chosen from the values $\{ 0.5, 0.75, 1 \}$ and for adaptive EN-type penalties we allow for $\zeta$ values in $\{1, 2\}$.
For each estimator we consider 50 values of the penalization level $\lambda$, chosen automatically by the software used to compute these estimators.
To compute the I-LAMM estimate we use the R package \texttt{I-LAMM} (available at \url{https://gitlab.math.ubc.ca/dakep/ilamm}), a derivative from the original package published alongside the paper, but extended to use the MAE instead of mean squared error.
Non-robust EN-type estimators are computed with the R package \texttt{glmnet} \parencite{Friedman2010}.
PENSE, adaptive PENSE and (adaptive) MM estimates are computed with the R package \texttt{pense}.
The (adaptive) MM estimator considered in this study is similar to the MM-LASSO \parencite{Smucler2017}, but using the more general EN penalty instead of the LASSO and leveraging the scale of the residuals from the PENSE-Ridge estimate.
The hyper-parameters are selected using the ``1-SE-rule'', i.e., the hyper-parameters leading to the sparsest solution while being within 1 standard error of the estimated CV prediction accuracy of the ``best'' hyper-parameter.

\begin{figure}[t]
  {\centering \includegraphics[width=1\linewidth]{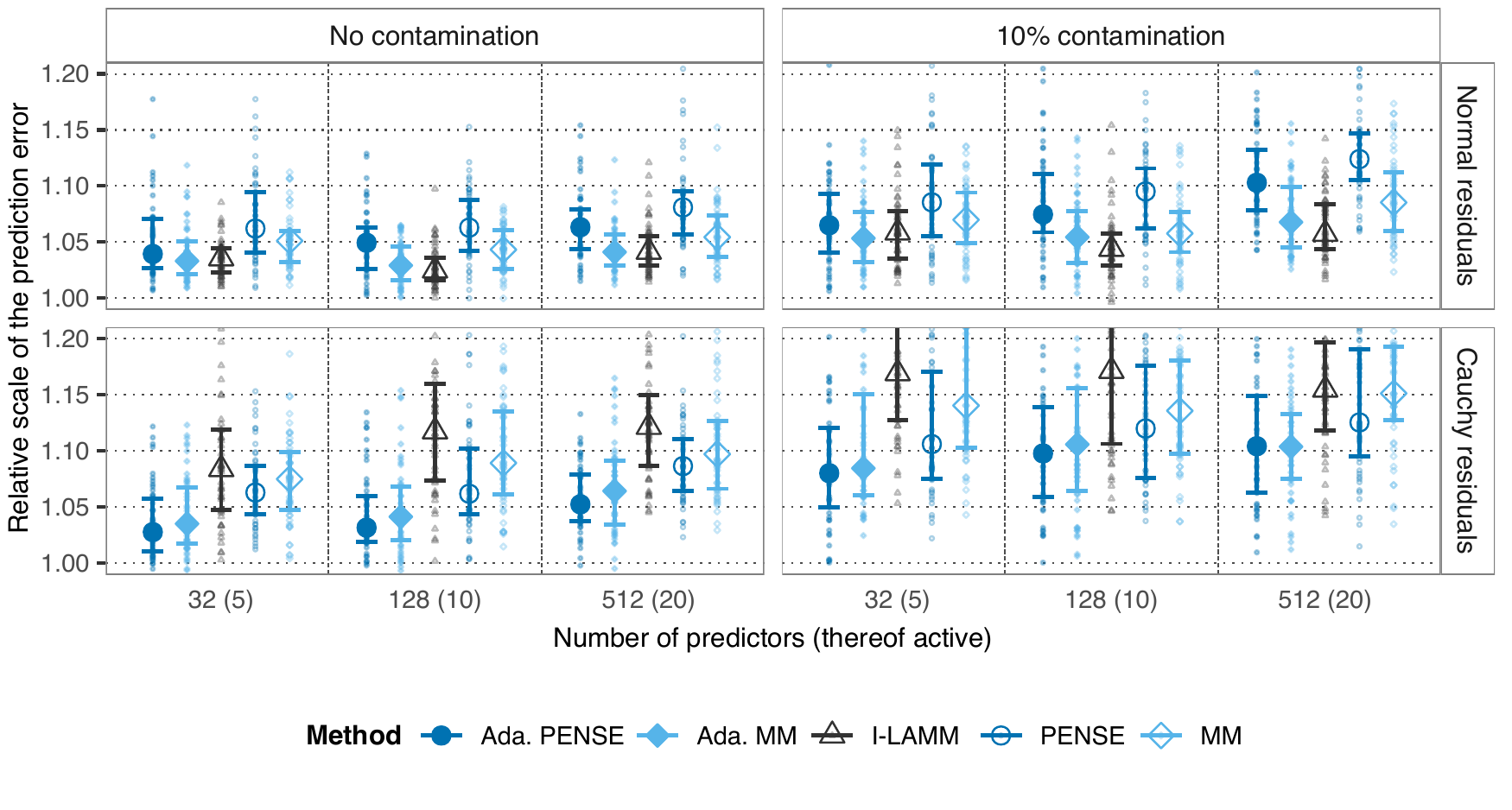}
  
  }
  \caption{%
  Prediction accuracy of robust estimators, measured by the uncentered $\tau$-scale of the prediction errors relative to the true $\tau$-scale of the error distribution (lower is better).
  The median out of 50 replications is depicted by the points and the lines show the interquartile range.
  }\label{fig:simstudy-scenario_1-predperf}
\end{figure}

The prediction performance is shown in Figure~\ref{fig:simstudy-scenario_1-predperf}.
Adaptive PENSE is less affected by contamination or heavy tails in the error distribution than other estimators.
Only adaptive MM leads to comparable performance in most settings.
Moreover, estimators with adaptive EN penalties are outperforming non-adaptive penalties due to the presence of high leverage points caused by irrelevant predictors.
With normally distributed (and moderately-heavy-tailed errors as shown in the supplementary materials), adaptive PENSE has in general lower prediction accuracy then adaptive MM; likely because the residual scale can be estimated accurately in these circumstances.
In the presence of contamination and/or heavy-tailed errors, however, adaptive PENSE is generally more reliable than adaptive MM.

Without adverse contamination, I-LAMM performs best for light-tailed error distributions, but also with heavy-tailed error distributions I-LAMM performs as well as or even better than the (non-adaptive) PENSE and MM estimators.
However, I-LAMM has the same issues with good leverage points as PENSE and is much more affected by bad leverage points and gross errors in the residuals than highly robust estimators.
Classical EN and adaptive EN estimators are not shown in these plots as they perform very poorly for heavy-tailed error distributions.
Additional plots are provided in the supplementary materials and include the least-squares EN estimators.
It is noteworthy that the considered contamination model is not very detrimental to the prediction accuracy of EN and adaptive EN due to the relatively small magnitude of the contaminated slope coefficients.
Additional results for the moderately heavy-tailed stable distribution are presented in the supplementary materials.

\begin{figure}[t]
  {\centering \includegraphics[width=1\linewidth]{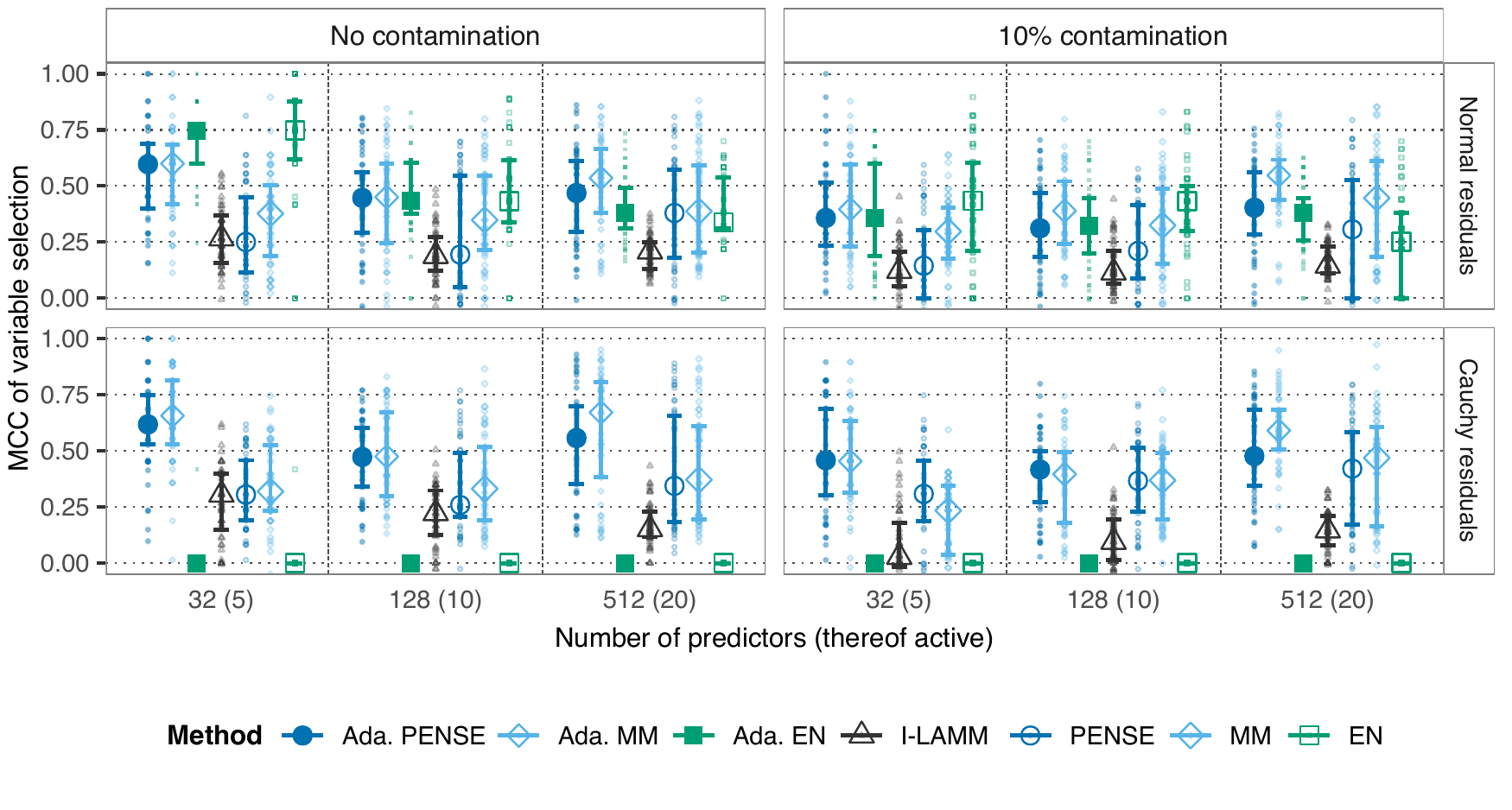}

  }
  \caption{%
  Variable selection performance of robust estimators, measured by the Matthews correlation coefficient (MCC; higher is better), defined in~\eqref{eqn:simstudy-mcc}.
  The median out of 50 replications is depicted by the points and the lines show the interquartile range.
  }\label{fig:simstudy-scenario_1-varsel}
\end{figure}

Figure~\ref{fig:simstudy-scenario_1-varsel} shows the variable selection performance of the robust estimators in the simulation study.
The depicted Matthews correlation coefficient (MCC) is calculated from the entries of the confusion matrix of true positives (TP), true negatives (TN), false positives (FP) and false negatives (FN) by
\begin{equation}\label{eqn:simstudy-mcc}
\text{MCC} =\frac{\text{TP} \cdot \text{TN} - \text{FP} \cdot \text{FN}}
 {\sqrt{(\text{TP} + \text{FP}) (\text{TP} + \text{FN}) (\text{TN} + \text{FP}) (\text{TN} + \text{FN})}}.
\end{equation}
It is evident that highly robust estimators lead to better variable selection than other estimators, especially in regimes with heavy-tailed error distributions.
Moreover adaptive penalties clearly improve variable selection upon non-adaptive penalties, and even without adverse contamination, good leverage points appear to distort variable selection for non-robust estimators.
More careful inspection of the sensitivity and specificity of variable selection (shown in the supplementary materials) underscore that non-adaptive penalties are highly affected by good leverage points and tend to select a large proportion of irrelevant predictors with large values.
While adaptive penalties screen out truly relevant predictors at a higher rate than non-adaptive penalties, adaptive PENSE and adaptive MM improve variable selection overall.
Importantly, the robust estimators with adaptive penalty are selecting none or only few of the irrelevant predictors with extreme values.
Across simulations, adaptive PENSE and adaptive MM are very similar in their variable selection performance, again noting that the contamination settings are deliberately chosen to be as detrimental as possible to adaptive PENSE and hence favor adaptive MM.
Adaptive EN has similar performance to adaptive PENSE in settings with light-tailed errors, but break down under heavy-tailed error distributions.
Similarly, I-LAMM is unable to cope with extreme values in the predictors, particularly for non-Normal errors. 

The simulation study highlights that adaptive PENSE and the strategy for choosing hyper-parameters are highly resilient towards many different forms of contamination in the predictors, even if they occur in tandem with heavy-tailed errors and gross outliers in the residuals.
Both the prediction accuracy and the variable selection performance are at least on-par with, but most often better than, other robust and non-robust regularized estimators for high-dimensional regression.
While adaptive MM is often comparable in performance to adaptive PENSE and better for Normal residuals, in certain situations it is substantially more affected by contamination.
This is particularly noticeable in some settings presented in the supplementary materials, where adaptive MM sometimes has more than 30\% higher prediction error than adaptive PENSE.

\subsection{Real-data example}
We apply adaptive PENSE and the other methods considered in the simulation study to the analysis of chemical composition of 180 archaeological glass vessels from the 15–17th century \parencite{Janssens1998}.
The analysis is performed on electron probe X-ray micro analysis spectra comprising 1920 frequencies.
This data set has been analyzed in several other papers on robust high-dimensional regression \parencite[e.g.,][]{Smucler2017,Loh2018f} as it is known the dataset contains contaminated observations both in the response variable and the frequency spectrum \parencite{Maronna2011}.
Of the 1920 frequencies available, only 487 frequencies with meaningful variation between vessels are used for the analysis, in line with the analyses conducted in comparable studies.
The goal is to predict the concentration of the chemical compound $\mathrm{P_2 O_5}$, measured as the total amount of the compound relative to the total weight of the glass fragment $[\% w/w]$.
To get a predictive model, we model the log-concentration of the chemical compound $\mathrm{P_2 O_5}$ as a linear function of the spectrum.
With similar dimensions and potential contamination as analyzed in the above simulation study, we can be confident that both adaptive PENSE and adaptive MM are very good candidates for fitting this predictive model.

The breakdown point for the robust estimators is set to 28\%, affording up to 50 observations with contaminated residuals.
Hyper-parameters are selected via 6-fold CV, repeated 10 times for all considered estimators.
The $\alpha$ parameter for all EN-type estimators is selected from the values $\{0.5, 0.75, 1\}$, and the $\zeta$ parameter for all estimators with adaptive EN penalty is chosen from $\{ 1, 2 \}$.
For non-robust EN estimators and for I-LAMM the mean absolute prediction error is used as a measure of prediction accuracy during CV.
For the robust estimators, the robust $\tau$-scale guides the hyper-parameter selection.

Figure~\ref{fig:glass-prediction-perf} shows the estimated prediction accuracy from 50 replications of 6-fold CV.
In each fold, the hyper-parameters for all estimators are chosen via an inner CV as explained in detail in Section~\ref{sec:computing-cv}.
The prediction accuracy in each CV run is estimated by the uncentered $\tau$-scale of the prediction errors for all 180 observations and every estimator (robust and non-robust).
The Figure~\ref{fig:glass-prediction-perf} shows the prediction accuracy both on the log-scale (i.e., the scale on which the predictive models are fitted) and the original scale where the predicted values are back-transformed prior to computing the prediction accuracy.

The adaptive PENSE clearly outperforms the other methods for predicting the concentration of $\mathrm{P_2 O_5}$ from the spectrum, both on the log-scale and the original scale.
The I-LAMM estimator is omitted from these plots because its prediction accuracy is substantially worse than the other methods and only marginally better than the intercept-only model, with a median $\tau$-scale of the prediction error of $0.586$ on the log scale ($0.167$ on the original scale).
It is interesting that the classical EN estimator is performing only slightly worse than some highly robust estimators.
In addition, non-robust estimators and adaptive MM show a very high amount of variation in the prediction accuracy.
This is also reflected in the number of relevant predictors estimated by these methods, which varies widely between the 50 different CV splits for some estimators.
EN, for example selects anywhere between 4 to 34 frequencies, while adaptive MM selects between 8 and 45.
While adaptive PENSE selects more frequencies than other estimators (between 42 and 49), the selection seems much more stable.
This suggests that several frequencies are contaminated by extreme or unusual values with deleterious effects on most methods except for adaptive PENSE.

\begin{figure}[t]
  {\centering 
    \subfigure[Prediction accuracy.]{\label{fig:glass-prediction-perf}%
    \includegraphics[width=0.66\linewidth]{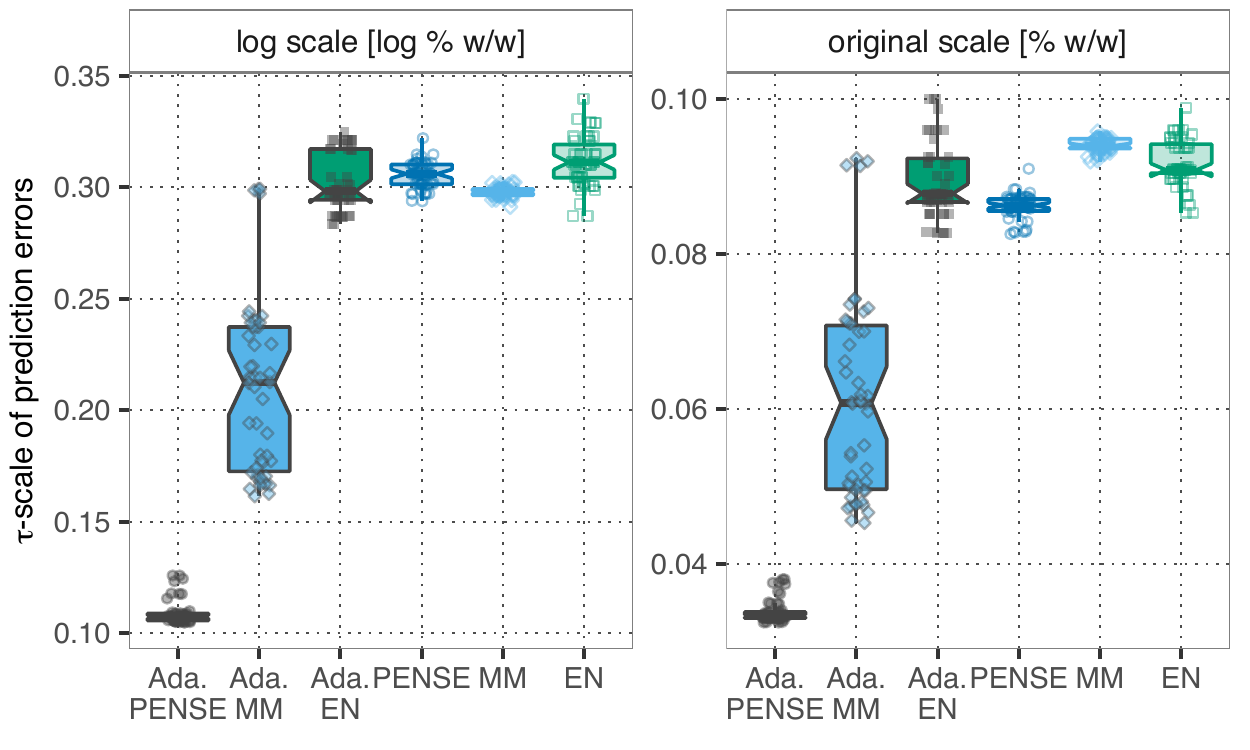}}
    \subfigure[Fitted vs. observed.]{\label{fig:glass-fitted-vs-observed}%
    \includegraphics[width=0.33\linewidth]{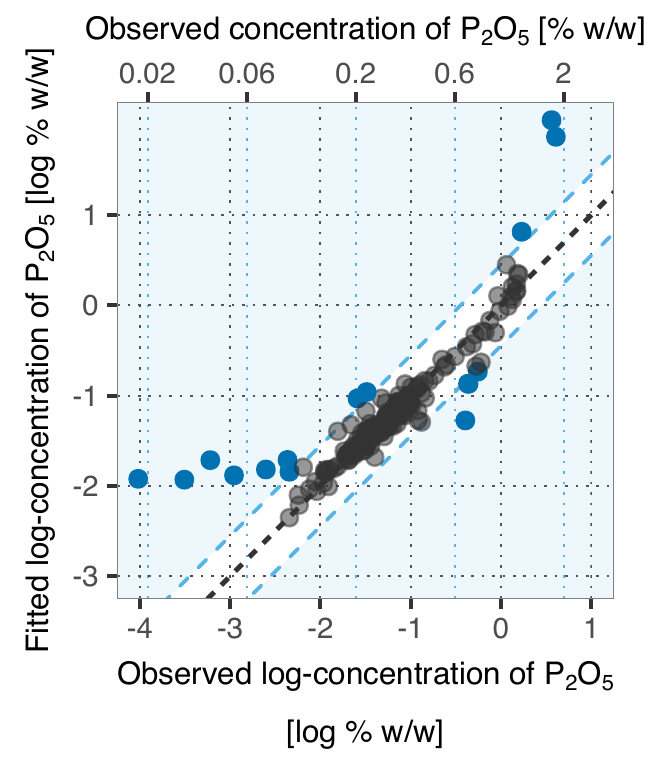}}
  
  }
  \caption{%
  Accuracy of predicting the concentration of compound $\mathrm{P_2 O_5}$ from the glass-vessel data set (a) and observed vs.\ fitted values from adaptive PENSE (b).
  For (a) the prediction accuracy is estimated by the uncentered $\tau$-scales of the prediction errors from 50 replications of 6-fold cross-validation.
  Hyper-parameters are selected independently in each of these CVs via an ``inner'' 6-fold CV using 10 replications.
  In (b) the 14 data points in the shaded areas have unusually low or high log-concentrations of $\mathrm{P_2 O_5}$ according to adaptive PENSE, with residuals larger than 3 times the estimated residual scale.
  }\label{fig:glass-application}
\end{figure}

The adaptive PENSE fit computed on all 180 glass vessels shown in Figure~\ref{fig:glass-fitted-vs-observed} suggests about 14 vessels have unusually large residuals.
The adaptive PENSE fit also suggests a moderately heavy-tailed error distribution, further explaining why adaptive PENSE is performing better than other estimators in this application.
As demonstrated in the numerical experiments above, in such a setting with an heavy-tailed error distribution the contamination in the frequency spectrum can have very detrimental effects on non-adaptive and non-robust estimators.
This is also a likely explanation why I-LAMM performs very poorly in this application and does not select any predictors.
While non-robust EN has better prediction performance than I-LAMM in this application, the variable selection seems highly affected by the contamination, with only 4 frequencies being selected.
In comparison, adaptive PENSE has been shown to retain reliable variable selection performance in such a challenging scenario and selects 43 frequencies belonging to several groups of adjacent and hence highly correlated frequencies.
Adaptive MM selects a similar but slightly larger set of 48 frequencies and these slight differences likely translate to less accurate predictions of the concentration of $\mathrm{P_2 O_5}$.

\section{Conclusions}
Unusually large or small values in predictors, paired with heavy-tailed error distributions or even gross outliers in the residuals can have severe ramifications for statistical analyses if not handled properly.
Particularly in high-dimensional problems such extreme values are highly likely.
Whether these extreme values have a detrimental effect on the statistical analysis, however, is unknown.
Omitting affected observations or predictors is therefore ill-advised and even fallacious; these anomalous values are often well hidden in the multivariate structure and thus difficult if not impossible to detect.
As a better alternative, we propose the adaptive PENSE estimator which can cope with such unusual values in the predictors even if they are paired with aberrantly large residuals.

We have demonstrated that adaptive PENSE leads to estimates with high prediction accuracy and reliable and strong variable selection even under very challenging adverse contamination settings.
Unlike other robust estimators, adaptive PENSE is capable of correctly screening out truly irrelevant predictors even if they contain aberrant and unusual values.
The extensive simulation study shows that adaptive PENSE achieves overall better prediction and variable selection performance than competing robust regularized estimators and semi- or non-robust methods such as I-LAMM \parencite{Fan2018} or least-squares (adaptive) EN.
While adaptive MM performs better for Normal errors and often similar for heavy-tailed error distributions, in some settings adaptive MM can be substantially affected by contamination.
Overall, adaptive PENSE is more resilient in challenging scenarios than other estimators considered here.
This is underscored by adaptive PENSE's superior prediction performance for predicting the concentration of the compound $\mathrm{P_2 O_5}$ in ancient glass vessels from their spectra.
Adaptive PENSE not only achieves better prediction accuracy, it does so with a more parsimonious model than other robust methods and more reliability than non-robust methods.

In addition to the strong empirical performance, we have established theoretical guarantees for the estimator.
Adaptive PENSE is asymptotically able to uncover the true set of relevant predictors and the estimator of the respective coefficients converges to a Normal distribution.
Importantly, these guarantees hold regardless of the tails of the error distribution and overall very mild assumptions on the distribution of the predictors or the residuals.

Computing adaptive PENSE estimates is challenging, but the proposed computational solutions ensure a wide range of analyses and problem sizes are amenable to adaptive PENSE.
The algorithms and the proposed hyper-parameter search are made available in the R package \texttt{pense} (\url{https://cran.r-project.org/package=pense}).
The high reliability even under adverse contamination in predictors and responses alike, combined with the readily available computational tools, make adaptive PENSE a feasible alternative in many statistical applications.

\section{Acknowledgement}
The author would like to thank Gabriela~V.\ Cohen~Freue for her feedback on early drafts of this manuscript and numerous stimulating discussions.
Ezequiel Smucler provided important input and proofreading during the development the proofs.
Keith Crank gave valuable writing suggestions to polish the final draft.
The numerical studies and the real-world application were enabled by computing resources provided by WestGrid (\url{https://www.westgrid.ca/}) and ComputeCanada (\url{https://www.computecanada.ca}), as well as by resources provided by the Office of Research Computing at George Mason University (\url{https://orc.gmu.edu}) and funded in part by grants from the National Science Foundation (Awards Number 1625039 and 2018631).

\storecounter{equation}{eqncounter}

\printbibliography

\begin{appendices}
\section{Proofs}
Below are the proofs of asymptotic properties of adaptive PENSE as presented in Section~\ref{sec:theory}.
For notational simplicity, the intercept term is dropped from the model, i.e., the linear model~\ref{def:linear-regression-model} is simplified to
$$
\randY = \randX\tr \btrue + \randE
$$
and the joint distribution $H_0$ of $(\randY, \randX)$ is written in terms of the error
$$
H_0(u, \mat x) := H_0(y, \mat x) = H_0(\mat x) F_0(y - \mat x\tr \btrue).
$$
All the proofs also hold for the model with an intercept term included.
Another notational shortcut in the following proofs is to write the M-scale of the residuals in terms of the regression coefficients, i.e.,
$\mhscale{\mat\beta}$ is shorthand for $\mhscale{\mat y - \mat X \mat\beta}$, and accordingly for the population version $\mscale{\mat\beta}$,
For all proofs below, we define $\psi(t) = \rho'(t)$ to denote the first derivative of the $\rho$ function in the definition of the M-scale estimate and hence of the S-loss, as well as the mapping $\varphi \colon \mathrm{R} \to [0; c]$ as
$$
\varphi(t) := \psi(t) t.
$$
Moreover, the complete objective functions of PENSE and adaptive PENSE are denoted by
\begin{align*}
\objf[S](\mat\beta) &= \loss[S]\left( \mat y, \mat X \mat\beta \right) + \Phi_\subt{EN}(\mat \beta; \lambdap, \alphap)\\
\text{and} \quad
\objf[AS](\mat\beta) &= \loss[S] \left( \mat y, \mat X \mat\beta \right) + \Phi_\subt{AE}(\mat \beta; \lambdaadap, \alphaadap, \zeta, \tmat \beta),
\end{align*}
respectively.

\subsection{Preliminary results concerning the M-scale estimator}
Before proving asymptotic properties of the adaptive PENSE estimator, several intermediate results concerning the M-scale estimator are required.

\begin{lemma}\label{lem:slln}
Let $(y_i, \mat x_i\tr)$, $i = 1, \dotsc, n$, be i.i.d.\ observations with distribution $H_0$ satisfying \eqref{ass:joint-distribution} and $u_i = y_i - \mat x_i\tr \btrue$.
If $\mat v \in \Rm{p}$ and $s \in (0, \infty)$ positive, then the empirical processes $\left( P_n \eta_{\mat v, s} \right)_{\mat v, s}$ with
\begin{displaymath}
\eta_{\mat v, s} (u, \mat x) := \varphi \left(\frac{u + \mat x\tr \mat v} {s} \right)
\end{displaymath}
converge uniformly almost sure:
\begin{equation}
\lim_{n \to \infty} \sup_{\substack{\mat v \in \Rm{p} \\ s \in (0, \infty)}}
	\left |
		\frac{1}{n} \sum_{i = 1}^{n} \eta_{\mat v, s} (u_i, \mat x_i) - \EV[H_0]{\eta_{\mat v, s} (\randE, \randX)}
	  \right | = 0
	  \quad \text{a.s.}
\end{equation}
\end{lemma}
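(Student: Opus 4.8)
The plan is to establish a uniform strong law of large numbers (SLLN) for the class of functions $\mathcal{F} = \{ \eta_{\mat v, s} : \mat v \in \Rm{p},\ s \in (0, \infty) \}$ by verifying that this class is not too rich (it is a Glivenko--Cantelli class) and that it has an integrable envelope. Since $\varphi(t) = \psi(t) t$ is bounded by $c$ (as noted in the preamble, $\varphi \colon \mathbb{R} \to [0, c]$ because $\psi(t) = 0$ for $|t| \geq c$ and $\psi$ is bounded), each $\eta_{\mat v, s}$ is uniformly bounded by the constant $c$, so the constant function $c$ serves as an envelope and integrability is immediate — no moment conditions on the errors or predictors are needed here. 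The real work is controlling the complexity of $\mathcal{F}$.

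First I would reparametrize: each $\eta_{\mat v, s}(u, \mat x)$ depends on $(u, \mat x)$ only through the single real-valued affine functional $(u, \mat x) \mapsto (u + \mat x\tr \mat v)/s$. Thus $\mathcal{F}$ is obtained by composing the fixed, bounded, continuous function $\varphi$ with the $(p+1)$-dimensional family of affine maps $\{ (u, \mat x) \mapsto \alpha u + \mat x\tr \mat w : \alpha > 0,\ \mat w \in \Rm{p} \}$ (taking $\alpha = 1/s$, $\mat w = \mat v / s$). The collection of sets $\{ (u, \mat x) : \alpha u + \mat x\tr \mat w \leq t \}$ forms a VC class of half-spaces in $\Rm{p+1}$ with VC dimension at most $p + 3$; hence the family of affine functions has polynomial uniform covering numbers (it is a VC-subgraph class). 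Composition with the fixed bounded monotone-on-each-side function $\varphi$ preserves the VC-subgraph property up to a constant factor — more carefully, since $\varphi$ is not monotone I would write $\varphi$ as a difference (or bounded transformation) built from monotone pieces using assumption \ref{ass:rho-function-psiunimodal}, or simply invoke that composing a uniformly bounded class of VC-subgraph index $V$ with a fixed function of bounded variation yields again a VC-subgraph (polynomial covering number) class. Either way, $\mathcal{F}$ has finite uniform entropy integral.

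With a bounded envelope and finite uniform entropy, $\mathcal{F}$ is $P_{H_0}$-Glivenko--Cantelli by the standard uniform SLLN (e.g., via the bracketing/VC Glivenko--Cantelli theorem, or by a symmetrization-plus-chaining argument), which is exactly the claimed display. I would present the argument in this order: (1) bound each $\eta_{\mat v, s}$ by $c$; (2) reduce to the one-dimensional affine index and identify the VC structure; (3) conclude finite uniform entropy; (4) apply the uniform SLLN. The main obstacle is step (2)--(3): rigorously controlling the covering numbers of $\mathcal{F}$, in particular handling that $\varphi$ is not monotone — I would address this by splitting $\varphi$ into its increasing part on $[-c', c']$ and its decreasing parts on $[c', c]$ and $[-c, -c']$ (using \ref{ass:rho-function-psiunimodal}), so $\mathcal{F}$ is a sum of finitely many bounded monotone-transformation-of-affine classes, each of which is VC-subgraph, and finite sums of Glivenko--Cantelli classes with bounded envelopes are Glivenko--Cantelli. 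An alternative, if one wants to avoid empirical-process machinery entirely, is a direct covering argument: because $\mathcal F$ is uniformly bounded and the map $(\mat v, s) \mapsto \eta_{\mat v, s}$ is continuous in an appropriate sense, compactify the index set (the effective range of $(1/s, \mat v/s)$ can be shown to matter only on a compact set since $\varphi$ vanishes off $[-c,c]$, pinning down the relevant scale/direction combinations), cover it by finitely many balls, apply the ordinary SLLN at the centers, and control the oscillation within each ball — but the VC route is cleaner and I would lead with it.
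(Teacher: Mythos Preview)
Your proposal is correct and follows essentially the same route as the paper's own proof: bound the class by a constant envelope, recognize $\eta_{\mat v,s}$ as the composition of $\varphi$ with a $(p{+}1)$-dimensional affine (hence VC-subgraph) family, handle the non-monotonicity of $\varphi$ by decomposing it into monotone pieces via assumption~\ref{ass:rho-function-psiunimodal}, and then invoke a standard VC Glivenko--Cantelli theorem. The paper's only cosmetic difference is that it writes the decomposition of $\varphi$ as a $\max/\min$ of monotone functions (and cites the corresponding preservation lemmas from van~der~Vaart--Wellner) rather than as a sum, but the underlying idea is identical.
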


\begin{proof}[Proof of Lemma~\ref{lem:slln}]
We will show step by step that the space $\mathscr{F} = \{\eta_{\mat v, s} : \mat v \in \Rm{p}, s \in (0, \infty) \}$ is a bounded Vapnik–Chervonenkis (VC) class of functions and hence Glivenko-Cantelli.
The space $\mathscr{F}$ is bounded because $\varphi(t)$ is bounded by assumptions on $\rho$.
Define the mapping
\begin{displaymath}
g_{\mat v, s} := \left\{
\begin{array}{ll}
	\Rm{p+1} & \to \mathbb{R} \\
	\left(
		\begin{array}{c}
			u \\ \mat x
		\end{array}
	\right) & \mapsto (u - \mat x\tr \mat v) s^{-1}
\end{array}
\right. .
\end{displaymath}
The corresponding function space $\mathscr{G} = \{ g_{\mat v, s} : \mat v \in \Rm{p}, s \in (0, \infty) \}$ is a subset of a finite-dimensional vector space with dimension $\operatorname{dim}(\mathscr{G}) = p + 1$.
Therefore, $\mathscr{G}$ is VC with VC index $V(\mathscr{G}) \leq p + 3$ according to Lemma~2.6.15 in \textcite{Vandervaart1996}.
Due to the assumptions on $\rho$, the function $\varphi(t)$ can be decomposed into
\begin{displaymath}
\varphi(t) = \max \{ \min\{\varphi_1(t), \varphi_2(t) \}, \min\{\varphi_1(-t), \varphi_2(-t) \} \}
\end{displaymath}
with $\varphi_{1,2}$ monotone functions.
Thus, $\Phi_{1,2} = \{ \varphi_{1,2} (g(\cdot)) : g \in \mathscr{G} \}$ and $\Phi_{1,2}^{(-)} = \{ \varphi_{1,2} (-g(\cdot)) : g \in \mathscr{G} \}$ are also VC due to Lemma~2.6.18~(iv) and (viii) in \textcite{Vandervaart1996}.
Using Lemma~2.6.18~(i) in \textcite{Vandervaart1996} then leads to  $\Phi = \Phi_1 \land \Phi_2$ and $\Phi^{(-)} = \Phi_1^{(-)} \land \Phi_2^{(-)}$ also being VC.
Finally, $\mathscr{F} = \Phi \lor \Phi^{(-)}$ is VC because of Lemma~2.6.18~(ii).
Since $\mathscr{F}$ is bounded, Theorem~2.4.3 in \textcite{Vandervaart1996} concludes the proof.
\end{proof}

\begin{lemma}\label{lem:as-convergence-vn-as}
Let $(y_i, \mat x_i\tr)$, $i = 1, \dotsc, n$, be i.i.d.~observations with distribution $H_0$ satisfying \eqref{ass:joint-distribution} and $u_i = y_i - \mat x_i\tr \btrue$.
Under assumptions \ref{ass:reg-1}, \ref{ass:reg-2} and if $\mat \beta_n^* = \btrue + \mat v_n$ with $\lim_{n \to \infty} \| \mat v_n \|  = 0$ a.s., then we have
\begin{enumerate}[label=(\alph*)]
\item \label{lem:as-convergence-vn-as-scale} almost sure convergence of the estimated M-scale to the population M-scale of the error distribution
\begin{displaymath}
	\lim_{n \to \infty} \mhscaleast \almostsure \omscaleb
\end{displaymath}
\item \label{lem:as-convergence-vn-as-denom} and almost sure convergence of
\begin{displaymath}
	\lim_{n \to \infty}
	\frac{1}{n} \sum_{i = 1}^{n} \varphi\left( \frac{u_i - \mat x_i\tr \mat v_n} {\mhscaleast} \right) =  \EV[F_0]{\varphi \left( \frac{\randE}{\omscaleb} \right) } 
	\quad \text{a.s.}
\end{displaymath}
\end{enumerate}
\end{lemma}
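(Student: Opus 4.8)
The plan is to obtain both statements from a single uniform strong law of large numbers, applied once to the class $\{\rho((u-\mat x\tr\mat v)/s)\}$ that defines the M-scale and once to the class $\{\varphi((u-\mat x\tr\mat v)/s)\}$ already controlled by Lemma~\ref{lem:slln}. Write $\mat v_n:=\mat\beta_n^*-\btrue$, so $\|\mat v_n\|\to 0$ a.s., and for $s\in(0,\infty)$, $\mat v\in\Rm p$ put $M_n(s,\mat v)=\frac1n\sum_{i=1}^n\rho((u_i-\mat x_i\tr\mat v)/s)$ and $M(s,\mat v)=\EV[H_0]{\rho((\randU-\randX\tr\mat v)/s)}$; recall that when $\mhscaleast>0$ it is the smallest $s$ with $M_n(s,\mat v_n)=\delta$. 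Since $\rho$ is bounded and, being even and non-decreasing in $|t|$, can be written as $\rho(t)=\max\{\bar\rho(t),\bar\rho(-t)\}$ with $\bar\rho$ non-decreasing, the VC argument used to prove Lemma~\ref{lem:slln} applies verbatim to show $\{\rho((u-\mat x\tr\mat v)/s):\mat v\in\Rm p,\,s>0\}$ is a bounded VC (hence Glivenko–Cantelli) class, so
$$\sup_{\mat v\in\Rm p,\;s>0}\bigl|M_n(s,\mat v)-M(s,\mat v)\bigr|\almostsure 0 .$$

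For part~\ref{lem:as-convergence-vn-as-scale} I first note that, under \ref{ass:reg-2}, the map $s\mapsto M(s,\mat 0)$ is continuous, non-increasing, tends to $1$ as $s\downarrow 0$ (because $\randU\neq 0$ a.s., so $\rho(\randU/s)\to 1$) and to $0$ as $s\uparrow\infty$, and is strictly decreasing wherever it lies in $(0,1)$: if $M(s,\mat 0)=M(s',\mat 0)$ with $s<s'$ then $\rho(u/s)=\rho(u/s')$ for $f_0$-a.e.\ $u$, which fails for small $u$ since $f_0>0$ there and $\rho$ is strictly increasing on $(0,c')$ by \ref{ass:rho-function-psiunimodal}. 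Hence there is a unique $s_0$ solving $M(s_0,\mat 0)=\delta$, and by definition $s_0=\omscaleb$. Now work on the probability-one event where $\mat v_n\to\mat 0$ and the display above holds, and pick deterministic $0<s_1<s_2<\infty$ with $M(s_1,\mat 0)>\delta>M(s_2,\mat 0)$. Since $M(s_k,\cdot)$ is continuous at $\mat 0$ and $|M_n-M|\to 0$ uniformly, for $n$ large $M_n(s_1,\mat v_n)>\delta>M_n(s_2,\mat v_n)$; as $M_n(\cdot,\mat v_n)$ is continuous and non-increasing, this forces $\mhscaleast\in(s_1,s_2]$ with $M_n(\mhscaleast,\mat v_n)=\delta$, and in particular $\mhscaleast>0$. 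Any subsequence has a further subsequence along which $\mhscaleast\to s_*\in[s_1,s_2]$, and then $\delta=M_n(\mhscaleast,\mat v_n)=M(\mhscaleast,\mat v_n)+o(1)\to M(s_*,\mat 0)$ by joint continuity of $M$ and $\mat v_n\to\mat 0$; uniqueness of $s_0$ gives $s_*=\omscaleb$, hence $\mhscaleast\almostsure\omscaleb$.

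For part~\ref{lem:as-convergence-vn-as-denom}, apply Lemma~\ref{lem:slln} with the (data-dependent, but admissible, since the supremum there is over all arguments) choices $\mat v=-\mat v_n$ and $s=\mhscaleast$:
$$\frac1n\sum_{i=1}^n\varphi\bigl((u_i-\mat x_i\tr\mat v_n)/\mhscaleast\bigr)-\EV[H_0]{\varphi\bigl((\randU-\randX\tr\mat v_n)/\mhscaleast\bigr)}\almostsure 0 .$$
On the probability-one event where additionally $\mat v_n\to\mat 0$ and $\mhscaleast\to\omscaleb>0$ (from part~\ref{lem:as-convergence-vn-as-scale}), for $H_0$-a.e.\ $(\randU,\randX)$ the argument $(\randU-\randX\tr\mat v_n)/\mhscaleast$ tends to $\randU/\omscaleb$, so by continuity and boundedness of $\varphi$ (from \ref{ass:rho-function}) and dominated convergence, $\EV[H_0]{\varphi((\randU-\randX\tr\mat v_n)/\mhscaleast)}\to\EV[H_0]{\varphi(\randU/\omscaleb)}=\EV[F_0]{\varphi(\randU/\omscaleb)}$, the last equality by independence of $\randU$ and $\randX$. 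Adding the two displayed convergences gives part~\ref{lem:as-convergence-vn-as-denom}.

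The delicate point is that $\mat v_n$, hence $\mhscaleast$, is random and data-dependent, so an ordinary SLLN at a fixed argument is unavailable: one genuinely needs the Glivenko–Cantelli control over all pairs $(\mat v,s)$, including the degenerate regimes $s\downarrow 0$ and $s\uparrow\infty$, together with the monotonicity-based squeeze $s_1<\mhscaleast\le s_2$ that keeps the estimated scale bounded away from $0$ and $\infty$. Everything else is continuity and dominated convergence.
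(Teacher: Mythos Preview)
Your proof is correct. Part~\ref{lem:as-convergence-vn-as-denom} is essentially identical to the paper's argument: both use the uniform Glivenko--Cantelli statement of Lemma~\ref{lem:slln} to replace the empirical average by its population counterpart at the random pair $(\mat v_n,\mhscaleast)$, and then pass to the limit inside the expectation via continuity and boundedness of $\varphi$ together with dominated convergence.

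Part~\ref{lem:as-convergence-vn-as-scale}, however, takes a genuinely different route. The paper dispatches this in one line by citing Theorem~3.1 of Yohai~(1987), treating the convergence of $\mhscaleast$ as a known fact about M-scale estimators under $\mat v_n\to\mat 0$ a.s. You instead give a self-contained argument: you extend the VC/Glivenko--Cantelli machinery of Lemma~\ref{lem:slln} to the $\rho$-class (which is legitimate since $\rho$, being even and non-decreasing in $|t|$, has an even simpler monotone decomposition than $\varphi$), establish strict monotonicity and uniqueness of the population root $s_0=\omscaleb$ via~\ref{ass:reg-2} and~\ref{ass:rho-function-psiunimodal}, bracket $\mhscaleast$ away from $0$ and $\infty$ by a monotonicity squeeze, and then identify subsequential limits. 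What the paper's approach buys is brevity; what yours buys is independence from an external reference and an explicit display of the mechanism---in particular your last paragraph correctly isolates the genuinely delicate point, namely that the data-dependence of $(\mat v_n,\mhscaleast)$ forces a uniform (not pointwise) SLLN, which is exactly what the VC argument delivers.
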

\begin{proof}[Proof of Lemma~\ref{lem:as-convergence-vn-as}]
The first result \ref{lem:as-convergence-vn-as-scale} is a direct consequence of the conditions of the lemma ($u - \mat x \tr \mat v_n \to u$ a.s.) and Theorem~3.1 in \textcite{Yohai1987}.

For part \ref{lem:as-convergence-vn-as-denom}, it is know from Lemma~\ref{lem:slln} the empirical process converges uniformly almost sure.
Since $\omscaleb > 0$, the continuous mapping theorem gives $\frac{u_i - \mat x_i\tr \mat v_n} {\mhscaleast} \to \frac{\randE} {\omscaleb}$ almost surely.
Finally, due to the continuity and boundedness of $\varphi$:
\begin{equation}
\EV[H_0]{\varphi \left( \frac{\randE - \randX\tr \mat v_n} {\mhscaleast} \right)}
\almostsuren
\EV[F_0]{\varphi \left( \frac{\randE}{\omscaleb} \right) }
\end{equation}
which concludes the proof.
\end{proof}

\begin{lemma}\label{lem:uniform-convergence_v_compact}
Let $(y_i, \mat x_i\tr)$, $i = 1, \dotsc, n$, be i.i.d.~observations with distribution $H_0$ satisfying \eqref{ass:joint-distribution} and $u_i = y_i - \mat x_i\tr \btrue$.
Under regularity conditions \ref{ass:reg-1}--\ref{ass:reg-3} and if $\mat v \in K \subset \Rm{p}$ with $K$ compact and $\mat \beta_n^* = \btrue + \mat v / \sqrt{n}$, then
\begin{enumerate}[label=(\alph*)]
\item the M-scale estimate converges uniformly almost sure
\begin{equation}\label{eqn:lemma3-statement-scale}
\sup_{\mat v \in K} \left| \mhscaleast - \omscaleb \right| \almostsure 0,
\end{equation}
\item for every $\epsilon > 0$ with $\epsilon < \EV[F_0]{\varphi \left( \frac{\randE}{\omscaleb} \right) }$ the uniform bound over $\mat v \in K$
\begin{equation}\label{eqn:lemma3-statement}
	\sup_{\mat v \in K} \left|
	\frac{\mhscaleast} { \frac{1}{n} \sum_{i = 1}^{n} \varphi\left( \frac{u_i - \mat x_i\tr \mat v / \sqrt{n}} {\mhscaleast} \right) }
	\right| < \frac{ \epsilon + \omscaleb } { \EV[F_0]{\varphi \left( \frac{\randE}{\omscaleb} \right) } - \epsilon }
\end{equation}
holds with arbitrarily high probability if $n$ is sufficiently large.
\end{enumerate}
\end{lemma}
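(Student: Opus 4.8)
The overall plan is to derive both parts from the uniform strong law in Lemma~\ref{lem:slln} and part~\ref{lem:as-convergence-vn-as-scale} of Lemma~\ref{lem:as-convergence-vn-as}, exploiting that the perturbations $\mat v/\sqrt n$ with $\mat v\in K$ shrink uniformly to $\mat 0$ as $n\to\infty$ because $K$ is compact. Throughout write $\hat\sigma_n(\mat v):=\mhscaleast$ for the M-scale of the residuals $u_i-\mat x_i\tr\mat v/\sqrt n$ and $C_K:=\sup_{\mat v\in K}\|\mat v\|$, recalling that $\hat\sigma_n(\mat v)$ solves $\tfrac1n\sum_i\rho\bigl((u_i-\mat x_i\tr\mat v/\sqrt n)/\hat\sigma_n(\mat v)\bigr)=\delta$ once the residuals are not almost all zero, which by \ref{ass:reg-1} holds for $n$ large.

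For part~(a) I would first show that, almost surely, $a\le\hat\sigma_n(\mat v)\le b$ for all $\mat v\in K$ once $n$ is large, for suitable constants $0<a<b<\infty$. The upper bound uses monotonicity of the M-scale in the residual magnitudes: since $|u_i-\mat x_i\tr\mat v/\sqrt n|\le|u_i|+\|\mat x_i\|C_K/\sqrt n=:R_i^+$, one has $\hat\sigma_n(\mat v)\le\hat\sigma_M\bigl((R_i^+)_i\bigr)$, whose right-hand side does not depend on $\mat v$ and tends to $\omscaleb$ by the strong law (using $\tfrac1n\sum_i\|\mat x_i\|\to\EV{\|\randX\|}<\infty$ from \ref{ass:reg-3} and Lipschitz continuity of $\rho$). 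For the lower bound the key point is a uniform-in-$\mat v$ control of the fraction of residuals in a small window: Glivenko--Cantelli for the VC class of slabs $\{(u,\mat x)\mapsto\mathbf 1[\,|u-\mat x\tr\mat w|<\eta\,]:\mat w\in\Rm p\}$, together with $\randX$ independent of $\randE$ and \ref{ass:reg-2}, gives $\sup_{\mat v\in K}\tfrac1n\#\{i:|u_i-\mat x_i\tr\mat v/\sqrt n|<\eta\}\le 2\int_0^\eta f_0(u)\,du+o(1)$, which can be forced below $1-\delta$ by taking $\eta$ small, so $\tfrac1n\sum_i\rho\bigl((u_i-\mat x_i\tr\mat v/\sqrt n)/a\bigr)>\delta$ for $a$ small enough and hence $\hat\sigma_n(\mat v)>a$. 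With the scale confined to $[a,b]$ I would then apply the uniform strong law -- the VC-class argument of Lemma~\ref{lem:slln} carries over to $\rho$ in place of $\varphi$, in fact more easily since $\rho$ is monotone in $|t|$ -- to get $\sup_{\mat v\in K,\,s\in[a,b]}\bigl|\tfrac1n\sum_i\rho\bigl((u_i-\mat x_i\tr\mat v/\sqrt n)/s\bigr)-\EV[F_0]{\rho(\randE/s)}\bigr|\to 0$ a.s., where the passage from $\EV[H_0]{\rho((\randE-\randX\tr\mat v/\sqrt n)/s)}$ to $\EV[F_0]{\rho(\randE/s)}$, uniformly in $(\mat v,s)$, again uses Lipschitz continuity of $\rho$, $s\ge a$, and $C_K/\sqrt n\to 0$. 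Since $s\mapsto\EV[F_0]{\rho(\randE/s)}$ is continuous and strictly decreasing near $\omscaleb$, where it equals $\delta$ (by \ref{ass:rho-function} and \ref{ass:reg-2}), evaluating that uniform convergence at $s=\hat\sigma_n(\mat v)$ and using $\tfrac1n\sum_i\rho\bigl((u_i-\mat x_i\tr\mat v/\sqrt n)/\hat\sigma_n(\mat v)\bigr)=\delta=\EV[F_0]{\rho(\randE/\omscaleb)}$ yields $\sup_{\mat v\in K}|\hat\sigma_n(\mat v)-\omscaleb|\to 0$ a.s. (A shorter route: pick a measurable maximiser $\mat v_n^*\in\argmax_{\mat v\in K}|\hat\sigma_n(\mat v)-\omscaleb|$, note $\|\mat v_n^*/\sqrt n\|\le C_K/\sqrt n\to 0$, and apply Lemma~\ref{lem:as-convergence-vn-as}\ref{lem:as-convergence-vn-as-scale} to $\mat v_n^*/\sqrt n$.)

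For part~(b) I would bound numerator and denominator separately and uniformly over $\mat v\in K$. With $E:=\EV[F_0]{\varphi(\randE/\omscaleb)}$, part~(a) already gives $\sup_{\mat v\in K}\hat\sigma_n(\mat v)<\omscaleb+\epsilon$ for $n$ large. For the denominator $D_n(\mat v):=\tfrac1n\sum_i\varphi\bigl((u_i-\mat x_i\tr\mat v/\sqrt n)/\hat\sigma_n(\mat v)\bigr)$, note that $D_n(\mat v)=P_n\,\eta_{-\mat v/\sqrt n,\,\hat\sigma_n(\mat v)}$ in the notation of Lemma~\ref{lem:slln}; since that lemma's convergence is uniform over all shift vectors and all scales it applies verbatim with the random scale $\hat\sigma_n(\mat v)$, giving $\sup_{\mat v\in K}\bigl|D_n(\mat v)-\EV[H_0]{\varphi\bigl((\randE-\randX\tr\mat v/\sqrt n)/\hat\sigma_n(\mat v)\bigr)}\bigr|\to 0$ a.s. The remaining population term converges to $E$ uniformly in $\mat v\in K$: the map $(\mat w,s)\mapsto\EV[H_0]{\varphi((\randE-\randX\tr\mat w)/s)}$ is uniformly continuous on $\{\|\mat w\|\le 1\}\times[\omscaleb/2,\,2\omscaleb]$ (by continuity and boundedness of $\varphi$ and \ref{ass:reg-3}), and by part~(a) together with $\|\mat v/\sqrt n\|\le C_K/\sqrt n$ the points $(-\mat v/\sqrt n,\hat\sigma_n(\mat v))$ eventually lie in that set and in any prescribed neighbourhood of $(\mat 0,\omscaleb)$, where the value is $E$. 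Hence $\inf_{\mat v\in K}D_n(\mat v)>E-\epsilon$ for $n$ large, and combining the two bounds gives $\sup_{\mat v\in K}\hat\sigma_n(\mat v)/D_n(\mat v)<(\epsilon+\omscaleb)/(E-\epsilon)$ on an event of probability tending to one, which is the claim.

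I expect the main obstacle to be part~(a), in particular the uniform lower bound $\hat\sigma_n(\mat v)\ge a>0$ over $\mat v\in K$: ruling out implosion of the M-scale requires the slab Glivenko--Cantelli estimate combined with \ref{ass:reg-2} and does not follow from the pointwise statement of Lemma~\ref{lem:as-convergence-vn-as}, while verifying that the VC argument of Lemma~\ref{lem:slln} transfers from $\varphi$ to $\rho$ is needed but routine. Given part~(a) and the fully uniform form of Lemma~\ref{lem:slln}, part~(b) is a mechanical combination, the only delicate point being to route the random scale $\hat\sigma_n(\mat v)$ through the supremum in Lemma~\ref{lem:slln} rather than trying to fix it beforehand.
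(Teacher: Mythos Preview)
Your proposal is correct and, for part~(b), follows essentially the same path as the paper: bound the numerator via part~(a), bound the denominator from below by combining the uniform SLLN of Lemma~\ref{lem:slln} (applied with the random scale inserted) with uniform continuity of $(\mat w,s)\mapsto\EV[H_0]{\varphi((\randE-\randX\tr\mat w)/s)}$ on a compact neighbourhood of $(\mat 0,\omscaleb)$. The paper phrases the latter via Arzel\`a--Ascoli and a subsequence/contradiction argument, but the substance is the same.

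For part~(a) your main route is genuinely different from the paper's. The paper splits $\sup_{\mat v\in K}|\hat\sigma_n(\mat v)-\omscaleb|$ into $\sup|\hat\sigma_n(\mat v)-\mscale{\mat\beta_n^*}|$ and $\sup|\mscale{\mat\beta_n^*}-\omscaleb|$, invoking Lemma~4.5 of \textcite{Yohai1986} for the first and a short subsequence/dominated-convergence argument for the second. Your argument is self-contained: confine $\hat\sigma_n(\mat v)$ to a compact interval $[a,b]$ (the lower bound via a slab Glivenko--Cantelli estimate and \ref{ass:reg-2} is the non-trivial part), transfer the VC argument of Lemma~\ref{lem:slln} to $\rho$, and invert via strict monotonicity of $s\mapsto\EV[F_0]{\rho(\randE/s)}$ at $\omscaleb$. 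This buys independence from the Yohai reference at the cost of spelling out the implosion bound. One caveat: your parenthetical ``shorter route'' (pick $\mat v_n^*\in\argmax_{\mat v\in K}|\hat\sigma_n(\mat v)-\omscaleb|$ and apply Lemma~\ref{lem:as-convergence-vn-as}\ref{lem:as-convergence-vn-as-scale}) is not safe as stated, because $\mat v_n^*$ is sample-dependent and Lemma~\ref{lem:as-convergence-vn-as} (via Theorem~3.1 in \textcite{Yohai1987}) is formulated for sequences $\mat v_n$ not coupled to the data; using it here would beg the uniform statement you are trying to prove. Drop that aside and rely on your longer argument.
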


\begin{proof}[Proof of Lemma~\ref{lem:uniform-convergence_v_compact}]
The proof for~\eqref{eqn:lemma3-statement-scale} relies on Lemma~4.5 from \textcite{Yohai1986} which states that under the same conditions as for this lemma, the following holds:
\begin{displaymath}
\sup_{\mat v \in K} | \mhscaleast - \mscale{\mat\beta_n^*} | \almostsure 0.
\end{displaymath}
Therefore, the missing step is to show that $\sup_{\mat v \in K} | \mscale{\mat\beta_n^*} - \omscaleb | \to 0$ almost surely as $n \to \infty$.
This is done by contradiction.

Assume there exists a subsequence $(n_k)_{k>0}$ such that for all $k$, $\sup_{\mat v \in K} | \mscale{\mat\beta_n^*} - \omscaleb | > \epsilon > 0$. Since $\mat v \in K$ with $K$ a compact set, for every sequence $\mat v_n$ there exists a subsequence $(\mat v_{n_k})_k$ such that $| \mscale{\btrue + \mat v_{n_k} / \sqrt{n_k}} - \omscaleb | > \epsilon$ for all $n_k > N_\epsilon$.
Therefore, either one of the following holds: (i)~$\mscale{\btrue + \mat v_{n_k} / \sqrt{n_k}} > \omscaleb + \epsilon$ or (ii)~$\mscale{\btrue + \mat v_{n_k} / \sqrt{n_k}} < \omscaleb - \epsilon$.
In the first case (i) it is know that
\begin{align*}
\rho\left( \frac{\randE - \randX \tr \mat v_{n_k} / \sqrt{n}} {\mscale{\btrue + \mat v_{n_k} / \sqrt{n_k}}} \right) <
	\rho\left( \frac{\randE - \randX\tr \mat v_{n_k} / \sqrt{n}} {\omscaleb + \epsilon} \right)
	\to \rho\left( \frac{\randE} {\omscaleb + \epsilon} \right).
\end{align*}
Due to the boundedness of $\rho$, the dominated convergence theorem gives
\begin{align*}
\EV[H_0]{\rho\left( \frac{\randE - \randX \tr \mat v_{n_k} / \sqrt{n}} {\mscale{\btrue + \mat v_{n_k} / \sqrt{n_k}}} \right)} <
	\EV[H_0]{\rho\left( \frac{\randE - \randX \tr \mat v_{n_k} / \sqrt{n}} {\omscaleb + \epsilon} \right)}
	\to \EV[H_0]{\rho\left( \frac{\randE} {\omscaleb + \epsilon} \right)} < \delta
\end{align*}
which contradicts the definition of $\mscale{\btrue + \mat v_{n_k} / \sqrt{n_k}}$.
In case~(ii) similar steps yield
$$
\EV[H_0]{\rho\left( \frac{\randE - \randX \tr \mat v_{n_k} / \sqrt{n}} {\mscale{\btrue + \mat v_{n_k} / \sqrt{n_k}}} \right)} > \delta
$$
for all $n_k > N$ with $N$ large enough.
Therefore, the assumption $\sup_{\mat v \in K} | \mscale{\mat\beta_n^*} - \omscaleb | > \epsilon > 0$ can not be valid and hence $\sup_{\mat v \in K} | \mscale{\mat\beta_n^*} - \omscaleb | \to 0$.
This concludes the proof of \eqref{eqn:lemma3-statement-scale}.

Before proving \eqref{eqn:lemma3-statement}, note that $\epsilon$ is well defined because $\EV[F_0]{\varphi \left( \frac{\randE}{\omscaleb} \right) } > 0$ as per Lemma~6 in \textcite{Smucler2018}.
To prove \eqref{eqn:lemma3-statement}, we first bound the denominator uniformly over $\mat v \in K$.
From Lemma~\ref{lem:slln} it is known that the empirical processes converge almost surely, uniformly over $\mat v \in K$ and $s > 0$.
As a next step, we show the deterministic uniform convergence of
\begin{equation}\label{eqn:lemma3-uniform-conv}
\sup_{\substack{\mat v \in K \\ s \in [\omscaleb - \epsilon_1, \omscaleb + \epsilon_1]}} \left|
		\EV[H_0]{f_n(\randE, \randX, \mat v, s) } -
		\EV[H_0]{\varphi \left( \frac{\randE}{s} \right) }
	\right| \to 0,
\end{equation}
where $f_n(\randE, \randX, \mat v, s)$ is defined as
\begin{displaymath}
f_n(\randE, \randX, \mat v, s) := \varphi \left( \frac{\randE -  \randX\tr \mat v / \sqrt{n}}{s} \right).
\end{displaymath}
The functions $f_n(\randE, \randX, \mat v, s)$ are bounded and converge pointwise to $\varphi \left( \frac{\randE}{s} \right)$, entailing pointwise convergence of $\EV[H_0]{f_n(\randE, \randX, \mat v, s)} \to \EV[F_0]{\varphi \left( \frac{\randE}{s} \right)}$ as $n \to \infty$ by the dominated convergence theorem.
Because $\rho$ has bounded second derivative, the derivative of $f_n(\randE, \randX, \mat v, s)$ with respect to $\mat v \in K$ and $s \in [\omscaleb - \epsilon_1, \omscaleb + \epsilon_1]$ is also bounded, meaning $f_n(\randE, \randX, \mat v, s)$ is equicontinuous on this domain.
Pointwise convergence together with the equicontinuity make the Arzelà-Ascoli theorem applicable and hence conclude that~\eqref{eqn:lemma3-uniform-conv} holds.

From \eqref{eqn:lemma3-statement-scale} it follows that for any $\delta_2 > 0$ there is a $N_{\delta_2}$ such that for all $\mat v \in K$ and all $n > N_{\delta_2}$, $\mathbb{P}\left( | \mhscaleast - \omscaleb | \leq \epsilon_1 \right) > 1 - \delta_2$. Combined with \eqref{eqn:lemma3-uniform-conv} this yields that for every $\delta_2 > 0$ and $\epsilon_2 > 0$ there is an $N_{\delta_2,\epsilon_2}$ such that for all $n > N_{\delta_2,\epsilon_2}$ and every $\mat v \in K$
\begin{displaymath}
\left|
	\EV[H_0]{f_n(\randE, \randX, \mat v, \mhscaleast)} -
	\EV[F_0]{\varphi \left( \frac{\randE}{ \mhscaleast } \right) }
\right| < \epsilon_2
\end{displaymath}
with probability greater than $1 - \delta_2$.
Since both expected values are positive this can also be written as
\begin{equation}\label{eqn:lemma3-denom-lb-1}
	\EV[H_0]{f_n(\randE, \randX, \mat v, \mhscaleast) } > \EV[F_0]{\varphi \left( \frac{u}{ \mhscaleast } \right) } - \epsilon_2.
\end{equation}

The final piece for the denominator to be bounded is to show that
\begin{equation}\label{eqn:lemma3-denom-lb-2}
\sup_{\mat v \in K} \left|
	\EV[H_0]{\varphi \left( \frac{\randE}{ \mhscaleast } \right) } -
	\EV[F_0]{\varphi \left( \frac{\randE}{\omscaleb} \right) }
\right| \almostsuren 0.
\end{equation}
Set $\Omega_1 = \{ \omega: \mhscale{\mat \beta_n^*; \omega} \to \omscaleb \}$ which has $\mathbb{P}(\Omega_1) = 1$ due to the first part of this lemma.
Similarly, set $\Omega_2 =  \{ \omega: \text{ equation \eqref{eqn:lemma3-denom-lb-2} holds} \}$.
Assume now that $\mathbb{P}(\Omega_1 \cap \Omega_2^\mathsf{c}) > 0$.
This assumption entails that there exists an $\omega' \in \Omega_1 \cap \Omega_2^\mathsf{c}$, an $\epsilon_3 > 0$ and a subsequence $(n_k)_{k>0}$ such that
\begin{equation}\label{eqn:lemma3-denom-false}
\lim_{k \to \infty} \left|
	\EV[H_0]{\varphi \left( \frac{\randE}{ \mhscale{ \btrue + \frac{ \mat v_{n_k} } { \sqrt{n_k} }; \omega' } } \right) } -
	\EV[F_0]{\varphi \left( \frac{\randE}{\omscaleb} \right) }
\right| > \epsilon_3.
\end{equation}
However, since $\mat v_{n_k}$ is in the compact set $K$, the sequence $\btrue + \mat v_{n_k} / \sqrt{n_k}$ converges to $\btrue$ as $n \to \infty$.
Additionally, $\varphi$ is bounded and together with the dominated convergence theorem this leads to
\begin{displaymath}
\lim_{k \to \infty} \EV[H_0]{\varphi \left( \frac{\randE}{ \mhscale{ \btrue + \mat v_{n_k} / \sqrt{n_k}; \omega' } } \right) }
	= \EV[F_0]{\varphi \left( \frac{\randE}{ \omscaleb } \right) }
\end{displaymath}
and in turn to
\begin{displaymath}
\lim_{k \to \infty} \left|
	\EV[H_0]{\varphi \left( \frac{\randE}{ \mhscale{ \btrue + \mat v_{n_k} / \sqrt{n_k}; \omega' } } \right)  } -
	\EV[F_0]{\varphi \left( \frac{\randE}{\omscaleb} \right) }
\right| = 0
\end{displaymath}
contradicting the claim in \eqref{eqn:lemma3-denom-false}.
Therefore, $\mathbb{P}(\Omega_1 \cap \Omega_2^\mathsf{c}) = 0$, proving \eqref{eqn:lemma3-denom-lb-2}.
Combining \eqref{eqn:lemma3-denom-lb-1} and \eqref{eqn:lemma3-denom-lb-2} leads to the conclusion that with arbitrarily high probability for large enough $n$
\begin{equation}\label{eqn:lemma3-denom}
\left|
	\EV[H_0]{\varphi \left( \frac{\randE -  \randX\tr \mat v / \sqrt{n}}{\mhscaleast} \right) }
\right| > -\epsilon_4 + \EV[F_0]{\varphi \left( \frac{\randE}{\omscaleb} \right) }
\end{equation}
for every $\mat v \in K$.

From the first part of this lemma, $\mhscaleast \almostsure \omscaleb$, and due to~\eqref{eqn:lemma3-denom}, for every $\delta > 0$ and every $0 < \epsilon < \EV[F_0]{\varphi \left( \frac{\randE}{\omscaleb} \right)}$ there exists an $N_{\delta, \epsilon}$ such that for all $\mat v \in K$ and $n \geq N_{\delta, \epsilon}$ equation~\eqref{eqn:lemma3-statement} holds.
\end{proof}

\subsection{Root-n consistency}\label{sec:appendix-proof-root-n-adapense}

\begin{proposition}\label{prop:strong-consistency}
Let $(y_i, \mat x_i\tr)$, $i = 1, \dotsc, n$, be i.i.d observations with distribution $H_0$ satisfying \eqref{ass:joint-distribution}.
Under assumptions \ref{ass:reg-1} and \ref{ass:reg-2}, PENSE and adaptive PENSE are both strongly consistent estimators of the true regression parameter $\trueparam$.
Specifically,
\begin{enumerate}[label=(\roman*)] 
\item\label{prop:strong-consistency-pense}
if $\lambdap_n \to 0$, the PENSE estimator $\pense$ as defined in~\eqref{def:pense} satisfies $\pense \xrightarrow{a.s.} \trueparam$;

\item\label{prop:strong-consistency-adapense}
if $\lambdaadap_n \to 0$, the adaptive PENSE estimator $\adapense$ as defined in~\eqref{def:adaptive-pense} satisfies $\adapense \xrightarrow{a.s.} \trueparam$.
\end{enumerate}
\end{proposition}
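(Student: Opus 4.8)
The plan is to run the standard argmin-consistency argument for M-type estimators, now driven by the S-loss instead of the least-squares loss and over a random design, in four steps: (1) identify the limiting population criterion and show it is uniquely minimized at $\btrue$; (2) show the empirical minimizers are almost surely eventually confined to a fixed compact set; (3) upgrade to a uniform strong law on that set; and (4) close the argument by a subsequence/continuity argument. I would prove part~\ref{prop:strong-consistency-pense} (PENSE) first and then feed its conclusion into part~\ref{prop:strong-consistency-adapense} (adaptive PENSE), since the adaptive penalty loadings $|\tilde\beta_j|^{-\zeta}$ are built from the already-consistent preliminary estimate $\bpense$. As in the appendix I would drop the intercept; the argument for the full $\trueparam$ is identical.

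For~\ref{prop:strong-consistency-pense} I would first evaluate the criterion at the truth, $\objf[S](\btrue)=\omhscalesqb+\Phi_\subt{EN}(\btrue;\lambdap_n,\alphap)$, and note that $\omhscalesqb\almostsure\omscalebsq$ by consistency of the M-scale at the true residuals (Theorem~3.1 of \textcite{Yohai1987}, i.e.\ Lemma~\ref{lem:as-convergence-vn-as}\ref{lem:as-convergence-vn-as-scale} with $\mat v_n=\mat 0$), so that $\objf[S](\btrue)\almostsure\omscalebsq$ since $\lambdap_n\to 0$. Because $\Phi_\subt{EN}\ge 0$ and $\objf[S](\bpense)\le\objf[S](\btrue)$, this already forces $\limsup_n\mhscalesq{\bpense}\le\omscalebsq$ almost surely. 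Next, from~\ref{ass:reg-1} I would extract, by a covering argument over the unit sphere, a radius $R$ such that for all large $n$ and every $\mat\beta$ with $\|\mat\beta\|>R$ the empirical fraction of residuals $|y_i-\mat x_i\tr\mat\beta|$ below any fixed threshold stays $<1-\delta$, so $\mhscale{\mat\beta}$ is bounded below by a fixed constant strictly exceeding $\omscaleb$; combined with the previous bound this yields $\limsup_n\|\bpense\|\le R<\infty$ a.s. Restricting to a compact ball $K\ni\btrue$ eventually containing $\bpense$, a Glivenko-Cantelli / VC-class argument as in Lemma~\ref{lem:slln} (or Lemma~4.5 of \textcite{Yohai1986}) gives $\sup_{\mat\beta\in K}|\mhscalesq{\mat\beta}-\mscalesq{\mat\beta}|\almostsure 0$, while $\sup_{\mat\beta\in K}\Phi_\subt{EN}(\mat\beta;\lambdap_n,\alphap)\to 0$. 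Any a.s.\ subsequential limit $\mat\beta^\star$ of $\bpense$ then satisfies $\mscalesq{\mat\beta^\star}\le\mscalesq{\btrue}$. Finally, under~\ref{ass:reg-1}--\ref{ass:reg-2} the population M-scale $\mat\beta\mapsto\mscale{\mat\beta}$ has a strict global minimum at $\btrue$ -- the Fisher consistency of the S-functional, following from symmetry and unimodality of $f_0$ together with~\ref{ass:reg-1}, as in \textcite{Yohai1987} and \textcite{CohenFreue2019} -- so $\mat\beta^\star=\btrue$, and since every subsequence admits a further subsequence converging to $\btrue$, $\bpense\almostsure\btrue$.

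For~\ref{prop:strong-consistency-adapense} the same four steps carry over with a single modification, which is exactly where part~\ref{prop:strong-consistency-pense} is used. At the truth the adaptive penalty reads $\Phi_\subt{AE}(\btrue;\lambdaadap_n,\alphaadap,\zeta,\bpense)=\lambdaadap_n\sum_{j=1}^s|\tilde\beta_j|^{-\zeta}\bigl(\tfrac{1-\alphaadap}{2}(\beta^0_j)^2+\alphaadap|\beta^0_j|\bigr)$, the trailing $p-s$ coordinates contributing nothing at $\btrue$; since $\bpense\almostsure\btrue$ each loading $|\tilde\beta_j|^{-\zeta}\to|\beta^0_j|^{-\zeta}<\infty$ for $j\le s$, so $\lambdaadap_n\to 0$ gives $\objf[AS](\btrue)\almostsure\omscalebsq$. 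From here the boundedness step and the subsequence step go through verbatim; importantly, one never needs the adaptive penalty to vanish uniformly on $K$ (it may in fact blow up off the true support), only its nonnegativity and its a.s.\ limit at $\btrue$. This yields $\badapense\almostsure\btrue$.

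The main obstacle I anticipate is not the generic argmin-consistency machinery but the two S-estimator-specific ingredients everything rests on: (i) the strict uniqueness of the population minimizer of $\mscale{\mat\beta}$, which must be teased out carefully from the unimodality in~\ref{ass:reg-2} together with the identifiability condition~\ref{ass:reg-1}; and (ii) the a.s.\ boundedness of the minimizers, which needs a bound on the proportion of near-zero values of $\mat x\tr\mat u$ that is uniform over the unit sphere, again powered by~\ref{ass:reg-1}. Both amount to adapting the fixed-design arguments of \textcite{CohenFreue2019} to a random design, with the random-design S-Ridge analysis of \textcite{Smucler2017} serving as a template; the elastic-net and adaptive-elastic-net penalties enter only through being nonnegative and a.s.\ negligible at $\btrue$, so they add no genuine difficulty.
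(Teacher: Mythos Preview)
Your proposal is correct and follows essentially the same route the paper takes: the paper does not spell out a proof but simply notes that the argument is identical to the strong-consistency proof for the S-Ridge and adaptive MM-LASSO in \textcite{Smucler2017}, with the only observation being that the (adaptive) elastic-net penalty becomes negligible since $\lambdap_n,\lambdaadap_n\to 0$. Your four-step argmin-consistency sketch (value at the truth, a.s.\ boundedness via~\ref{ass:reg-1}, uniform SLLN on a compact, uniqueness of the population minimizer via~\ref{ass:reg-1}--\ref{ass:reg-2}) is exactly that template, and your handling of part~\ref{prop:strong-consistency-adapense}---using only nonnegativity of $\Phi_\subt{AE}$ together with its a.s.\ vanishing at $\btrue$ once $\bpense\almostsure\btrue$---is the right way to absorb the possibly unbounded adaptive loadings.
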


For part~\ref{prop:strong-consistency-pense}, the proof is identical to the proof of  strong consistency for the S-Ridge estimator (Proposition~1.i) in \textcite{Smucler2017} and hence omitted.
Although the penalty functions used for the S-Ridge and PENSE are different, the growth condition on $\lambdap_n$ has the same effect on PENSE as on the S-Ridge; making the penalty term negligible for large enough $n$.

Similarly, for part~\ref{prop:strong-consistency-adapense}, noting that the level of $L_2$ penalization given by $\lambdaadap_n (1 - \alphaadap) / 2$ converges deterministically to 0, the proof of strong consistency of adaptive PENSE is otherwise identical to the proof of strong consistency of adaptive MM-LASSO given in Smucler and Yohai (2017).

\begin{proposition}\label{prop:root-n-consistency}
Let $(y_i, \mat x_i\tr)$, $i = 1, \dotsc, n$, be i.i.d.~observations with distribution $H_0$ satisfying \eqref{ass:joint-distribution}.
Under assumptions \ref{ass:reg-1}--\ref{ass:reg-3} PENSE and adaptive PENSE are both root-n consistent estimators of the true regression parameter $\trueparam$. Specifically,
\begin{enumerate}[label=(\roman*)] 
\item\label{prop:root-n-consistency-pense}
if $\lambdap_n \to 0$ and $\lambdap_n = O(1 / \sqrt{n})$, the PENSE estimator $\pense$ as defined in~\eqref{def:pense}, satisfies: $\| \pense - \trueparam \| = O_p(1/\sqrt{n})$;

\item\label{prop:root-n-consistency-adapense}
if $\lambdaadap_n \to 0$ and $\lambdaadap_n = O(1 / \sqrt{n})$, the adaptive PENSE estimator $\adapense$ as defined in~\eqref{def:adaptive-pense}, satisfies: $\| \adapense - \trueparam \| = O_p(1/\sqrt{n})$.
\end{enumerate}
\end{proposition}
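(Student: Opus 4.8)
Since the S-loss enters the PENSE and adaptive PENSE objectives in identical fashion, I would treat the two estimators in parallel: let $\hmat\beta$ stand for either $\bpense$ or $\badapense$, put $\hmat\Delta := \hmat\beta - \btrue$, and recall from Proposition~\ref{prop:strong-consistency} that $\hmat\Delta \almostsure \mat 0$, which is what allows me to localise the expansions below to a shrinking neighbourhood of $\btrue$. For the S-loss part the argument is the one used for the S-Ridge estimator in \textcite{Smucler2017}; the (adaptive) elastic net penalty is then handled by a perturbation argument, which for adaptive PENSE borrows from the treatment of the adaptive MM-LASSO and uses part~\ref{prop:root-n-consistency-pense} to guarantee the preliminary estimate $\bpense$ is itself root-n consistent.

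The starting point is the stationarity condition of the global minimiser. Because $\mhscale{\hmat\beta}>0$ almost surely (under \ref{ass:reg-2} the minimiser does not interpolate a proportion $1-\delta$ of the data), the map $\mhscalesq{\cdot}$ is differentiable at $\hmat\beta$; writing $\hat r_i := u_i - \mat x_i\tr\hmat\Delta$ and $\gamma_n := \frac{1}{n}\sum_{i=1}^n\varphi(\hat r_i/\mhscale{\hmat\beta})$, implicit differentiation of the defining equation $\frac{1}{n}\sum_i\rho(\hat r_i/\mhscale{\hmat\beta}) = \delta$ gives $\nabla_{\mat\beta}\,\mhscalesq{\hmat\beta} = -\,\frac{2\,\mhscalesq{\hmat\beta}}{\gamma_n}\,\frac{1}{n}\sum_i\psi(\hat r_i/\mhscale{\hmat\beta})\mat x_i$, so there is a subgradient $\mat g\in\partial\Phi(\hmat\beta)$ of the penalty (that is, of $\Phi_\subt{EN}(\cdot;\lambdap_n,0)$ for PENSE and of $\Phi_\subt{AE}(\cdot;\lambdaadap_n,\alphaadap,\zeta,\bpense)$ for adaptive PENSE) with
\begin{displaymath}
\frac{2\,\mhscalesq{\hmat\beta}}{\gamma_n}\cdot\frac{1}{n}\sum_{i=1}^n \psi\!\left(\frac{\hat r_i}{\mhscale{\hmat\beta}}\right)\mat x_i \;=\; \mat g .
\end{displaymath}
By Lemma~\ref{lem:as-convergence-vn-as}, $\mhscale{\hmat\beta}\almostsure\omscaleb$ and $\gamma_n\almostsure\EV[F_0]{\varphi(\randE/\omscaleb)}>0$. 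A first-order Taylor expansion of $\psi$ in $\hmat\Delta$, combined with a Glivenko--Cantelli statement for $\frac{1}{n}\sum_i\psi'(u_i/\sigma)\mat x_i\mat x_i\tr$ over $\sigma$ in a neighbourhood of $\omscaleb$ (whose envelope $\|\psi'\|_\infty\|\mat x\|^2$ is integrable by \ref{ass:reg-3}) and the stochastic boundedness of the empirical process $\sigma\mapsto\frac{1}{\sqrt{n}}\sum_i\psi(u_i/\sigma)\mat x_i$ (a Donsker class with square-integrable envelope $\|\psi\|_\infty\|\mat x\|$ whose summands are mean zero because $\psi$ is odd, $F_0$ symmetric, and $\randX\perp\randU$), yields
\begin{displaymath}
\frac{1}{n}\sum_{i=1}^n\psi\!\left(\frac{\hat r_i}{\mhscale{\hmat\beta}}\right)\mat x_i \;=\; \mat b_n \;-\; \frac{1}{\mhscale{\hmat\beta}}\,\mat M_n\,\hmat\Delta \;+\; \mat r_n,
\end{displaymath}
where $\|\mat b_n\| = O_p(1/\sqrt{n})$, $\mat M_n := \frac{1}{n}\sum_i\psi'(u_i/\mhscale{\hmat\beta})\mat x_i\mat x_i\tr \almostsure b(\psi,F_0)\,\EV[G_0]{\randX\randX\tr}$ which is positive definite by \ref{ass:reg-3} (with $b(\psi,F_0) := \EV[F_0]{\psi'(\randE/\omscaleb)}>0$ under \ref{ass:rho-function-psiunimodal} and \ref{ass:reg-2}), and $\|\mat r_n\| = o_p(\|\hmat\Delta\|) + o_p(1/\sqrt{n})$.

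It remains to bound $\mat g$. For PENSE, $\mat g = \lambdap_n\hmat\beta = O_p(\lambdap_n) = O_p(1/\sqrt{n})$, so $|\mat g\tr\hmat\Delta| = O_p(1/\sqrt{n})\,\|\hmat\Delta\|$. For adaptive PENSE, component $j$ of $\mat g$ equals $\lambdaadap_n|\tilde\beta_j|^{-\zeta}\big((1-\alphaadap)\hat\beta_j + \alphaadap s_j\big)$ with $s_j\in\partial|\hat\beta_j|$; on the relevant block $j\le s$ one has $|\tilde\beta_j|^{-\zeta} = O_p(1)$ (since $\tilde\beta_j\to\beta^0_j\ne0$ by Proposition~\ref{prop:strong-consistency}) and hence $|g_j| = O_p(\lambdaadap_n) = O_p(1/\sqrt{n})$, while on the irrelevant block $j>s$ one has $\hat\beta_j = \hmat\Delta_j$ and $g_j\hmat\Delta_j\ge0$, so that $-\mat g\tr\hmat\Delta \le O_p(1/\sqrt{n})\,\|\hmat\Delta\|$ in both cases. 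Substituting the expansion into the stationarity identity, taking the inner product with $\hmat\Delta$, and using these bounds gives
\begin{displaymath}
\frac{2\,\mhscalesq{\hmat\beta}}{\gamma_n\,\mhscale{\hmat\beta}}\;\hmat\Delta\tr\mat M_n\,\hmat\Delta \;\le\; O_p(1/\sqrt{n})\,\|\hmat\Delta\| \;+\; o_p(\|\hmat\Delta\|^2).
\end{displaymath}
The coefficient on the left converges a.s.\ to the positive constant $2\,\omscaleb\,b(\psi,F_0)/\EV[F_0]{\varphi(\randE/\omscaleb)}$ and $\mat M_n$ is asymptotically positive definite, so dividing through by $\|\hmat\Delta\|$ forces $\|\hmat\Delta\| = O_p(1/\sqrt{n})$, which is the assertion.

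The step I expect to be the main obstacle is showing that the Taylor remainder $\mat r_n$ is genuinely $o_p(\|\hmat\Delta\|) + o_p(1/\sqrt{n})$ under the single moment assumption \ref{ass:reg-3}: a crude bound of the second-order term by $\|\hmat\Delta\|^2\,n^{-1}\sum_i\|\mat x_i\|^3$ is inadmissible since $\randX$ need not possess a third moment. This is handled exactly as in \textcite{Smucler2017}, exploiting that $\psi$ and $\psi'$ are bounded with compact support -- so that only observations with $|u_i|$ or $|u_i - \mat x_i\tr\hmat\Delta|$ below a fixed multiple of $\mhscale{\hmat\beta}$ contribute to $\mat r_n$, and any term in which both of these exceed $c\,\mhscale{\hmat\beta}$ vanishes identically -- together with the integral form of the first-order remainder and the uniform convergence statements of Lemmas~\ref{lem:slln}--\ref{lem:uniform-convergence_v_compact}; the auxiliary estimate $|\mhscale{\hmat\beta} - \omscaleb| = O_p(\|\hmat\Delta\| + 1/\sqrt{n})$ used along the way follows from the same implicit-differentiation identity applied at $\btrue$ together with the central limit theorem for the M-scale at the true parameter.
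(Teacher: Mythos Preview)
Your argument is correct and shares all the key ingredients with the paper's proof --- a Taylor expansion of the S-loss gradient, a quadratic term from $\psi'$, the $O_p(1/\sqrt{n})$ score at $\btrue$, and a separate bound on the penalty contribution --- but it is framed differently. You start from the first-order stationarity condition $\mat 0 \in \nabla\mhscalesq{\hmat\beta} + \partial\Phi(\hmat\beta)$ and take the inner product with $\hmat\beta - \btrue$; the paper instead works from the minimality inequality $\objf[AS](\badapense) \leq \objf[AS](\btrue)$ and expands the objective by the mean value theorem. After expansion both routes produce the same linear-versus-quadratic inequality in $\|\hmat\beta - \btrue\|$, and your subgradient bound $-\mat g\tr(\hmat\beta - \btrue) \leq O_p(n^{-1/2})\|\hmat\beta - \btrue\|$ is the convex-dual counterpart of the paper's bound $\Phi(\badapense) - \Phi(\btrue) \geq -D n^{-1/2}\|\badapense - \btrue\|$ (both drop the inactive block using nonnegativity and control the active block via consistency of $\bpense$). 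The paper's route has the minor advantage that, by evaluating the $\psi'$-matrix at an intermediate point rather than at $\btrue$, it sidesteps entirely the remainder issue you single out in your final paragraph; your subgradient bookkeeping, on the other hand, makes the sign argument on the inactive coordinates ($g_j\hat\beta_j \geq 0$ for $j>s$) slightly more transparent than the paper's penalty-difference computation.
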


\begin{proof}
We will prove part~\ref{prop:root-n-consistency-adapense}, as part~\ref{prop:root-n-consistency-pense} is essentially the same.
The only difference is that the penalty loadings are deterministic and fixed at $(1, \dotsc, 1)\tr$ leading to a different constant $D$ below.

To ease the notation for the proof, the adaptive elastic net penalty is simply denoted by $\Phi(\mat\beta) = \Phi_\subt{AE}(\mat \beta; \lambdaadap, \alphaadap, \zeta, \mat\omega)$.
Also, $\mat \gamma(t) := \btrue + t (\badapense - \btrue)$ denotes the convex combination of the true parameter $\btrue$ and the adaptive PENSE estimator $\badapense$.

The first step in the proof is a Taylor expansion of the objective function around the true parameter $\btrue$:
\begin{align*}
\mhscaleadasq + \Phi(\badapense) = & \omhscalesqb + \Phi(\btrue) + (\Phi(\badapense) - \Phi(\btrue)) \\
	&- 2 \underbrace{ \frac{1} { \frac{1}{n} \sum_{i = 1}^{n} \varphi\left( \frac{u_i - \mat x_i\tr \mat v_n} {\mhscaleast} \right) } }_{=: A_n}
		\underbrace{ \frac{ \mhscaleast }{n} \sum_{i = 1}^n \psi\left( \frac{u_i - \mat x_i\tr \mat v_n} {\mhscaleast} \right) \mat x_i\tr \mat v_n }_{=: Z_n}
\end{align*}
where $\mat  v_n = \tau (\badapense - \btrue)$ and $\mat\beta_n^* = \btrue + \mat v_n$ for a $0 < \tau < 1$.
Due to the strong consistency of $\badapense$ from Proposition~\ref{prop:strong-consistency}, $\mat v_n \to 0$ a.s.\ and hence from Lemma~\ref{lem:as-convergence-vn-as} and the continuous mapping theorem it is know that $A_n \almostsure \frac{ 1 } { \EV[F_0]{\varphi \left( \frac{\randE}{\omscaleb} \right) } } =: A > 0$ as well as $\mhscaleast \almostsure \omscaleb$.
The term $Z_n$ is handled by a Taylor expansion of $\psi\left( \frac{u_i - \mat x_i\tr \mat v_n } {\mhscaleast} \right)$ around $u_i$ to get
\begin{align*}
Z_n &= \mhscaleast \left(
	\frac{1}{n} \sum_{i = 1}^n \psi\left( \frac{u_i } {\mhscaleast} \right) \mat x_i\tr \mat v_n -
		\frac{1}{ \mhscaleast n} \sum_{i = 1}^n \psi\left( \frac{u_i - \mat x_i\tr \mat v^*_n} {\mhscaleast} \right) \mat x_i\tr \mat v_n \mat x_i\tr \mat v_n
	\right) \\
	&= \frac{ (\badapense - \btrue)\tr } {\sqrt{n}} \left[ \tau \mhscaleast \frac{1}{\sqrt{n}} \sum_{i = 1}^n \psi\left( \frac{u_i } {\mhscaleast} \right) \mat x_i \right] \\
	&\quad - \tau^2 (\badapense - \btrue)\tr \left[ \frac{1}{n} \sum_{i = 1}^n \psi' \left( \frac{u_i - \mat x_i\tr \mat v^*_n} {\mhscaleast} \right) \mat x_i \mat x_i\tr \right] (\badapense - \btrue)
\end{align*}
for some $\mat v^*_n = \tau^* \mat v_n$ with $\tau^* \in (0, 1)$.

The rest of the proof follows closely the proof of Proposition~2 in \textcite{Smucler2017}.
More specifically, noting that $\mhscaleast \almostsure \omscaleb$, the results in \textcite{Smucler2017} (which are derived from results in \textcite{Yohai1985}) state that
\begin{displaymath}
B_n := \left\| \mat\xi_n \right\| = O_p(1)
\quad\text{with}\quad
\mat\xi_n = \tau \mhscaleast \frac{1}{\sqrt{n}} \sum_{i = 1}^n \psi\left( \frac{u_i } {\mhscaleast} \right) \mat x_i
\end{displaymath}
and hence with arbitrarily high probability for $n$ sufficiently large there is a $B$ such that
\begin{equation}\label{eqn:root-n-consistency-bound-bn}
\frac{ (\badapense - \btrue)\tr } {\sqrt{n}} \mat\xi_n \leq \frac{ 1 } {\sqrt{n}} \| \badapense - \btrue \| \| \mat\xi_n\| \leq \frac{ B } {\sqrt{n}} \| \badapense - \btrue \|.
\end{equation}
Similarly, the results in \textcite{Smucler2017} can be used to show
\begin{equation}\label{eqn:root-n-consistency-bound-cn}
C_n := \tau^2 (\badapense - \btrue)\tr \left[ \frac{1}{n} \sum_{i = 1}^n \psi' \left( \frac{u_i - \mat x_i\tr \mat v^*_n} {\mhscaleast} \right) \mat x_i \mat x_i\tr \right] (\badapense - \btrue) \geq \widetilde C_n \| \badapense - \btrue \|^2
\end{equation}
with $\widetilde C_n \almostsure C > 0$.

Next is the difference in the penalty terms $D_n := \Phi(\badapense) - \Phi(\btrue)$, which can be reduced to the truly non-zero coefficients:
\begin{align*}
D_n =&  \lambdaadap_{n} \sum_{j = 1}^p | \bpenseel_j |^{-\zeta} \left( \frac{1 - \alpha}{2}  \left( (\badapenseel_j)^2 - (\beta^0_j)^2 \right) + \alpha ( |\badapenseel_j| - |\beta^0_j| ) \right) \\
	\geq& \lambdaadap_{n} \sum_{j = 1}^s | \bpenseel_j |^{-\zeta} \left( \frac{1 - \alpha}{2}  \left( (\badapenseel_j)^2 - (\beta^0_j)^2 \right) + \alpha ( |\badapenseel_j| - |\beta^0_j| ) \right).
\end{align*}
Observing that $\badapense$ is a strongly consistent estimator, $| \badapenseel_j - \beta^0_j | < \epsilon_j < | \beta^0_j |$ for all $j = 1, \dotsc, s$ and any $\epsilon_j \in (0, | \beta^0_j |)$ with arbitrarily high probability for sufficiently large $n$.
This entails that, for all $0 \leq t \leq 1$ and $j = 1, \dotsc, s$, the sign of the convex combination $\operatorname{sgn}(\gamma_j(t)) = \operatorname{sgn}(\beta^0_j) \neq 0$ and thus $| \gamma_j(t) |$ is differentiable.
This allows application of the mean value theorem on the quadratic and the absolute term in $D_n$ to yield
\begin{align*}
D_n \geq&  \lambdaadap_{n} \sum_{j = 1}^s | \bpenseel_j |^{-\zeta} \left( \frac{1 - \alpha}{4} \gamma_j(\tau_j) + \alpha  \operatorname{sgn}(\beta^0_j) \right) (\badapenseel_j - \beta^0_j)
\end{align*}
for some $\tau_j \in (0, 1)$, $j = 1, \dotsc, s$, with arbitrarily high probability for large enough $n$.
Because both $\bpense$ and $\badapense$ are strongly consistent for $\btrue$ and $\lambdaadap_{n} = O(1 / \sqrt{n})$, there exists a constant $D$ such that with arbitrarily high probability
\begin{equation}\label{eqn:root-n-consistency-bound-dn}
D_n \geq - \frac{D}{\sqrt{n}} \| \badapense - \btrue \|
\end{equation}
for sufficiently large $n$.

Since $\badapense$ minimizes the adaptive PENSE objective function $\objf[AS]$,
\begin{align*}
0 \geq& \objf[AS](\badapense) - \objf[AS](\btrue)
	= \mhscaleadasq + \Phi(\badapense) - \omhscalesqb - \Phi(\btrue)
	= D_n - 2 A_n Z_n.
\end{align*}
With the bounds derived in \eqref{eqn:root-n-consistency-bound-bn}, \eqref{eqn:root-n-consistency-bound-cn}, and \eqref{eqn:root-n-consistency-bound-dn} this in turn yields
\begin{align*}
0 \geq& D_n - 2 A_n Z_n = D_n - 2 A_n B_n + 2 A_n C_n \\
	\geq& - \frac{D}{\sqrt{n}} \| \badapense - \btrue \| - 2 A \frac{ B } {\sqrt{n}}  \| \badapense - \btrue \| + 2 A C \| \badapense - \btrue \|^2 \\
	=& \frac{1}{\sqrt{n}} \| \badapense - \btrue \| \left( - D - 2 A B + 2 A C \sqrt{n} \| \badapense - \btrue \| \right)
\end{align*}
with arbitrarily high probability for large enough $n$.
Rearranging the terms leads to the inequality
\begin{displaymath}
\sqrt{n} \| \badapense - \btrue \| \leq \frac{2 A B + D}{2 A C}.
\end{displaymath}
\end{proof}

\subsection{Variable selection consistency}\label{sec:appendix-proof-varsel-cons-adapense}

\begin{proof}[Proof of Theorem~\ref{thm:asymptotic-properties}, part~\ref{thm:var-sel-consistency}]
To ease notation in the following, we denote the coordinate-wise adaptive EN penalty function by $$
\phi(\beta; \lambdaadap_{n}, \alphaadap, \zeta, \tilde\beta) = \lambdaadap_{n} |\tilde\beta |^{-\zeta} \left( \frac{1 - \alphaadap}{2} \beta^2 + \alphaadap | \beta | \right)
$$
such that $\lambdaadap_{n} \Phi_\subt{AE}(\mat\beta; \alphaadap, \zeta, \bpense) = \sum_{j=1}^p \phi(\beta_j; \lambdaadap_{n}, \alphaadap, \zeta, \bpenseel_j)$.
We follow the proof in \textcite{Smucler2017} and define the function
\begin{align*}
V_n(\mat v_1, \mat v_2) := & \mhscalesq{\btrue[\subRnum{1}] + \mat v_1 / \sqrt{n}, \btrue[\subRnum{2}]+ \mat v_2 / \sqrt{n}} + \\
	& \sum_{j=1}^s \phi(\beta^0_j + v_{1,j} / \sqrt{n}; \lambdaadap_{n}, \alphaadap, \zeta, \bpenseel_j) + \\
	& \sum_{j=s + 1}^p \phi(\beta^0_j + v_{2,j-s} / \sqrt{n}; \lambdaadap_{n}, \alphaadap, \zeta, \bpenseel_j).
\end{align*}
From Proposition~\ref{prop:root-n-consistency} follows with arbitrarily high probability, $\| \badapense - \btrue \| \leq C / \sqrt{n}$ for sufficiently large $n$.
Therefore, with arbitrarily high probability $V_n(\mat v_1, \mat v_2)$ attains its minimum on the compact set $\left\{ (\mat v_1, \mat v_2) : \|\mat v_1\|^2 + \|\mat v_2\|^2 \leq C^2 \right\}$ at $\badapense$.
The goal is to show that for any $ \|\mat v_1\|^2 + \|\mat v_2\|^2 \leq C^2$ with $\| \mat v_2 \| > 0$ and with arbitrarily high probability, $V_n(\mat v_1, \mat v_2) - V_n(\mat v_1, \mat 0_{p-s}) > 0$ for sufficiently large $n$.

Taking the difference while observing that $\btrue[\subRnum{2}] = \mat 0_{p-s}$ gives
\begin{align*}
V_n(\mat v_1, \mat v_2) - V_n(\mat v_1, \mat 0_{p-s}) = & 
	\left( \mhscalesq{\btrue[\subRnum{1}] + \mat v_1 / \sqrt{n}, \mat v_2 / \sqrt{n}} - \mhscalesq{\btrue[\subRnum{1}] + \mat v_1 / \sqrt{n}, \mat 0_{p -s}} \right) + \\
	& \sum_{j=s + 1}^p \phi(v_{2,j-s} / \sqrt{n}; \lambdaadap_{n}, \alphaadap, \zeta, \bpenseel_j).
\end{align*}
The first term can be bounded by defining $\mat v_n(t) :=  (\mat v_1\tr, t \mat v_2\tr)\tr / \sqrt{n}$ and applying the mean value theorem gives some $\tau \in (0, 1)$ such that
\begin{align*}
\mhscalesq{\btrue + \mat v_n(1)} - \mhscalesq{\btrue + \mat v_n(0)} = & \\
	\frac{2}{\sqrt{n}} \mhscale{\btrue + \mat v_n(\tau)} (\mat 0_s\tr, \mat v_2\tr) \subgrad{\mat\beta}\mhscale{\mat\beta} \evalat{\btrue + \mat v_n(\tau)} = & \\
	- \frac{2}{\sqrt{n}}
		\underbrace{ \frac{\mhscale{\btrue + \mat v_n(\tau)}} {\frac{1}{n} \sum_{i = 1}^n \varphi \left( \frac{u_i - \mat x_i\tr \mat v_n(\tau)} {\mhscale{\btrue + \mat v_n(\tau)}} \right) } }_{=: A_n}
		\underbrace{ (\mat 0_s\tr, \mat v_2\tr) \frac{1}{n} \sum_{i = 1}^n \psi \left( \frac{u_i - \mat x_i\tr \mat v_n(\tau)} {\mhscale{\btrue + \mat v_n(\tau)}} \right) \mat x_i}_{=: B_n} &.
\end{align*}
By Lemma~\ref{lem:uniform-convergence_v_compact} the term $A_n$ is uniformly bounded in probability, hence $|A_n| < A$ with arbitrarily high probability for large enough $n$.
Furthermore, $|B_n| \leq \| \psi \|_\infty \| \mat v_2 \|  \left\| \frac{1}{n} \sum_{i = 1}^n \mat x_i \right\|$ and due to the law of large numbers there is a constant $B$ such that the upper bound for $|B_n|$ is
\begin{displaymath}
|B_n| \leq \| \psi \|_\infty \| \mat v_2 \|  ( \| \EV[G_0]{\randX} \| + \epsilon) < \| \mat v_2 \| B
\end{displaymath}
with arbitrarily high probability for sufficiently large $n$.
Together, the bounds for $A_n$ and $B_n$ give
\begin{equation}\label{eqn:thm-var-sel-consistency-lossdiff-bound}
\mhscalesq{\btrue + \mat v_n(1)} - \mhscalesq{\btrue + \mat v_n(0)} \geq 
	- \frac{\| \mat v_2 \|}{\sqrt{n}} 2 A B.
\end{equation}

The next step is to ensure that the penalty term grows large enough to make the difference $V_n(\mat v_1, \mat v_2) - V_n(\mat v_1, \mat 0_{p-s})$ positive.
Indeed, the assumption $\alphaadap > 0$ leads to
\begin{align*}
\sum_{j=s + 1}^p \phi(v_{2,j-s} / \sqrt{n}; \lambdaadap_{n}, \alphaadap, \zeta, \omega_j)
	&\geq \alphaadap \lambdaadap_{n} \sum_{j=s + 1}^p | \bpenseel_j |^{-\zeta} \frac{ | v_{2, j - s} | } { \sqrt{n} } \\
	&= \alphaadap \lambdaadap_{n} n^{(\zeta -1)/2} \sum_{j=s + 1}^p | \sqrt{n} \bpenseel_j |^{-\zeta} | v_{2, j - s} |.
\end{align*}
The root-n consistency of $\bpense$ established in Proposition~\ref{prop:root-n-consistency} gives $|\sqrt{n} \bpenseel_j | < M$ with arbitrarily high probability for large enough $n$.
Therefore,
\begin{equation}\label{eqn:thm-var-sel-consistency-penalty-bound}
\begin{aligned}
\alphaadap \lambdaadap_{n} n^{(\zeta -1)/2} \sum_{j=s + 1}^p \frac{ | v_{2, j - s} | } { | \sqrt{n} \bpenseel_j |^\zeta }
	&> \alphaadap \lambdaadap_{n} n^{(\zeta -1)/2} \sum_{j=s + 1}^p M^{-\zeta} | v_{2, j - s} | \\
	&= \alphaadap\lambdaadap_{n} n^{(\zeta -1)/2} M^{-\zeta} \| v_2 \|_1 \\
	&\geq \frac{ \| v_2 \| } { \sqrt{n} } M^{-\zeta} \alphaadap \lambdaadap_{n} n^{\zeta/2}.
\end{aligned}
\end{equation}
Combining~\eqref{eqn:thm-var-sel-consistency-lossdiff-bound} and~\eqref{eqn:thm-var-sel-consistency-penalty-bound} yields
\begin{equation}\label{eqn:thm-var-sel-consistency-combined-bound}
V_n(\mat v_1, \mat v_2) - V_n(\mat v_1, \mat 0_{p-s}) >
	\frac{\| \mat v_2 \|}{\sqrt{n}} \left( - 2 A B + M^{-\zeta} \alphaadap \lambdaadap_{n} n^{\zeta/2} \right)
\end{equation}
uniformly over $\mat v_1$ and $\mat v_2$ with arbitrarily high probability for sufficiently large $n$.
By assumption $\alphaadap \lambdaadap_{n} n^{\zeta/2} \to \infty$ and hence the right-hand side in \eqref{eqn:thm-var-sel-consistency-combined-bound} will eventually be positive, concluding the proof.
\end{proof}

\subsection{Asymptotic Normal distribution}\label{sec:appendix-proof-adapense-normal}
\begin{proof}[Proof of Theorem~\ref{thm:asymptotic-properties}, part~\ref{thm:asymptotic-normality}]
For this proof we denote the values of the active predictors and the active predictors in the $i$-th observation by $\mat X_\subRnum{1}$ and $\mat x_{i,\subRnum{1}}$, respectively.
Because $\badapense$ is strongly consistent for $\btrue$, the coefficient values for the truly active predictors are almost surely bounded away from zero if $n$ is large enough.
This entails that the partial derivatives of the penalty function exist for the truly active predictors and the gradient at the estimate $\badapense$ is
\begin{equation}
\label{eqn:asymnorm-partial-deriva}
\mat 0_s = \subgrad{\mat\beta_\subRnum{1}} \objf[AS] (\badapense) =
- 2 \frac{\mhscaleada} {A_n} \frac{1}{n} \sum_{i=1}^n \psi \left( \frac{ y_i - \mat x_i\tr \badapense} {\mhscaleada}  \right) \mat x_{i,\subRnum{1}}
+ \subgrad{\mat\beta_\subRnum{1}} \Phi_\subt{AE} (\badapense; \lambdaadap_{n}, \alphaadap, \zeta, \bpense)
\end{equation}
with $A_n = \frac{1}{n} \sum_{i=1}^n \varphi \left( \frac{ y_i - \mat x_i\tr \badapense} {\mhscaleada}  \right)$.
The truly active coefficients can be separated from the truly inactive coefficients by noting that $\psi \left( \frac{ y_i - \mat x_i\tr \badapense} {\mhscaleada}  \right) = \psi \left( \frac{ y_i - \mat x_{i,\subRnum{1}}\tr \badapense[\subRnum{1}]} {\mhscaleada} \right) + o_i$ for some $o_i$ which vanishes in probability, $\mathbb{P}(o_i = 0) \to 1$, because of Theorem~\ref{thm:asymptotic-properties}, part~\ref{thm:var-sel-consistency} and because $\psi$ is continuous.
Equation~\eqref{eqn:asymnorm-partial-deriva} can now be written as
\begin{displaymath}
\begin{aligned}
\mat 0_s =&
- 2 \frac{\mhscaleada} {A_n} \frac{1}{\sqrt{n}} \sum_{i=1}^n \psi \left( \frac{ y_i - \mat x_{i,\subRnum{1}}\tr \badapense_{\subRnum{1}}} {\mhscaleada}  \right) \mat x_{i,\subRnum{1}} \\
&\quad- 2 \frac{\mhscaleada} {A_n} \frac{1}{\sqrt{n}} \sum_{i=1}^n o_i \mat x_{i,\subRnum{1}} \\
&\quad + \sqrt{n} \subgrad{\mat\beta_\subRnum{1}} \Phi_\subt{AE} (\badapense; \lambdaadap_{n}, \alphaadap, \zeta, \bpense)
\end{aligned}
\end{displaymath}
and using the mean value theorem there are $\tau_i \in [0, 1]$ and hence a matrix
$$
\mat W_n = \frac{1}{n} \sum_{i=1}^n \psi' \left( \frac{ u_i - \tau_i \mat x_{i,\subRnum{1}}\tr \left( \badapense[\subRnum{1}] - \btrue[\subRnum{1}] \right) } {\mhscaleada}  \right) \mat x_{i,\subRnum{1}} \mat x_{i,\subRnum{1}}\tr$$
such that the equation can be further rewritten to
\begin{displaymath}
\begin{aligned}
\mat 0_s &=
- 2 \frac{\mhscaleada} {A_n} \frac{1}{\sqrt{n}} \sum_{i=1}^n \psi \left( \frac{ y_i - \mat x_{i,\subRnum{1}}\tr \btrue[\subRnum{1}]} {\mhscaleada}  \right) \mat x_{i,\subRnum{1}} \\
&\quad + 2 \frac{1} {A_n} \mat W_n \sqrt{n} \left( \badapense[\subRnum{1}] - \btrue[\subRnum{1}] \right) \\
&\quad - 2 \frac{\mhscaleada} {A_n} \frac{1}{\sqrt{n}} \sum_{i=1}^n o_i \mat x_{i,\subRnum{1}} \\
&\quad + \sqrt{n} \lambdaadap_{n} \subgrad{\mat\beta_\subRnum{1}} \Phi_\subt{AE} (\badapense; \alphaadap, \zeta, \bpense).
\end{aligned}
\end{displaymath}
Separating the term $\sqrt{n} \left( \badapense[\subRnum{1}]^* - \btrue[\subRnum{1}] \right)$ then gives
\begin{equation}
\label{eqn:asymnorm-final-form}
\begin{aligned}
\sqrt{n} \left( \badapense[\subRnum{1}]^* - \btrue[\subRnum{1}] \right) &=
\mhscaleada \mat W_n^{-1} \frac{1}{\sqrt{n}} \sum_{i=1}^n \psi \left( \frac{ y_i - \mat x_{i,\subRnum{1}}\tr \btrue[\subRnum{1}]} {\mhscaleada}  \right) \mat x_{i,\subRnum{1}} \\
&\quad + \mhscaleada \mat W_n^{-1} \frac{1}{\sqrt{n}} \sum_{i=1}^n o_i \mat x_{i,\subRnum{1}} \\
&\quad + \sqrt{n} \lambdaadap_{n} \mhscaleada A_n \mat W_n^{-1} \subgrad{\mat\beta_\subRnum{1}} \Phi_\subt{AE} (\badapense; \alphaadap, \zeta, \bpense).
\end{aligned}
\end{equation}

The strong consistency of $\badapense$ for $\btrue$ and Lemma~\ref{lem:as-convergence-vn-as} lead to $\mhscaleada \xrightarrow{a.s.} \omscaleb$ and $A_n \xrightarrow{a.s.} \EV[F_0]{\varphi \left( \frac{\randE}{\omscaleb} \right) } < \infty$.
Also, because of $\mhscaleada \xrightarrow{a.s.} \omscaleb$, Lemma~4.2 in \textcite{Yohai1985}, and the law of large numbers
$$
\mat W_n \xrightarrow{a.s.} b(\rho, F_0) \mat \Sigma_{\subRnum{1}}.
$$
Combined with the assumption that $\sqrt{n} \lambdaadap_{n} \to 0$ this leads to the last two lines in \eqref{eqn:asymnorm-final-form} converging to $\mat 0_s$ in probability.
Finally by Lemma~5.1 in \textcite{Yohai1985} and the CLT
\begin{displaymath}
\frac{1}{\sqrt{n}} \sum_{i=1}^n \psi \left( \frac{ y_i - \mat x_{i,\subRnum{1}}\tr \btrue[\subRnum{1}]} {\mhscaleada}  \right) \mat x_{i,\subRnum{1}} \xrightarrow{~d~}
N_s \left( \mat 0_s, a(\rho, F_0) \mat \Sigma_\subRnum{1} \right)
\end{displaymath}
which, after applying Slutsky's Theorem, completes the proof.
\end{proof}

\newpage


\section{Additional simulation results}

Here we present additional plots for the scenario detailed in Section~\ref{sec:simulation} as well as an alternative scenario with more severe contamination.

For the first scenario, Figure~\ref{fig:simstudy-scenario_1-prediction_accuracy} shows plots for prediction accuracy including the least-squares EN and adaptive least-squares EN estimators for comparison.
Moreover, in Figure~\ref{fig:simstudy-scenario_1-prediction_accuracy-relative} we show the prediction accuracy relative to the prediction accuracy achieved by adaptive PENSE, i.e.,
\begin{equation}\label{eqn:simstudy-rel-pred_err}
\text{RPP}_m = \begin{cases}
\hat\tau_m / \hat\tau_\subt{Ada.\ PENSE} - 1 & \text{if } \hat\tau_\subt{Ada.\ PENSE} \leq \hat\tau_m \\
- \hat\tau_\subt{Ada.\ PENSE} / \hat\tau_m + 1 & \text{if } \hat\tau_\subt{Ada.\ PENSE} > \hat\tau_m
\end{cases}.
\end{equation}
Here $m$ is the estimation method (e.g., adaptive MM), and $\hat\tau_m$ is the scale of the prediction error achieved by method $m$.
Hence, a value of $\text{RPP}_m = 0.1$ means that the scale of the prediction error of method $m$ is 10\% larger than that of adaptive PENSE, and a value of $e_m = -0.1$ means that the scale of the prediction error of adaptive PENSE is 10\% higher than that of method $m$.
For Normal errors, $\hat\tau$ is the root mean squared prediction error, while for all other error distributions the $\tau$-size of the prediction errors is used.

For a more fine-grained picture of variable selection performance, we also show the sensitivity and specificity in Figure~\ref{fig:simstudy-scenario_1-sens_spec}.

\begin{figure}[ht]
  {\centering \includegraphics[width=1\linewidth]{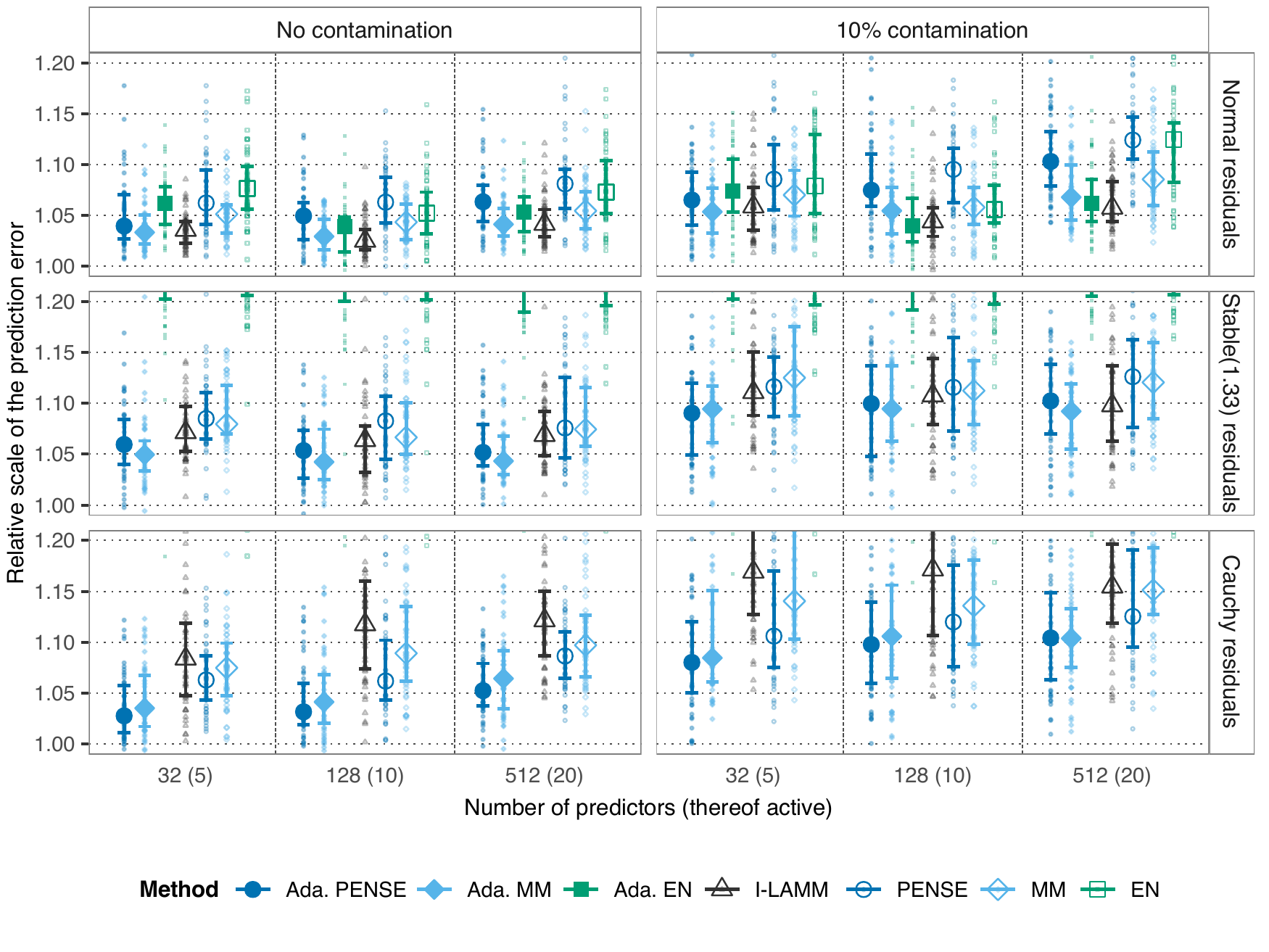}

  }
  \caption{%
  Prediction performance of robust and non-robust estimators, measured by the uncentered $\tau$-scale of the prediction errors relative to the true $\tau$-scale of the error distribution (lower is better).
  The median out of 50 replications is depicted by the points and the lines show the interquartile range.}
  \label{fig:simstudy-scenario_1-prediction_accuracy}
\end{figure}

\begin{figure}[ht]
  {\centering \includegraphics[width=1\linewidth]{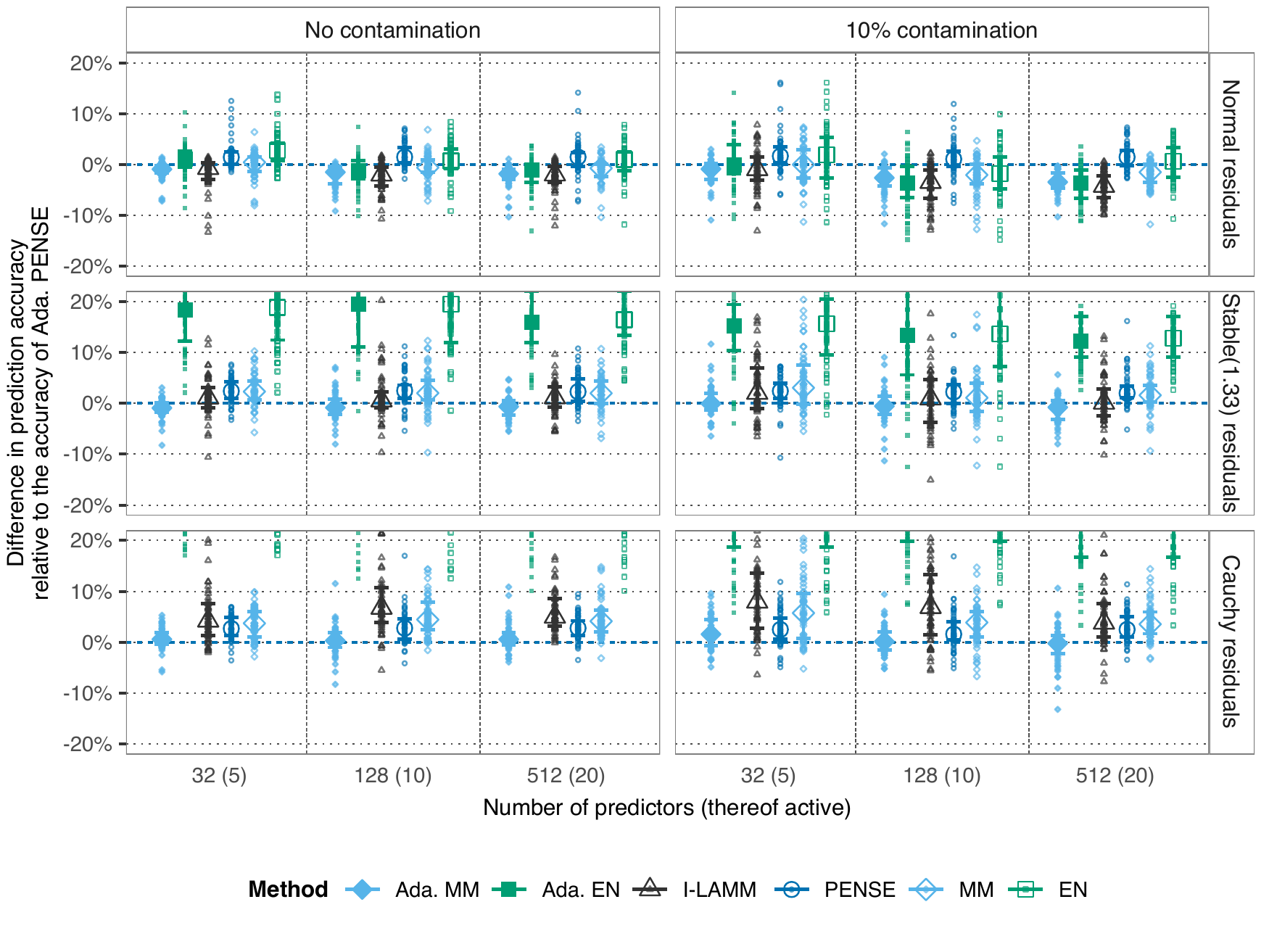}

  }
  \caption{%
  Prediction accuracy of robust and non-robust estimators in the alternative scenario, relative to the prediction accuracy of adaptive PENSE.
  The relative accuracy is defined in Equation~\eqref{eqn:simstudy-rel-pred_err}, with positive values indicating a larger scale of the prediction errors than achieved by adaptive PENSE and negative values indicating better prediction accuracy than adaptive PENSE.
The median out of 50 replications is depicted by the points and the lines show the interquartile range}
  \label{fig:simstudy-scenario_1-prediction_accuracy-relative}
\end{figure}

\begin{figure}[ht]
  {\centering \includegraphics[width=1\linewidth]{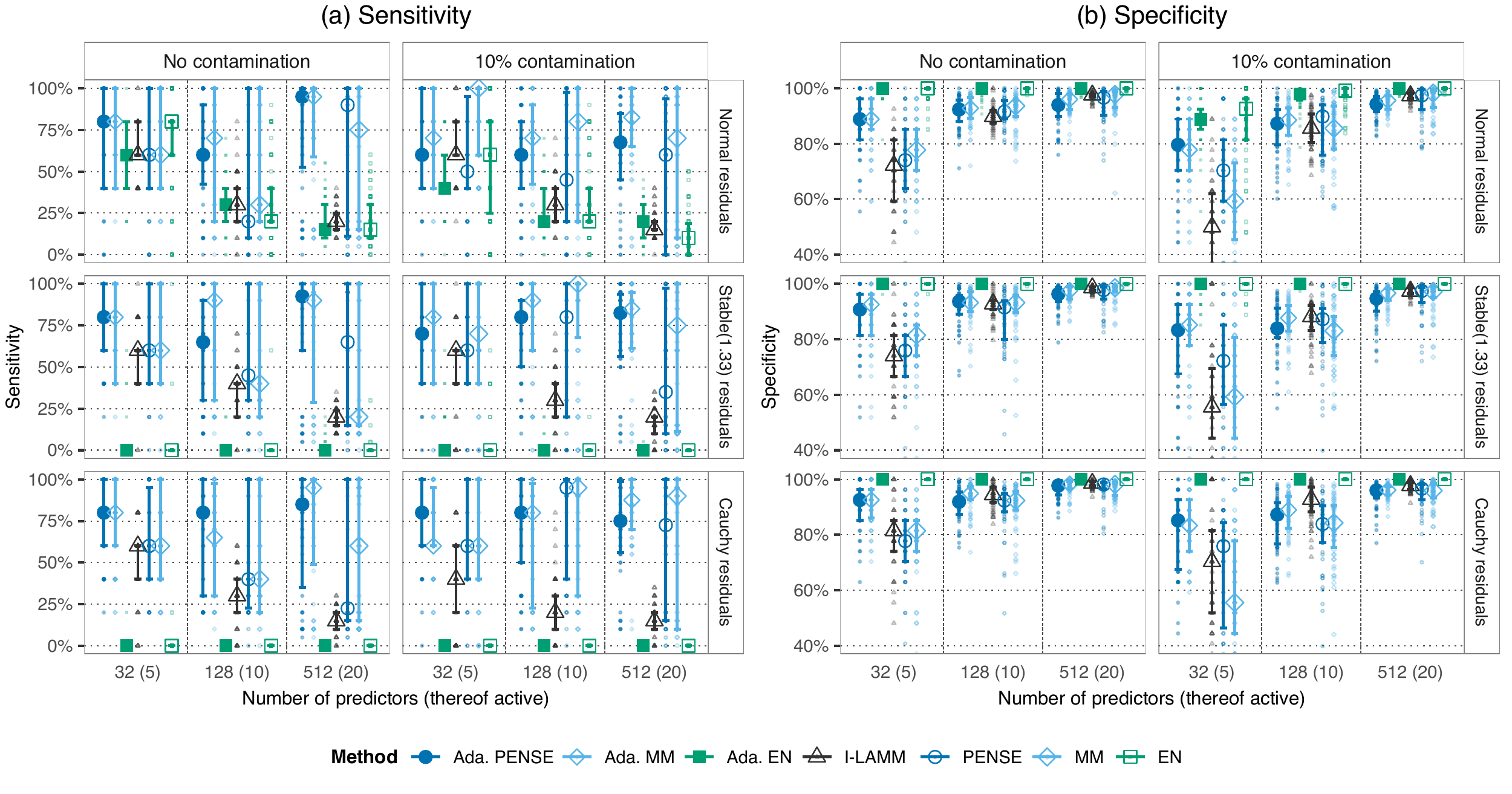}

  }
  \caption{%
  Variable selection performance of robust and non-robust estimators.
  Higher is better for both measures.
  Sensitivity (left) is the number of selected predictors which are truly active relative to the total number of truly active predictors.
  Specificity (right) is the number of predictors which are truly inactive and not selected relative to the total number of inactive predictors.
  The median out of 50 replications is depicted by the points and the lines show the interquartile range.}
  \label{fig:simstudy-scenario_1-sens_spec}
\end{figure}

\clearpage


\subsection{Alternative scenario}

Here we present additional results for a more challenging alternative scenario with $n=100$, $p=32$ to $p = 128$, and $s = \log_2(p)$.
The scenario introduces a higher proportion of contamination (25\%) and outliers are more severe ($k_\subt{lev} = 256$ and the contamination model uses coefficient value of $-3$ instead of $-1$).
In addition to the $\log_2(p)$ irrelevant predictors, bad leverage points also affect 2 truly relevant predictors.
Good leverage points have extreme values in at most $\log_2(p)$ non-relevant predictors.
The predictors follow a multivariate $t$-distribution with 4 degrees of freedom and AR1-type correlation structure of $\operatorname{Cor}(\randX_j, \randX_{j'}) = 0.5^{|j - j'|}$, $j, j' = 1, \dotsc, p$.

Figures~\ref{fig:simstudy-scenario_2-prediction_accuracy} and~\ref{fig:simstudy-scenario_2-prediction_accuracy-relative} show the prediction accuracy relative the the true error scale and relative to the prediction accuracy achieved by adaptive PENSE, respectively.
Overall variable selection performance in terms of the MCC is shown in Figure~\ref{fig:simstudy-scenario_2-mcc}, whereas Figure~\ref{fig:simstudy-scenario_2-sens_spec} gives more detailed insights into the sensitivity and specificity of variable selection performance.

Overall, the conclusions are very similar to the scenario presented in Section~\ref{sec:simulation}, showcasing that adaptive PENSE is highly robust and performance is similar to adaptive MM.
In this more challenging scenario, however we can observe that in some situations adaptive MM is affected considerable more by the contamination than adaptive PENSE, sometimes having more than 30\% higher prediction error than adaptive PENSE, as highlighted in Figure~\ref{fig:simstudy-scenario_2-prediction_accuracy-relative}.

\begin{figure}[ht]
{\centering \includegraphics[width=1\linewidth]{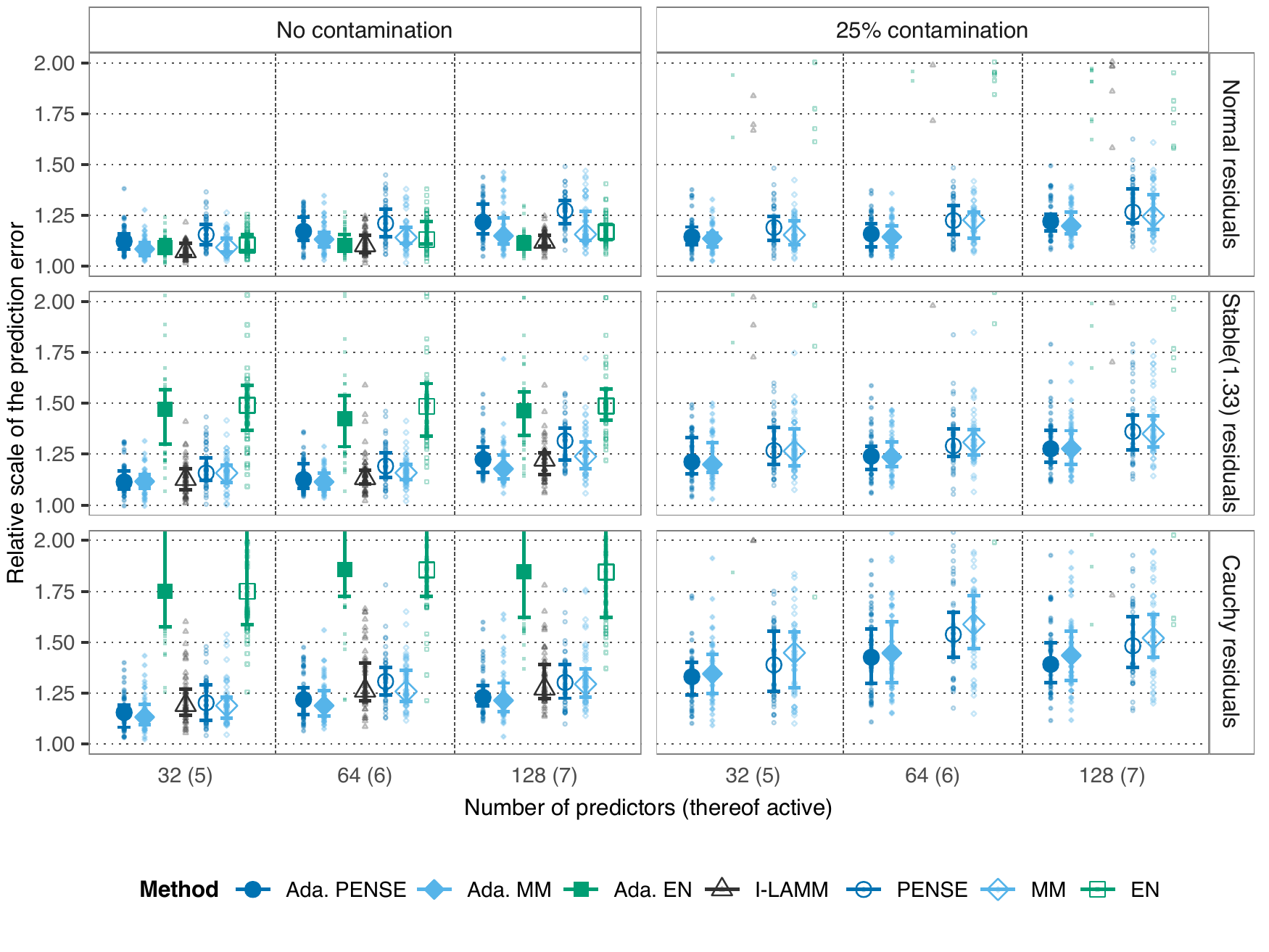}

}
\caption{%
Prediction performance of robust and non-robust estimators in the alternative scenario, measured by the uncentered $\tau$-scale of the prediction errors relative to the true $\tau$-scale of the error distribution (lower is better).
The median out of 50 replications is depicted by the points and the lines show the interquartile range.}
\label{fig:simstudy-scenario_2-prediction_accuracy}
\end{figure}

\begin{figure}[ht]
{\centering \includegraphics[width=1\linewidth]{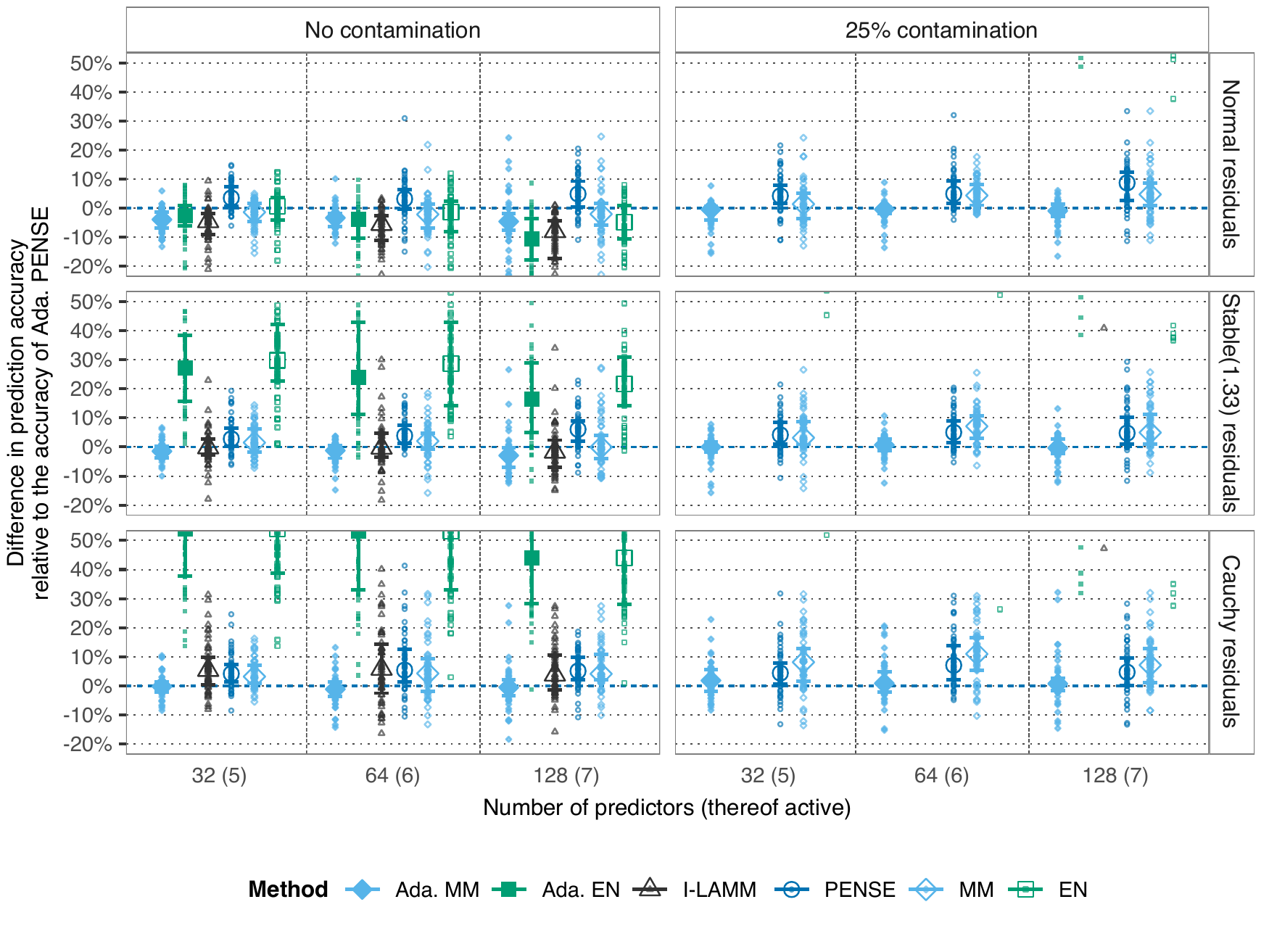}

}
\caption{%
Prediction accuracy of robust and non-robust estimators in the alternative scenario, relative to the prediction accuracy of adaptive PENSE.
The relative accuracy is defined in Equation~\eqref{eqn:simstudy-rel-pred_err}, with positive values indicating a larger scale of the prediction errors than achieved by adaptive PENSE and negative values indicating better prediction accuracy than adaptive PENSE.
The median out of 50 replications is depicted by the points and the lines show the interquartile range.}
\label{fig:simstudy-scenario_2-prediction_accuracy-relative}
\end{figure}

\begin{figure}[ht]
  {\centering \includegraphics[width=1\linewidth]{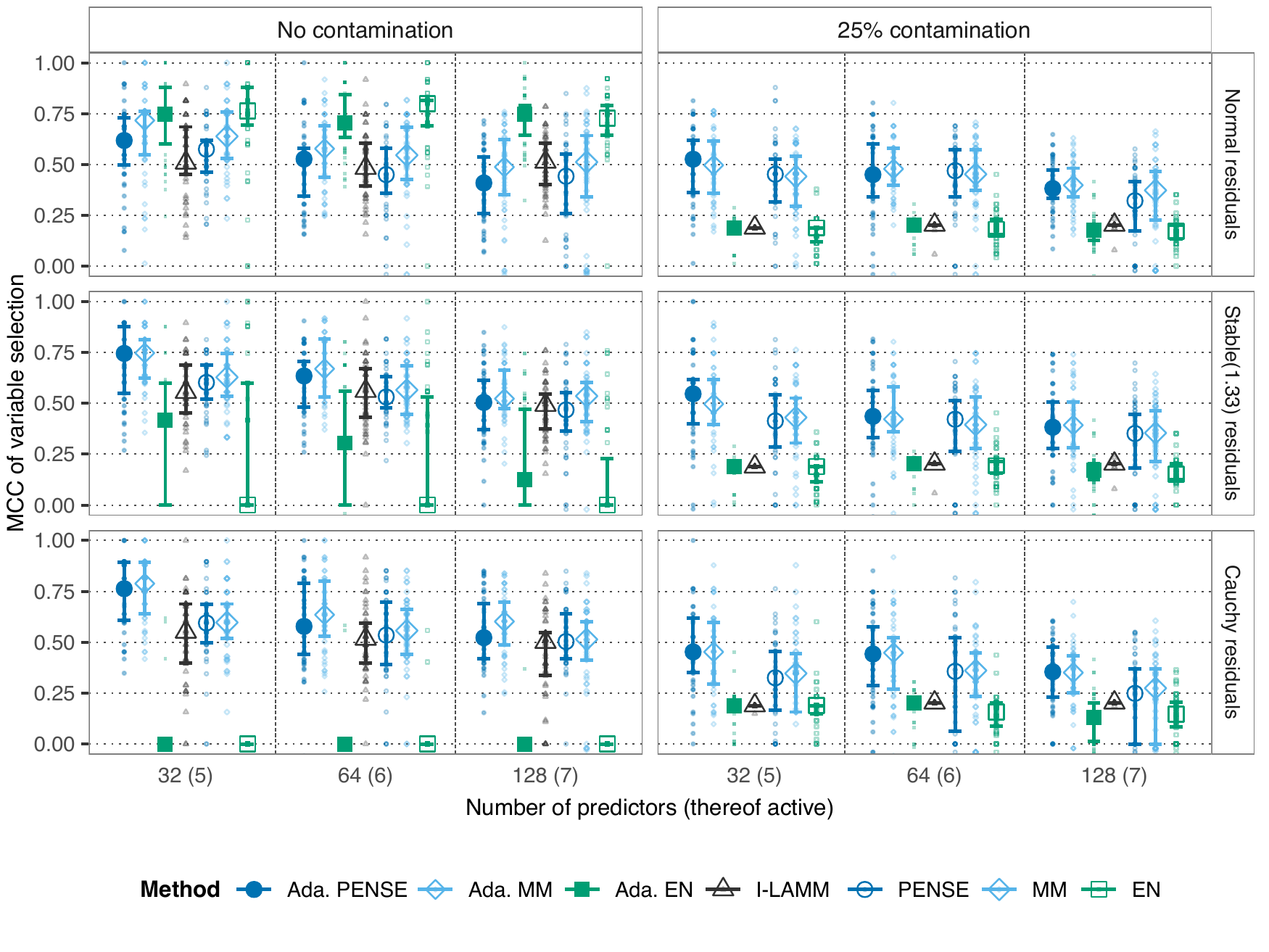}
  
  }
  \caption{%
  Overall variable selection performance of robust and non-robust estimators in the alternative scenario.
  Performance is measured by the Matthews correlation coefficient (MCC; higher is better), defined in~\eqref{eqn:simstudy-mcc}.
  The median out of 50 replications is depicted by the points and the lines show the interquartile range.}
  \label{fig:simstudy-scenario_2-mcc}
\end{figure}

\begin{figure}[ht]
  {\centering \includegraphics[width=1\linewidth]{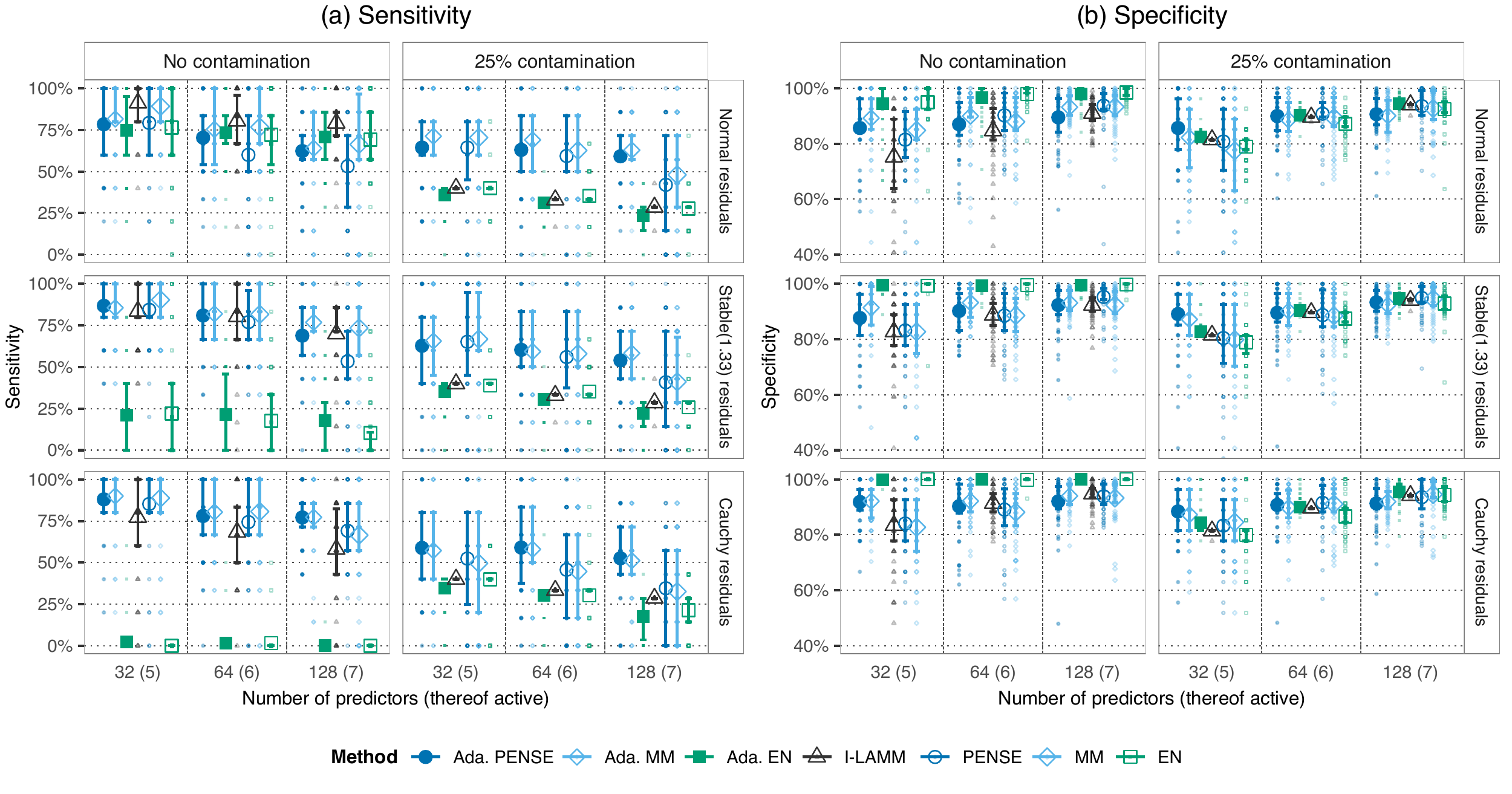}
  
  }
  \caption{%
  Detailed variable selection performance of robust and non-robust estimators in the alternative scenario.
  Sensitivity (left) is the number of selected predictors which are truly active relative to the total number of truly active predictors.
  Specificity (right) is the number of predictors which are truly inactive and not selected relative to the total number of inactive predictors.
  The median out of 50 replications is depicted by the points and the lines show the interquartile range.}
  \label{fig:simstudy-scenario_2-sens_spec}
\end{figure}

\end{appendices}

\end{document}